\newcommand{\qede}{\hspace*{\fill}$\Diamond$\medskip}
\newcommand{\mylabel}[2]{#2\def\@currentlabel{#2}\label{#1}}
\definecolor{darkgreen}{RGB}{0,150,0}
\newtheorem{fact}{\protect\factname}
\algnewcommand\And{\textbf{and}}
\providecommand{\factname}{Fact}
\begin{document}

\title{Local convergence of the Levenberg--Marquardt method under H\"{o}lder metric subregularity\thanks{F.J. Arag\'on was supported by MINECO of Spain
 and ERDF of EU, as part of the Ram\'on y Cajal program (RYC-2013-13327) and the I+D grant MTM2014-59179-C2-1-P. M. Ahookhosh, R.M.T. Fleming, and
 P.T. Vuong were supported by the U.S. Department of Energy, Offices of Advanced Scientific Computing Research and the Biological and Environmental
 Research as part of the Scientific Discovery Through Advanced Computing program, grant \#DE-SC0010429. P.T. Vuong was also supported by the Austrian
 Science Foundation (FWF), grant I 2419-N32.}}
\authorrunning{M. Ahookhosh, F.J. Arag\'on Artacho, R.M.T. Fleming, P.T. Vuong}

\institute{M. Ahookhosh \at Systems Biochemistry Group, Luxembourg Center for Systems Biomedicine,
University of Luxembourg, Campus Belval, 4362 Esch-sur-Alzette, Luxembourg.\\
Department of Electrical Engineering (ESAT-STADIUS) - KU Leuven, Kasteelpark Arenberg 10,
3001 Leuven, Belgium.\\
\email{masoud.ahookhosh@kuleuven.be}
\and
F.J. Arag\'on Artacho \at Department of Mathematics, University of Alicante, Spain. \email{francisco.aragon@ua.es}
\and
R.M.T. Fleming \at Systems Biochemistry Group, Luxembourg Center for Systems Biomedicine,
University of Luxembourg, Campus Belval, 4362 Esch-sur-Alzette, Luxembourg. \email{ronan.mt.fleming@gmail.com}
\and
P.T. Vuong \at Systems Biochemistry Group, Luxembourg Center for Systems Biomedicine,
University of Luxembourg, Campus Belval, 4362 Esch-sur-Alzette, Luxembourg.\\
Faculty of Mathematics, University of Vienna, Oskar-Morgenstern-Platz 1, 1090 Vienna, Austria.\\
\email{vuong.phan@univie.ac.at}
}

\author{Masoud Ahookhosh \and Francisco J. Arag\'on Artacho \and Ronan M.T. Fleming \and Phan T. Vuong}

\date{}

\maketitle

\begin{abstract}
We describe and analyse Levenberg--Marquardt methods for solving systems of nonlinear equations. More specifically, we propose an adaptive formula for the Levenberg--Marquardt
parameter and analyse the local convergence of the method under
H\"{o}lder metric subregularity of the function defining the equation and H\"older continuity of its gradient mapping. Further, we analyse the local convergence
of the method under the additional assumption that the \L{}ojasiewicz gradient inequality holds. We finally report
encouraging numerical results confirming the theoretical findings for the
problem of computing moiety conserved steady states in biochemical reaction
networks. This problem can be cast as finding a solution of a system of nonlinear
equations, where the associated mapping satisfies the \L{}ojasiewicz gradient inequality assumption.
\keywords{Nonlinear equation \and Levenberg--Marquardt method \and Local convergence rate \and H\"{o}lder metric subregularity \and \L ojasiewicz inequality}
\subclass{65K05 \and 65K10 \and 90C26 \and 92C42}
\end{abstract}

\section{Introduction}\label{sec.1}

For a given continuously differentiable mapping $h:\mathbb{R}^{m}\rightarrow\mathbb{R}^{n}$,
we consider the problem of finding a solution of the system of nonlinear
equations
\begin{equation}
h(x)=0,\quad x\in\mathbb{R}^{m}.\label{eq:nonequa}
\end{equation}
We denote by $\Omega$ the set of solutions of this problem, which
is assumed to be nonempty. Systems of nonlinear equations of type~\eqref{eq:nonequa}
 frequently appear in the mathematical modelling of many real-world
applications in the fields of solid-state physics~\cite{Eilenberger},
quantum field theory, optics, plasma physics~\cite{hasegawa_plasma_1975},
fluid mechanics~\cite{Whitham}, chemical kinetics~\cite{artacho_globally_2014,artacho_accelerating_2015},
and applied mathematics including the discretisation of ordinary and
partial differential equations~\cite{ortega_iterative_2000}.

A classical approach for finding a solution of~\eqref{eq:nonequa}
is to search for a minimiser of the nonlinear least-squares problem
\begin{equation}
\min_{x\in\mathbb{R}^{m}}\psi(x),\quad\text{with }\psi:\mathbb{R}^{m}\rightarrow\mathbb{\mathbb{R}}\text{ given by }\psi(x):=\frac{1}{2}\|h(x)\|^{2},\label{eq:meritfunc}
\end{equation}
where $\|\cdot\|$ denotes the Euclidean norm. This is a well-studied
topic and there are many iterative schemes with fast local convergence
rates (e.g., superlinear or quadratic) such as Newton, quasi-Newton,
Gauss--Newton, adaptive regularised methods, and the Levenberg--Marquardt
method. When $m=n$, to guarantee fast local convergence, these methods require
an initial point $x_{0}$ to be sufficiently close to a solution $x^{*}$,
and the \emph{matrix gradient} of $h$ at $x^{*}$ (i.e., the transpose of the Jacobian matrix), denoted by $\nabla h(x^{*})$, to be
\emph{nonsingular} (i.e., full rank), cf.~\cite{BelCGMT,fischer2015globally,nocedal_numerical_2006-1,ortega_iterative_2000,Yuan2015}.

The Levenberg--Marquardt method is a standard technique
used to solve the nonlinear system~\eqref{eq:nonequa}, which is a combination of the
gradient descent and the Gauss--Newton methods. More precisely,
in each step, for a positive parameter $\mu_{k}$, the convex subproblem
$$\min_{d\in\mathbb{R}^{m}}\phi_{k}(d),$$
with $\phi_{k}:\mathbb{R}^{m}\rightarrow\mathbb{\mathbb{R}}$ given by
\begin{equation}
\phi_{k}(d):=\left\|\nabla h(x_{k})^{T}d+h(x_{k})\right\|^{2}+\mu_{k}\|d\|^{2},\label{eq:def_phi_k}
\end{equation}
is solved to compute a direction $d_{k}$, which is the
unique solution to the system of linear equations
\begin{equation}
\left(\nabla h(x_{k})\nabla h(x_{k})^{T}+\mu_{k}I\right)d_{k}=-\nabla h(x_{k})h(x_{k}),\label{eq:LM_direction}
\end{equation}
where $I\in\mathbb{\mathbb{R}}^{m\times m}$ denotes the identity
matrix. By choosing a suitable parameter $\mu_{k}$, the Levenberg--Marquardt
method acts like the gradient descent method whenever the current
iteration is far from a solution $x^{*}$, and behaves similar to the Gauss--Newton
method if the current iteration is close to $x^{*}$. The parameter
$\mu_{k}$ helps to overcome problematic cases where $\nabla h(x_{k})\nabla h(x_{k})^{T}$
is singular, or nearly singular, and thus ensures the existence of a unique solution to~\eqref{eq:LM_direction}, or avoids very large steps, respectively. For $m=n$, the Levenberg--Marquardt method is known to be quadratically
convergent to a solution of~\eqref{eq:nonequa} if $\nabla h(x^{*})$ is nonsingular. In fact,
the nonsingularity assumption implies that the solution to the minimisation problem~\eqref{eq:meritfunc}
must be locally unique, see~\cite{bellavia_strong_2015,kanzow_levenbergmarquardt_2004,alefeld_rate_2001}.
However, assuming local uniqueness of the solution might be restrictive
for many applications.

The notion of (\emph{local}) \emph{error bound} usually plays a key
role in establishing the rate of convergence of the sequence of iterations
generated by a given algorithm. This condition guarantees that the
distance from the current iteration $x_{k}$ to the solution set $\Omega$, denoted by
$\mathrm{dist}(x_{k},\Omega)=\inf_{y\in \Omega}\|x_k-y\|$, is less than the value of a residual
function $R:\mathbb{R}^m\to\mathbb{R_+}$ at that point ($R(x_{k})$). 
The
earliest publication using error bounds for solving a linear inequality
system is due to Hoffman~\cite{Hoffman_1952}, which was followed
by many other authors, especially in optimisation. For more information
about error bounds, we recommend the nice survey~\cite{Pang_1997}.

For the particular case of nonlinear systems of equations, Yamashita
and Fukushima~\cite{alefeld_rate_2001} proved the local quadratic
convergence of the Levenberg--Mar\-quardt method with $\mu_{k}=\|h(x_{k})\|^{2}$
assuming a local error bound condition. More precisely, they assumed \emph{metric subregularity} of $h$ around $(x^{*},0)$, which entails
the existence of some constants $\beta>0$  and $r>0$ such that
\begin{equation}
\beta\,\mathrm{dist}(x,\Omega)\leq\|h(x)\|,\quad\forall x\in\mathbb{B}(x^{*},r),\label{eq:leb}
\end{equation}
where $\mathbb{B}(x^{*},r)$ denotes the closed ball centered at $x^{*}$
with radius $r>0$. In this case, the residual function is given by
$R(x):=\frac{1}{\beta}\|h(x)\|$. In those situations where the value of $\beta$ is known,
the condition $\|h(x)\|\leq\varepsilon$ can be
used as a stopping criterion for an iterative scheme, as it entails
that the iterations must be close to a solution of~\eqref{eq:nonequa}.

Let us emphasise that, for $m=n$, the nonsingularity of $\nabla h(x^{*})$ implies that
$x^{*}$ is locally unique and that~\eqref{eq:leb} holds.
Indeed, by the Lyusternik--Graves theorem (see, e.g.,~\cite[Theorem~5D.5]{DontchevRockafellar2014}, \cite[Theorem~1.57]{Mordukhovich2006}, or~\cite[Proposition~1.2]{CDK18}), the nonsingularity of
$\nabla h(x^{*})$ is equivalent to the strong metric regularity of
$h$ at $(x^*,0)$, which implies strong metric subregularity of
$h$ at $(x^*,0)$. However, the latter does not imply the nonsingularity assumption
and allows the solutions to be locally nonunique. This means that metric subregularity is a weaker assumption than the nonsingularity. In fact, for $m$ possibly different than $n$, strong metric subregularity of $h$ at $(x^*,0)$ is equivalent to surjectivity of $\nabla h(x^*)$ (see, e.g., \cite[Proposition~1.2 and Theorem~2.6]{CDK18}). The successful use
of the local error bound has motivated many researchers
to investigate, under assumption~\eqref{eq:leb}, the local convergence of trust-region
methods~\cite{Fan2006}, adaptive regularised methods~\cite{bellavia_strong_2015},
and the Levenberg--Marquardt method~\cite{behling_effect_2013,fan_modified_2012,fan_quadratic_2005},
among other iterative schemes.

The main motivation for this paper comes from a nonlinear system of
equations, the solution of which corresponds to a steady state of a given biochemical reaction
network, which plays a crucial role in the modeling of biochemical
reaction systems. These problems are usually ill-conditioned and require the application of the Levenberg--Marquardt method. As we numerically show in Section~\ref{sec.numapp}, $\nabla h$ is usually rank deficient at the solutions of~\eqref{eq:nonequa}. During our study of the properties of this problem,
we were not able to show that the metric subregularity condition~\eqref{eq:leb}
is satisfied. However, taking standard biochemical assumptions \cite{artacho_accelerating_2015}, we can  show that the corresponding merit function is real analytic and thus satisfies the \L{}ojasiewicz gradient inequality and is H\"{o}lder metrically subregular around the solutions.

The local convergence of a Levenberg--Marquardt method under H\"older metric subregularity has been recently studied in~\cite{Guo2015,ZhuLin16}. Nonetheless, the standard rules for the regularisation parameter have a very poor performance when they are applied for solving the nonlinear equation arising from the biochemical reaction network systems, as we show in a numerical experiment in Section~\ref{sec.numapp}. This motivated our quest to further investigate an adaptive Levenberg--Marquart method under the assumption that the underlying mapping is H\"older metrically subregular.

From the definition of the Levenberg--Marquardt direction
in~\eqref{eq:LM_direction}, we observe that a key factor in the
performance of the Levenberg--Marquardt method is the choice
of the parameter $\mu_{k}$, cf.~\cite{izmailov2014newton,Kelley1}. Several parameters
have been proposed to improve the efficiency of the method. For example,
Yamashita and Fukushima~\cite{alefeld_rate_2001} took $\mu_{k}=\|h(x_{k})\|^{2}$,
Fischer~\cite{fischer2002local} used $\mu_{k}=\|\nabla h(x_{k})h(x_{k})\|$,
while Fan and Yuan~\cite{fan_quadratic_2005} proposed $\mu_{k}=\|h(x_{k})\|^{\eta}$
with $\eta\in[1,2]$. Ma and Jiang~\cite{MJ07} proposed a convex combination of these two types of parameters, namely, $\mu_k= \theta\|h(x_{k})\|+(1-\theta)\|\nabla h(x_{k})h(x_{k})\|$ for some constant $\theta\in{[0,1]}$. In a subsequent work, Fan and Pan~\cite{fan_pan} proposed the more general choice $\mu_k=\xi_k\rho(x_k)$, where $\xi_k$ is updated by a trust-region technique, $\rho(x_k)=\min\left\{\widetilde{\rho}(x_k),1\right\}$ and $\widetilde{\rho}:\mathbb{R}^m\to\mathbb{R}_+$ is a positive function such that $\widetilde{\rho}(x_k)=O\left(\|h(x_k)\|^\eta\right)$, with $\eta\in{]0,2]}$. Inspired by these works, and assuming that the function $h$ is H\"{o}lder
metrically subregular of order $\delta\in\ ]0,1]$ and its gradient $\nabla h$ is H\"older continuous of order $\upsilon\in{]0,1]}$, in this paper we consider an adaptive
parameter of the form
\begin{equation}
\mu_{k}:=\xi_{k}\|h(x_{k})\|^{\eta}+\omega_{k}\|\nabla h(x_{k})h(x_{k})\|^{\eta},\label{eq:muk}
\end{equation}
where $\eta>0$, $\xi_{k}\in[\xi_{\min},\xi_{\max}]$ and $\omega_k\in[\omega_{\min},\omega_{\max}]$, for some  constants  $0\leq\xi_{\min}\leq\xi_{\max}$ and $0\leq\omega_{\min}\leq\omega_{\max}$ such that $\xi_{\min}+\omega_{\min}>0$.

In our first main result, Theorem~\ref{thm:dist_con_sup}, we provide an interval depending on $\delta$ and $\upsilon$ where the parameter~$\eta$ must be chosen to guarantee the  superlinear convergence of the sequence generated by the Levenberg--Marquardt method with the adaptive parameter~\eqref{eq:muk}. In our second main result, Theorem~\ref{thm:convergence_of_dist(x,Omega)_general}, under the additional assumption that the merit function $\psi$ defined in~\eqref{eq:meritfunc} satisfies the \L{}ojasiewicz gradient inequality with exponent $\theta\in{]0,1[}$, we prove local convergence for every parameter $\eta$ smaller than a constant depending on both $\upsilon$ and $\theta$. As a consequence, we can ensure local convergence of the Lebenverg--Marquardt algorithm to a solution of~\eqref{eq:nonequa} for all the above-mentioned biochemical networks as long as the parameter $\eta$ is chosen sufficiently small. To the best of our knowledge, this is the first such algorithm able to reliably handle these nonlinear systems arising in the study of biological networks. We successfully apply the proposed algorithm to nonlinear systems derived from
many real biological networks, which are representative of a diverse set of biological species.

The remainder of this paper is organised as follows. In the next section,
we particularise the H\"{o}lder metric subregularity for nonlinear equations and recall the \L{}ojasiewicz inequalities.
We investigate the local convergence of the Levenberg--Marquardt method under these conditions in Section~\ref{sec:Local}. In Section~\ref{sec.numapp}, we report encouraging numerical
results where nonlinear systems, arising from biochemical reaction networks, were quickly solved.
Finally, we deliver some conclusions in Section~\ref{sec:conclusion}.

\section{H\"{o}lder metric subregularity and \L ojasiewicz inequalities}\label{sec.2}

 Let us begin this section by recalling the notion of
H\"{o}lder metric subregularity, which can be also defined in a similar
manner for set-valued mappings (see, e.g.,~\cite{kruger_error_2015,CDK18}).
\begin{definition}
\label{def:hmsr}A mapping $h:\mathbb{R}^{m}\rightarrow\mathbb{R}^{n}$
is said to be \emph{H\"{o}lder metrically subregular} of order $\delta>0$
around $(\overline{x},\overline{y})$ with $\overline{y}=h(\overline{x})$ if there exist
some constants $r>0$ and $\beta>0$ such that
\[
\beta\,\mathrm{dist}\!\left(x,h^{-1}(\overline{y})\right)\leq\|\overline{y}-h(x)\|^{\delta},\quad\forall x\in\mathbb{B}(\overline{x},r).
\]
\end{definition}
For any solution $x^{*}\in\Omega$ of the system of nonlinear equations~\eqref{eq:nonequa},
the H\"{o}lder metric subregularity of $h$ around $(x^{*},0)$ reduces to
\begin{equation}
\beta\,\mathrm{dist}(x,\Omega)\leq\|h(x)\|^{\delta},\quad\forall x\in\mathbb{B}(x^{*},r).\label{eq:errorBound}
\end{equation}
Therefore, this property provides an upper bound for the distance from
any point sufficiently close to the solution $x^{*}$ to the nearest
zero of the function.

H\"{o}lder metric subregularity around $(x^{*},0)$ is also called \emph{H\"{o}lderian
local error bound}~\cite{ngai_global_2015,vui_global_2013}. It is
known that H\"{o}lder metric subregularity is closely related to the \L ojasiewicz
inequalities, which are defined as follows.
\begin{definition}
Let $\psi:U\to\mathbb{R}$ be a  function defined on an
open set $U\subseteq\mathbb{R}^{m}$, and assume that the set of zeros
$\Omega:=\left\{ x\in U,\,\psi(x)=0\right\} $ is nonempty.
\begin{itemize}
\item [(i)] The function $\psi$ is said to satisfy the \emph{\L{}ojasiewicz inequality} if for every compact subset $C\subset U$,
there exist positive constants $\varrho$ and $\gamma$ such that
\begin{equation}
\mathrm{dist}(x,\Omega)^{\gamma}\leq \varrho|\psi(x)|,\quad\forall x\in C.\label{eq:LojaIne}
\end{equation}
\item [(ii)] The function $\psi$ is said to satisfy the \emph{\L{}ojasiewicz gradient inequality} if for any critical point~$x^*$, there exist constants  $\kappa>0,\varepsilon>0$
and $\theta \in{]0,1[}$  such that
\begin{equation}
|\psi(x)-\psi(x^*)|^{\theta}\leq \kappa \|\nabla \psi(x)\|,\quad\forall x\in\mathbb{B}(x^*,\varepsilon).\label{eq:Lojasiewicz_Gradient_Inequality}
\end{equation}
\end{itemize}
 \end{definition}
Stanis\l{}aw \L{}ojasiewicz proved that every real analytic function satisfies these properties~\cite{lojasiewicz1965ensembles}. Recall
that a function $\psi:\mathbb{R}^{m}\to\mathbb{R}$ is said to be
\emph{real analytic} if it can be represented by a convergent power
series. Fortunately, real analytic functions frequently appear in
real world application problems. A relevant example in biochemistry
is presented in Section~\ref{sec.numapp}.
\begin{fact}
[{\cite[pp. 62 and 67]{lojasiewicz1965ensembles}}]\label{prop:Lojasiewicz Inequality-1}Every
real analytic function $\psi:\mathbb{R}^{m}\to\mathbb{R}$  satisfies both
the \L{}ojasiewicz inequality and the \L{}ojasiewicz gradient inequality.
\end{fact}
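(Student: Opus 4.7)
The plan is to reduce both inequalities to the machinery of semi-analytic sets, the Curve Selection Lemma, and Puiseux series expansions, which together form the backbone of \L{}ojasiewicz's original approach. The starting observation is that the zero set $\Omega = \psi^{-1}(0)$ of a real analytic function is a real analytic (hence semi-analytic) subset of $U$, and that the distance function $x \mapsto \mathrm{dist}(x,\Omega)$ is a continuous semi-analytic function. This places us inside a category (semi-analytic, and more generally subanalytic sets/functions) closed under the operations we need, and in which the Curve Selection Lemma is available: for every semi-analytic set $S$ whose closure contains a point $\bar x$, there exists a real analytic curve $\gamma:[0,\varepsilon[\to\mathbb{R}^m$ with $\gamma(0)=\bar x$ and $\gamma(]0,\varepsilon[)\subset S$.

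For part (i), I would fix a compact set $C\subset U$ and argue by contradiction: suppose that for every $\varrho,\gamma>0$, the inequality $\mathrm{dist}(x,\Omega)^{\gamma}\leq\varrho|\psi(x)|$ fails somewhere on $C$. Then the semi-analytic set
\[
S_n := \left\{ x\in C \ : \ \mathrm{dist}(x,\Omega)^{n} \geq n\,|\psi(x)|,\ \psi(x)\neq 0\right\}
\]
is nonempty for every $n\in\mathbb{N}$, and by compactness its closure meets $\Omega$ at some $\bar x$. Applying the Curve Selection Lemma to a suitable diagonal semi-analytic set, one extracts a real analytic curve $\gamma(t)$ into $C\setminus\Omega$ with $\gamma(0)=\bar x\in\Omega$ along which $|\psi(\gamma(t))|$ decays faster than any power of $\mathrm{dist}(\gamma(t),\Omega)$. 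Expanding $t\mapsto\psi(\gamma(t))$ as a Puiseux series produces a finite positive rational order of vanishing, while $\mathrm{dist}(\gamma(t),\Omega)\leq\|\gamma(t)-\gamma(0)\|$ also vanishes at a finite rational order in $t$. Comparing these two orders yields the sought contradiction and produces the exponent $\gamma$.

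For part (ii), I would work locally near a critical point $x^*$ with value $\psi(x^*)=0$ (w.l.o.g.). Suppose the gradient inequality fails for every $\kappa$ and every $\theta\in{]0,1[}$. Again by the Curve Selection Lemma, extract a real analytic curve $\gamma:[0,\varepsilon[\to\mathbb{R}^m$ with $\gamma(0)=x^*$, $\psi(\gamma(t))\neq 0$ for $t>0$, along which $\|\nabla\psi(\gamma(t))\|$ is controlled from above by $|\psi(\gamma(t))|^{\theta}$ for every $\theta<1$ (after suitable reparametrisation by arclength so that $\|\gamma'(t)\|=1$). Then the chain-rule identity $\tfrac{d}{dt}\psi(\gamma(t)) = \langle\nabla\psi(\gamma(t)),\gamma'(t)\rangle$ combined with Cauchy--Schwarz gives a differential inequality on $f(t):=\psi(\gamma(t))$ of the form $|f'(t)|\leq \|\nabla\psi(\gamma(t))\|$, whose integration against the Puiseux expansions of $f(t)$ and $\|\nabla\psi(\gamma(t))\|$ (both having finite rational orders of vanishing at $t=0$) forces an incompatible relation between these orders. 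Choosing $\theta\in{]0,1[}$ strictly larger than the ratio of these two rational numbers finishes the argument and yields the exponent.

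The main obstacle in both parts, and the reason this result is genuinely nontrivial rather than a short calculation, is the Curve Selection Lemma itself together with the structural theory of semi-analytic sets (local finiteness of stratifications, preservation under projections in the subanalytic enlargement, existence of Puiseux expansions of analytic functions along analytic arcs). Once these tools are in place, the contradiction arguments above are essentially one-dimensional comparisons of rational orders of vanishing; without them, there is no obvious way to control $\mathrm{dist}(x,\Omega)$ or $\|\nabla\psi(x)\|$ in terms of $|\psi(x)|$ in a quantitative way, since the relevant exponents $\gamma$ and $\theta$ depend on hidden algebraic features of $\psi$ that only surface through the Puiseux/stratification machinery.
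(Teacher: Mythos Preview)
The paper does not prove this statement: it is recorded as a \emph{Fact} with a bare citation to \L{}ojasiewicz's 1965 notes, and no argument is given or sketched. So there is nothing in the paper to compare your proposal against.

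That said, your outline is faithful to the classical route taken in the cited reference: semi-analytic sets, the Curve Selection Lemma, and Puiseux expansions along analytic arcs to reduce both inequalities to one-variable order-of-vanishing comparisons. A couple of points could be tightened. In part~(i), the family $S_n$ as written is not obviously semi-analytic because $\mathrm{dist}(\cdot,\Omega)$ is only subanalytic in general; the standard fix is either to work in the subanalytic category from the outset or to replace the distance by a locally analytic substitute after stratifying $\Omega$. In part~(ii), the contradiction hypothesis should be stated with more care about quantifiers: one fixes $x^*$ and assumes that for every $\theta\in{]0,1[}$ and every $\kappa>0$ the inequality fails on every neighbourhood, and then the Curve Selection Lemma yields a single arc on which $\|\nabla\psi(\gamma(t))\|/|\psi(\gamma(t))|^{\theta}\to 0$ for the specific $\theta$ one eventually chooses, not simultaneously for all $\theta$. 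With those adjustments your sketch matches the argument in \cite{lojasiewicz1965ensembles}.
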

Clearly, if the merit function $\psi(\cdot)=\frac{1}{2}\|h(\cdot)\|^{2}$
satisfies the \L ojasiewicz inequality~\eqref{eq:LojaIne}, then
the mapping $h$ satisfies~\eqref{eq:errorBound} with $\beta:=(2/\varrho)^{1/\gamma}$
and $\delta:=2/\gamma$; i.e., $h$ is H\"{o}lder metrically subregular
around $(x^{*},0)$ of order $2/\gamma$. In addition, if $\psi(\cdot)$ satisfies the \L{}ojasiewicz gradient inequality~\eqref{eq:Lojasiewicz_Gradient_Inequality}, then for any $\overline{x} \in \Omega $ and $x\in\mathbb{B}(\overline{x},\varepsilon)$, it holds
\begin{eqnarray*}
\frac{1}{\varrho}\mathrm{dist}(x,\Omega)^\gamma
\leq  |\psi(x)|
\leq \kappa^{1/\theta}\|\nabla \psi(x)\|^{1/\theta}
= \kappa^{1/\theta}\|\nabla h(x)h(x)\|^{1/\theta}.
\end{eqnarray*}


The \L{}ojasiewicz gradient inequality has recently gained much attention because of its role for proving the convergence of various numerical methods (e.g., \cite{BDL07,AB09,ABS13,artacho_accelerating_2015}). The connection between this property and metric regularity of the set-valued mapping $\Psi(x):=[\psi(x),\infty[$ on an adequate set was revealed in~\cite{BDLM10}, where it was also applied to deduce strong convergence of the proximal algorithm.

In some cases, for example when $\psi$ is a polynomial with an isolated
zero at the origin, an order of the H\"{o}lder metric subregularity is
known~\cite{gwozdziewicz_lojasiewicz_1999,kurdyka_separation_2014,li_holder_2012}.
\begin{fact}
[{\cite[Theorem~1.5]{gwozdziewicz_lojasiewicz_1999}}]\label{thm:Loj_exponent}Let
$\psi:\mathbb{R}^m\to\mathbb{R}$ be a polynomial function with an
isolated zero at the origin. Then $\psi$ is H\"{o}lder metrically subregular
around $(0,0)$ of order $\left((\mathrm{deg}\, \psi-1)^m+1\right)^{-1}$, where $\mathrm{deg}\,\psi$
denotes the degree of the polynomial function $\psi$.
\end{fact}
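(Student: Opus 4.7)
The plan is to reduce the statement to a quantitative local \L{}ojasiewicz-type inequality at the isolated zero and then bound the exponent by algebraic-geometric tools. Since $0$ is an isolated zero of $\psi$, there exists $r_{0}>0$ such that $\psi^{-1}(0)\cap\mathbb{B}(0,r_{0})=\{0\}$; on this ball $\mathrm{dist}(x,\psi^{-1}(0))$ coincides with $\|x\|$. Setting $N:=(\mathrm{deg}\,\psi-1)^{m}+1$, H\"{o}lder metric subregularity of order $1/N$ around $(0,0)$ is equivalent to the existence of constants $c>0$ and $r\in\ ]0,r_{0}]$ with
$$|\psi(x)|\geq c\,\|x\|^{N},\quad\forall x\in\mathbb{B}(0,r),$$
so it suffices to establish this local lower bound.

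The first step is a reduction to analytic arcs. Define the local \L{}ojasiewicz exponent at the origin by
$$\alpha^{*}:=\inf\left\{\alpha>0 \,:\, \exists\,c,r>0,\ |\psi(x)|\geq c\|x\|^{\alpha}\ \text{for}\ \|x\|\leq r\right\}.$$
By the curve selection lemma from semialgebraic geometry, if $\alpha^{*}$ were not attained by finite constants then there would exist a nonconstant real-analytic arc $\gamma:[0,\varepsilon)\to\mathbb{R}^{m}$ with $\gamma(0)=0$ along which $|\psi(\gamma(t))|/\|\gamma(t)\|^{\alpha^{*}}\to 0$. By Puiseux's theorem, $\|\gamma(t)\|\sim c_{1}t^{b}$ and $|\psi(\gamma(t))|\sim c_{2}t^{a}$ for some positive rationals $a,b$, whence $\alpha^{*}=a/b$. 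The task is therefore to bound $a/b$ uniformly in terms of $\mathrm{deg}\,\psi$ and $m$ over all admissible arcs through the origin.

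The heart of the argument is the explicit estimate $a/b\leq N$. The plan is to complexify and invoke Bezout-type intersection bounds: extending $\gamma$ to a complex branch and intersecting the hypersurface $\{\psi=0\}$ with a suitable family of polynomial equations of degree at most $\mathrm{deg}\,\psi-1$ (for instance derived from the partial derivatives $\partial\psi/\partial x_{i}$, which must vanish to high order along $\gamma$), one controls the contact exponent between $\gamma$ and $\{\psi=0\}$ by products of intersection multiplicities bounded by $(\mathrm{deg}\,\psi-1)^{m-1}$. Combining this with the observation that an extremal arc can be taken to be algebraic of degree at most $(\mathrm{deg}\,\psi-1)^{m-1}$ should yield the sharp bound $a/b\leq(\mathrm{deg}\,\psi-1)^{m}+1$.

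The main obstacle is precisely this last step: extracting the additive $+1$ and the exact exponent $m$, rather than $m-1$ or $m$ without the correction, out of the intersection-theoretic estimate. A naive application of Bezout only yields $\mathrm{deg}\,\psi\cdot(\mathrm{deg}\,\psi-1)^{m-1}$ or worse, and sharpening this to $(\mathrm{deg}\,\psi-1)^{m}+1$ appears to require an inductive reduction on the ambient dimension together with a careful analysis of the branch of $\gamma$ that is tangent to $\{\psi=0\}$ to higher order than the generic count predicts. This delicate combinatorial bookkeeping on the orders of contact is precisely the content of the cited result and is where I would expect to spend most of the effort.
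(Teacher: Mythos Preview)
The paper does not prove this statement at all: it is recorded as a \emph{Fact} with a citation to \cite[Theorem~1.5]{gwozdziewicz_lojasiewicz_1999} and no argument is given. So there is nothing to compare your proposal against in the paper itself; you are attempting to reconstruct the proof of the cited external result.

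As for the content of your sketch: the reduction to bounding the local \L{}ojasiewicz exponent along analytic arcs via curve selection and Puiseux expansions is the standard opening, and you correctly identify that the real work lies in obtaining the sharp bound $(\mathrm{deg}\,\psi-1)^m+1$ rather than a cruder B\'ezout-type product. However, your proposal does not actually carry this out --- you explicitly flag the crucial step as ``where I would expect to spend most of the effort'' and describe it only at the level of intention (complexify, intersect with derivative hypersurfaces, do inductive bookkeeping). That is an acknowledged gap, not a proof. Gwo\'zdziewicz's argument is genuinely delicate and does not follow from a routine B\'ezout count; if you want to supply a self-contained proof you would need to work through the induction on dimension and the contact-order analysis in full, which is a paper-length undertaking. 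For the purposes of this manuscript, the appropriate move is simply to cite the result, as the authors do.
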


The next example shows that the Powell singular function, which is a classical test function for nonlinear systems of equations, is not metrically subregular around its unique solution but is H\"older metrically subregular there.  In addition, it demonstrates that the order given by Fact~\ref{thm:Loj_exponent} is, in general, far from being tight.
\begin{example}\label{ex:Powell}
The Powell singular function~\cite{more_1981}, which is
the function $h:\mathbb{R}^{4}\to\mathbb{R}^{4}$ given by
\begin{equation*}
h(x_{1},x_{2},x_{3},x_{4}):=\left(x_{1}+10x_{2},\sqrt{5}(x_{3}-x_{4}),(x_{2}-2x_{3})^{2},\sqrt{10}(x_{1}-x_{4})^{2}\right),\label{eq:powell}
\end{equation*}
is (strongly) H\"{o}lder metrically subregular around $(0_4,0)$ but does not satisfy
the metric subregularity condition~\eqref{eq:leb}. We have $\Omega=\left\{ 0_4\right\} $
and $\nabla h(0_4)$ is singular; thus, $h$ is not metrically regular
around $(0_4,0)$. Further, to prove that~\eqref{eq:leb} does
not hold, consider the sequence $\left\{ x_{k}\right\} $ defined
by
$
x_{k}=\left(0,0,\frac{1}{k},\frac{1}{k}\right)
$.
We see that $\left\{ x_{k}\right\} \to0_4$ and
\[
\mathrm{dist}(x_{k},\Omega)=\|x_{k}\|=\frac{\sqrt{2}}{k}=\mathcal{O}(k^{-1}).
\]
Since $\|h(x_{k})\|=\frac{\sqrt{26}}{k^{2}}=\mathcal{O}(k^{-2}),$
we conclude that~\eqref{eq:leb} does not hold.

Consider the polynomial function $\psi(x):=\frac{1}{2}\|h(x)\|^{2}$
of degree $4$, which satisfies $\psi^{-1}(0)=0_4$. It follows from
Fact~\ref{thm:Loj_exponent} that there exist some constants $\beta>0$
and $r>0$ such that
\[
\frac{1}{2}\|h(x)\|^{2}=\psi(x)\geq\beta\|x\|^{(4-1)^{4}+1}=\beta\|x\|^{82},\quad\forall x\in\mathbb{B}(0_4,r).
\]
This implies that $h$ is H\"{o}lder metrically subregular of order $\delta=\frac{1}{41}$
around $(0_4,0)$.
Nonetheless, the order $\frac{1}{41}$ given by Fact~\ref{thm:Loj_exponent} can be improved by using the theory of $2$-regularity: the function $h$ turns out to be $2$-regular at $0_4$, which implies by~\cite[Theorem~4]{izmailov2001} that~\eqref{eq:errorBound} holds with $\delta=\frac{1}{2}$ (see also~\cite[Remark~7]{izmailov2001}). Recall that a twice differentiable mapping $h:\mathbb{R}^m\to\mathbb{R}^n$ is said to be \emph{2-regular} at the point $\overline{x}$ if the range of $\psi_2(z)$ is $\mathbb{R}^n$ for all $z\in T_2\setminus\{0\}$, where $\psi_2:\mathbb{R}^m\to \mathbb{R}^{n\times m}$ is defined for $z\in\mathbb{R}^m$ by
\begin{gather*}
\psi_2(z):=\nabla h(\overline{x})^T+D^2 P h(\overline{x})(z,\cdot),\\
T_2:=\left\{z\in\mathbb{R}^m\mid \nabla h(\overline{x})^T z=0_n\text{ and } D^2 Ph(\overline{x})(z,z)=0_n\right\},
\end{gather*}
$P$ is the projector in $\mathbb{R}^n$ onto the complementary subspace to the range of $\nabla h(\overline{x})^T$, and $D^2$ stands for the second-order (Fr\'echet) derivative.

Indeed, for any $z\in\mathbb{R}^4$, one has $\nabla h(0_4)^T z=(z_1 + 10z_2, \sqrt{5}(z_3 - z_4),0,0)^T$, so the range of $\nabla h(0_4)^T$ is $Y_1=\mathbb{R}^2\times\{0_2\}$, whose complementary subspace is $Y_2=\{0_2\}\times\mathbb{R}^2$. Then, $T_2=\{(-10t,t,0,0)^T, t\in\mathbb{R}\}$ and for each $z\in T_2\setminus \{0_4\}$, one has
\begin{align*}
\psi_2(z)&=\begin{bmatrix}
1 & 10 & 0 & 0\\
0 & 0 & \sqrt{5} & -\sqrt{5}\\
0 & 2t & -4t & 0\\
-20\sqrt{10}t & 0 & 0 & 20\sqrt{10}t
\end{bmatrix},
\end{align*}
which is full-rank for all $t\neq 0$. Therefore, the range of $\psi_2(z)$ is equal to $\mathbb{R}^4$ for all $z\in T_2\setminus \{0_4\}$, and the function $h$ is $2$-regular at $0_4$.\qede
\end{example}

There are many examples of smooth functions that are H\"older metrically subregular of order~$\delta$ around some zero of the function and whose gradient is not full row rank at that point,
cf. \cite{izmailov2001,izmailov2002}. Nonetheless, the following result restricts the possible values of $\delta$: if $x^*$ is an isolated solution in~$\Omega$
(i.e., the function is \emph{H\"older strongly metrically subregular} at $x^*$, cf.~\cite{Mordukhovich2015,CDK18}), and $\nabla h$ is Lipschitz continuous around $x^*$ then one must have $\delta\in{]0,1/2]}$ if $\delta\neq 1$. In fact, only H\"older continuity of $\nabla h$ is needed. Recall that a function $g:\mathbb{R}^m\to\mathbb{R}^n$ is said to be \emph{H\"older continuous} of order $\upsilon\in{]0,1]}$ with constant $L>0$ around some point $x^*\in\mathbb{R}^m$ whenever there exist a positive constant $r$ such that
$$\|g(x)-g(y)\|\leq L\|x-y\|^\upsilon,\quad\forall x,y\in\mathbb{B}(x^*,r).$$
When $\upsilon=1$, $g$ is said to be Lipschitz continuous with constant $L$ around $x^*$.

\begin{proposition}\label{prop:about_delta}
Let $h:\mathbb{R}^m\to\mathbb{R}^n$ be a continuously differentiable function which is H\"older metrically subregular of order $\delta$ around some isolated solution $x^*\in\Omega=\{x\in\mathbb{R}^m:h(x)=0\}$. Assume further that $\nabla h$ is H\"older continuous around $x^*$ of order $\upsilon\in{]0,1]}$ and that $\nabla h(x^*)$ is not full row rank. Then, it holds that $\delta\in\left]0,\frac{1}{1+\upsilon}\right]$.
\end{proposition}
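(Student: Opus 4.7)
The plan is to derive the restriction on $\delta$ by probing the H\"older metric subregularity estimate~\eqref{eq:errorBound} along a direction in which $h$ vanishes to order strictly higher than $1$. Since $\nabla h(x^*)$ is not full row rank, the linear map $v\in\mathbb{R}^n\mapsto \nabla h(x^*)v\in\mathbb{R}^m$ has a proper range, so there exists a unit vector $d\in\mathbb{R}^m$ orthogonal to this range; equivalently, $\nabla h(x^*)^T d=0$, i.e.\ $d$ lies in the kernel of the Jacobian of $h$ at $x^*$.

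First, I would establish a H\"older Taylor-type remainder estimate along $d$. Starting from the integral identity
\begin{equation*}
h(x^*+td)=h(x^*)+t\int_0^1 \nabla h(x^*+std)^T d\,ds,
\end{equation*}
subtracting the zero quantity $t\,\nabla h(x^*)^T d$, and invoking the H\"older bound on $\nabla h$ with constant $L>0$ on some ball $\mathbb{B}(x^*,r_1)$, a direct computation using $\|d\|=1$ and $h(x^*)=0$ yields
\begin{equation*}
\|h(x^*+td)\|\le \frac{L}{1+\upsilon}\,t^{1+\upsilon}\quad\text{for all } t\in{]0,r_1]}.
\end{equation*}

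Next, I would exploit the hypothesis that $x^*$ is isolated in $\Omega$: there exists $r_2>0$ with $\Omega\cap\mathbb{B}(x^*,r_2)=\{x^*\}$, so for all sufficiently small $t>0$ the nearest point of $\Omega$ to $x^*+td$ is $x^*$ itself and hence $\mathrm{dist}(x^*+td,\Omega)=t$. Substituting this together with the previous remainder bound into~\eqref{eq:errorBound} gives
\begin{equation*}
\beta\, t\le \|h(x^*+td)\|^{\delta}\le \left(\frac{L}{1+\upsilon}\right)^{\!\delta} t^{\delta(1+\upsilon)}
\end{equation*}
for all sufficiently small $t>0$.

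Finally, I would let $t\to 0^{+}$: if $\delta(1+\upsilon)>1$, the right-hand side is $o(t)$ and the inequality forces $\beta\le 0$, contradicting $\beta>0$. Therefore $\delta(1+\upsilon)\le 1$, which together with the positivity of $\delta$ built into the definition of H\"older metric subregularity gives exactly $\delta\in\left]0,\tfrac{1}{1+\upsilon}\right]$. No step is genuinely delicate; the only point worth mentioning is the Taylor-type bound under H\"older (rather than Lipschitz) continuity of $\nabla h$, but this is a routine application of the fundamental theorem of calculus to the integral representation of $h(x^*+td)-h(x^*)$ and poses no real obstacle.
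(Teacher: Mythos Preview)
Your proof is correct and follows essentially the same approach as the paper: both exploit a direction $d$ with $\nabla h(x^*)^T d=0$ (which exists because $\nabla h(x^*)$ is not full row rank), use the H\"older continuity of $\nabla h$ together with the mean-value integral identity to bound $\|h(x^*+td)\|$ by $\tfrac{L}{1+\upsilon}t^{1+\upsilon}$, invoke the isolation of $x^*$ to identify $\mathrm{dist}(x^*+td,\Omega)=t$, and then read off $\delta(1+\upsilon)\le 1$ by letting $t\to 0$. The only cosmetic difference is that the paper works along the discrete sequence $w_k=x^*+\tfrac{r}{k\|z\|}z$ instead of a continuous parameter~$t$.
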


\begin{proof}
Because of the H\"older continuity assumption and the mean value theorem, there are some positive constants $L$ and $r$ such that, for all $x,y \in \mathbb{B}(x^*,r)$, it holds
\begin{align}\label{eq:Holder}
\|h(y)-h(&x)-\nabla h(x)^T(y-x)\|\nonumber\\
&=\left\|\int_0^1\nabla h(x+t(y-x))^T(y-x)dt-\nabla h(x)^T(y-x)\right\|\nonumber\\
&\leq\|y-x\|\int_0^1\left\|\nabla h(x+t(y-x))-\nabla h(x)\right\|dt\nonumber\\
&\leq L\|y-x\|^{1+\upsilon}\int_0^1 t^\upsilon dt=\frac{L}{1+\upsilon}\|y-x\|^{1+\upsilon}.
\end{align}
By using the fact that $x^*$ is an isolated solution, it is possible to make $r$ smaller if needed so that~\eqref{eq:errorBound} holds and
$$\|x-x^*\|=\mathrm{dist}(x,\Omega),\quad \forall x\in\mathbb{B}(x^*,r).$$
Since $\nabla h(x^*)$ is not full row rank, there exists some $z\neq 0$ such that $\nabla h(x^*)^Tz=0$. Consider now the points
$$w_k:=x^*+\frac{r}{k\|z\|}z,\quad\text{with }k=1,2,\ldots.$$
Observe that
$$\nabla h(x^*)^T(w_k-x^*)=\frac{r}{k\|z\|}\nabla h(x^*)^T z=0.$$
As $w_k\in\mathbb{B}(x^*,r)$ for all $k$,  we deduce
\begin{align*}
\beta\|w_k-x^*\|&=\beta\mathrm{dist}(w_k,\Omega)\leq\|h(w_k)\|^\delta\\
&=\|h(w_k)-h(x^*)-\nabla h(x^*)(w_k-x^*)\|^\delta\\
&\leq \frac{L^\delta}{(1+\upsilon)^\delta}\|w_k-x^*\|^{(1+\upsilon)\delta}.
\end{align*}
Thus, we get
$$\|w_k-x^*\|^{(1+\upsilon)\delta-1}\geq \frac{\beta(1+\upsilon)^\delta}{L^\delta},$$
which implies that $\delta\leq \frac{1}{1+\upsilon}$, since $w_k\to x^*$, as claimed.\qed
\end{proof}

The next example shows that the full rank assumption in Proposition~\ref{prop:about_delta} is not redundant,
and that the upper bound $\delta\leq \frac{1}{1+\upsilon}$ can be attained.

\begin{example}\label{ex:new}
Consider the continuously differentiable functions $h,\widehat{h}:\mathbb{R}\to\mathbb{R}$ given for $x\in\mathbb{R}$ by $h(x):=\frac{3}{4}\sqrt[3]{x^4}$ and $\widehat{h}(x):=\frac{3}{4}\sqrt[3]{x^4}+x$, whose solution sets are $\Omega=\{0\}$ and $\widehat{\Omega}=\left\{-\frac{64}{27},0\right\}$, respectively. Let $x^*:=0\in\Omega\cap\widehat{\Omega}$. Then,
$h'(x)=\sqrt[3]{x}$ and $\widehat{h}'(x)=\sqrt[3]{x}+1$, which are both H\"older continuous around $x^*$ of order $\upsilon=\widehat{\upsilon}=\frac{1}{3}$.
Observe that $h'(0)=0$ while $\widehat{h}'(0)=1$. Hence, it follows that $\widehat{h}$ is (H\"older) metrically subregular around $x^*$ of order
$\widehat{\delta}:=1>\frac{1}{1+\widehat{\upsilon}}$, while it is easy to check that  $h$ is H\"older metrically subregular around $x^*$ of order
$\delta:=\frac{3}{4}=\frac{1}{1+\upsilon}$.\qede
\end{example}

\section{Local convergence of the Levenberg--Marquardt method}\label{sec:Local}

In this section, to solve a nonlinear system of the form~\eqref{eq:nonequa},
we consider an adaptive Levenberg--Marquardt method and investigate its local convergence near a solution. Specifically, we consider the following Levenberg--Marquardt
algorithm.

\RestyleAlgo{boxruled}
\begin{algorithm}[ht!]
\DontPrintSemicolon \KwIn{ $x_{0}\in\mathbb{R}^{m}$,
$\eta>0$, $\xi_0\in[\xi_{\min},\xi_{\max}]$, $\omega_0 \in [\omega_{\min},\omega_{\max}]$, with $\xi_{\min}+\omega_{\min}>0$;} \Begin{ $k:=0$;~$\mu_{0}:=\xi_{0}\|h(x_{0})\|^{\eta}+\omega_0\|\nabla h(x_{0})h(x_{0})\|^{\eta};$\;
\While {$\|h(x_{k+1})\|>0$}{ solve the linear system~\eqref{eq:LM_direction}
to specify the direction $d_{k}$;\; $x_{k+1}=x_{k}+d_{k}$;\;update
 $\xi_{k}\in[\xi_{\min},\xi_{\max}]$, $\omega_k\in[\omega_{\min},\omega_{\max}]$ and compute $\mu_{k}$ with~\eqref{eq:muk};\; } } \caption{(Levenberg--Marquardt method with Adaptive Regularisation)\label{ALGORITHM-1:llm}}
\end{algorithm}

In order to prove the local convergence of algorithm~\ref{ALGORITHM-1:llm} to some solution
$x^{*}\in\Omega$, we assume throughout the paper that the next two conditions hold:
\begin{description}
\item [\mylabel{(A1)}{(A1)}] There exists some constants $r\in{]0,1[}$, $\lambda>0$, $\beta>0$ and $\delta\in{]0,1]}$ such that the function $h$ is continuously differentiable and Lipschitz continuous with constant $\lambda$ on $\mathbb{B}(x^*,r)$, and is H\"{o}lder
metrically subregular of order $\delta$ around $(x^{*},0)$;
that is,~\eqref{eq:errorBound}
holds.
\item [\mylabel{(A2)}{(A2)}] $\nabla h$ is H\"older continuous of order $\upsilon\in{]0,1]}$ with constant $L>0$ on $\mathbb{B}(x^*,r)$.
\end{description}

Note that from \ref{(A1)}-\ref{(A2)} and the mean value
theorem, see~\eqref{eq:Holder}, it holds
\begin{equation}
\left\|h(y)-h(x)-\nabla h(x)^{T}(y-x)\right\|\leq \frac{L}{1+\upsilon}\|y-x\|^{1+\upsilon},\quad\forall x,y\in\mathbb{B}(x^{*},r).\label{gradientInequality}
\end{equation}

Let us define the constants
$$\widetilde{r}:=\left\{\begin{array}{ll} \frac{r}{2}, &\text{if }\xi_{\min}>0,\\
\min\left\{\frac{r}{2},\left(\frac{\beta^2(1+\upsilon)^{2\delta}}{2^\delta L^{2\delta}}\right)^\frac{1}{2\delta(1+\upsilon)-2}\right\}, & \text{otherwise},
\end{array}\right.$$
and
$$
\varpi:=\left\{\begin{array}{ll} 1, &\text{if }\xi_{\min}>0,\\
2-\delta, & \text{otherwise}.
\end{array}\right.$$

We begin our study with an analysis inspired by~\cite{alefeld_rate_2001}, \cite{fischer2002local} and~\cite{Guo2015}. The following result provides a bound for the norm of the direction
$d_{k}$ based on the distance of the current iteration $x_{k}$ to
the solution set~$\Omega$. This will be useful later for deducing
the rate of convergence of~\ref{ALGORITHM-1:llm}.
\begin{proposition}
\label{prop:dk_estimate} If $\xi_{\min}=0$, assume that $\delta>\frac{1}{1+\upsilon}$. Let $x_{k}\not\in\Omega$ be an iteration
generated by~\ref{ALGORITHM-1:llm} with $\eta\in{]0,2\delta(1+\upsilon)/\varpi[}$.
Then, if $x_{k}\in\mathbb{B}(x^{*},\widetilde{r})$, the direction $d_{k}$ given
by~\eqref{eq:LM_direction} satisfies
\begin{equation}
\|d_{k}\|\leq\beta_{1}\mathrm{dist}\left(x_{k},\Omega\right)^{\delta_{1}},\label{dk_estimate}
\end{equation}
where $\delta_1:=\min\left\{ 1+\upsilon-\frac{\eta\varpi}{2\delta},\,1\right\}$ and
$$\beta_1:=\left\{\begin{array}{ll}\sqrt{L^{2}{(1+\upsilon)}^{-2}\xi_{\min}^{-1}\beta^{-\frac{\eta}{\delta}}+1}, &\text{if }\xi_{\min}>0,\\
\sqrt{L^2 4^\eta\omega_{\min}^{-1}{(1+\upsilon)}^{-2}\beta^{-\frac{2\eta}{\delta}}+1}, & \text{otherwise}.
\end{array}\right.
$$
\end{proposition}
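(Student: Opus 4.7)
The plan is to compare $d_k$ with a test direction tailored to the solution set: since $d_k$ minimises the strongly convex quadratic $\phi_k$, for any $d$ we have $\mu_k\|d_k\|^2\le\phi_k(d_k)\le\phi_k(d)$, and the natural choice is $\widetilde d_k:=\overline{x}_k-x_k$, where $\overline{x}_k$ is a projection of $x_k$ onto $\Omega$. Because $\mathrm{dist}(x_k,\Omega)\le\widetilde r\le r/2$, both $x_k$ and $\overline{x}_k$ lie in $\mathbb{B}(x^*,r)$, so~\eqref{gradientInequality} applies and, using $h(\overline{x}_k)=0$, yields
\[
\bigl\|\nabla h(x_k)^T(\overline{x}_k-x_k)+h(x_k)\bigr\|\le\frac{L}{1+\upsilon}\mathrm{dist}(x_k,\Omega)^{1+\upsilon}.
\]
Plugging this together with $\|\widetilde d_k\|=\mathrm{dist}(x_k,\Omega)$ into the definition of $\phi_k$ gives
\[
\|d_k\|^2\le\frac{L^2}{\mu_k(1+\upsilon)^2}\mathrm{dist}(x_k,\Omega)^{2(1+\upsilon)}+\mathrm{dist}(x_k,\Omega)^2.
\]

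The next task is to bound $\mu_k$ from below by an appropriate power of $\mathrm{dist}(x_k,\Omega)$. When $\xi_{\min}>0$ (hence $\varpi=1$), the H\"{o}lder metric subregularity~\eqref{eq:errorBound} gives directly $\mu_k\ge\xi_{\min}\|h(x_k)\|^\eta\ge\xi_{\min}\beta^{\eta/\delta}\mathrm{dist}(x_k,\Omega)^{\eta/\delta}$, so the first summand in the preceding bound is a constant multiple of $\mathrm{dist}(x_k,\Omega)^{2(1+\upsilon)-\eta/\delta}$. When $\xi_{\min}=0$ (so $\omega_{\min}>0$ and $\varpi=2-\delta$) the argument is more delicate: Cauchy--Schwarz applied to $\langle\nabla h(x_k)h(x_k),\overline{x}_k-x_k\rangle$ together with the expansion $\nabla h(x_k)^T(\overline{x}_k-x_k)=-h(x_k)-e_k$, where $\|e_k\|\le\frac{L}{1+\upsilon}\mathrm{dist}(x_k,\Omega)^{1+\upsilon}$ by~\eqref{gradientInequality}, produces
\[
\mathrm{dist}(x_k,\Omega)\,\|\nabla h(x_k)h(x_k)\|\ge\|h(x_k)\|^2-\|h(x_k)\|\,\|e_k\|.
\]
The hypothesis $\delta>1/(1+\upsilon)$ combined with $\|h(x_k)\|\ge\beta^{1/\delta}\mathrm{dist}(x_k,\Omega)^{1/\delta}$ guarantees that $\|e_k\|$ is dominated by a fixed fraction of $\|h(x_k)\|$ whenever $\mathrm{dist}(x_k,\Omega)\le\widetilde r$; the second branch in the definition of $\widetilde r$ is engineered precisely to make this quantitative. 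Hence $\|\nabla h(x_k)h(x_k)\|$ is bounded below by a multiple of $\mathrm{dist}(x_k,\Omega)^{(2-\delta)/\delta}$, which in turn yields $\mu_k\ge c\,\mathrm{dist}(x_k,\Omega)^{\eta\varpi/\delta}$ and bounds the first summand by a multiple of $\mathrm{dist}(x_k,\Omega)^{2(1+\upsilon)-\eta\varpi/\delta}$.

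It remains to collect the two summands. In both cases the exponent of the first one equals $2[(1+\upsilon)-\eta\varpi/(2\delta)]$, which is strictly positive by the hypothesis $\eta<2\delta(1+\upsilon)/\varpi$. Setting $\delta_1:=\min\{(1+\upsilon)-\eta\varpi/(2\delta),1\}$ and using $\mathrm{dist}(x_k,\Omega)\le\widetilde r<1$ allows both $\mathrm{dist}(x_k,\Omega)^{2(1+\upsilon)-\eta\varpi/\delta}$ and $\mathrm{dist}(x_k,\Omega)^2$ to be absorbed into $\mathrm{dist}(x_k,\Omega)^{2\delta_1}$. Tracking the constants carefully gives $\|d_k\|^2\le\beta_1^2\mathrm{dist}(x_k,\Omega)^{2\delta_1}$ with the two case-dependent values of $\beta_1$ stated in the proposition, and taking square roots yields~\eqref{dk_estimate}.

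I anticipate that the main obstacle will be the lower bound on $\|\nabla h(x_k)h(x_k)\|$ in the case $\xi_{\min}=0$. The interplay between the H\"{o}lder metric subregularity (providing a lower bound on $\|h(x_k)\|$) and the H\"{o}lder continuity of $\nabla h$ (providing an upper bound on the residual $e_k$) must be handled with care, and it is this tracking of numerical constants that pins down the particular second branch of $\widetilde r$ and the specific coefficient $\beta_1$ displayed in the statement.
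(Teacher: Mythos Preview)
Your proposal is correct and follows essentially the same approach as the paper: compare $\phi_k(d_k)$ with $\phi_k(\overline{x}_k-x_k)$, control the linearisation error via~\eqref{gradientInequality}, and lower-bound $\mu_k$ through H\"older metric subregularity, splitting according to whether $\xi_{\min}>0$ or $\xi_{\min}=0$. The only difference is cosmetic: in the case $\xi_{\min}=0$, the paper expands $\|h(x_k)+\nabla h(x_k)^T(\overline{x}_k-x_k)\|^2$ directly and bounds the cross term by Cauchy--Schwarz, while you compute the inner product $\langle\nabla h(x_k)h(x_k),\overline{x}_k-x_k\rangle$ via the residual $e_k$; these are algebraically equivalent manipulations. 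One small caution: your variant, with the paper's $\widetilde r$, yields $\|e_k\|\le\frac{1}{\sqrt{2}}\|h(x_k)\|$ and hence $\|\nabla h(x_k)h(x_k)\|\ge(1-\tfrac{1}{\sqrt{2}})\beta^{2/\delta}\mathrm{dist}(x_k,\Omega)^{(2-\delta)/\delta}$, which is slightly sharper than the paper's factor $\tfrac14$ and therefore does not reproduce the displayed $\beta_1$ verbatim (you would get a smaller constant); if you want the stated $\beta_1$ exactly, follow the paper's expansion instead.
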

\begin{proof}
For all $k$, we will denote by $\overline{x}_{k}$ a vector in $\Omega$
such that $\|x_{k}-\overline{x}_{k}\|=\mathrm{dist}(x_{k},\Omega)$.
Since $x_{k}\in\mathbb{B}(x^{*},r/2),$ we have
\[
\|\overline{x}_{k}-x^{*}\|\leq\|\overline{x}_{k}-x_{k}\|+\|x_{k}-x^{*}\|\leq2\|x_{k}-x^{*}\|\leq r,
\]
which implies $\overline{x}{}_{k}\in\mathbb{B}(x^{*},r)$. Further,
\begin{equation}
\|\overline{x}_{k}-x_{k}\|=\mathrm{dist}(x_{k},\Omega)\leq\|x_{k}-x^{*}\|\leq\frac{r}{2}<1.\label{eq:norm_small_1}
\end{equation}

Observe that $\phi_{k}$ is strongly convex and the global
minimiser of $\phi_{k}$ is given by~\eqref{eq:LM_direction}. Then,
we have
\begin{equation}
\phi_{k}(d_{k})\leq\phi_{k}(\overline{x}_{k}-x_{k}).\label{gamma_inequality}
\end{equation}
From the definition of $\phi_{k}$ in~\eqref{eq:def_phi_k}, by~\eqref{gradientInequality} and~\eqref{gamma_inequality}, we deduce
\begin{equation}
\begin{split}\|d_{k}\|^{2} & \leq\frac{1}{\mu_{k}}\phi_{k}(d_{k})\leq\frac{1}{\mu_{k}}\phi_{k}(\overline{x}_{k}-x_{k})\\
 & =\frac{1}{\mu_{k}}\left(\|\nabla h(x_{k})^{T}(\overline{x}_{k}-x_{k})+h(x_{k})\|^{2}+\mu_{k}\|\overline{x}_{k}-x_{k}\|^{2}\right)\\
 & =\frac{1}{\mu_{k}}\left(\|\nabla h(x_{k})^{T}(\overline{x}_{k}-x_{k})+h(x_{k})-h(\overline{x}_{k})\|^{2}+\mu_{k}\|\overline{x}_{k}-x_{k}\|^{2}\right)\\
 & \leq\frac{1}{\mu_{k}}\left(\frac{L^{2}}{{(1+\upsilon)}^2}\|\overline{x}_{k}-x_{k}\|^{2(1+\upsilon)}+\mu_{k}\|\overline{x}_{k}-x_{k}\|^{2}\right).
\end{split}
\label{eq:dk_bounded}
\end{equation}

Let us assume first that $\xi_{\min}>0$. It follows from the definition of $\mu_{k}$ in~\eqref{eq:muk} and~\eqref{eq:errorBound}
that
\begin{align*}
\mu_{k}&\geq\xi_{k}\|h(x_{k})\|^{\eta}\geq\xi_{\min}\|h(x_{k})\|^{\eta}\\
&\geq\xi_{\min}\beta^{\frac{\eta}{\delta}}\mathrm{dist}(x_{k},\Omega)^{\frac{\eta}{\delta}}=\xi_{\min}\beta^{\frac{\eta}{\delta}}\|\overline{x}_{k}-x_{k}\|^{\frac{\eta}{\delta}},
\end{align*}
leading to
\begin{align*}
\|d_{k}\|^{2}&\leq \frac{L^{2}}{{(1+\upsilon)}^2}\xi_{\min}^{-1}\beta^{-\frac{\eta}{\delta}}\|\overline{x}_{k}-x_{k}\|^{2(1+\upsilon)-\frac{\eta}{\delta}}+\|\overline{x}_{k}-x_{k}\|^{2}\\
&\leq\left(\frac{L^{2}}{{(1+\upsilon)}^2}\xi_{\min}^{-1}\beta^{-\frac{\eta}{\delta}}+1\right)\|\overline{x}_{k}-x_{k}\|^{\mbox{\ensuremath{\min}}\left\{ 2(1+\upsilon)-\frac{\eta}{\delta},\,2\right\}},
\end{align*}
and this completes the proof of~\eqref{dk_estimate} for the case $\xi_{\min}>0$.

Let us consider now the case where $\xi_{\min}=0$, assuming then $\delta>\frac{1}{1+\upsilon}$. By~\eqref{gradientInequality},~\eqref{eq:errorBound}  and the Cauchy--Schwarz inequality, we have
\begin{align}
\frac{L^{2}}{{(1+\upsilon)}^2}\mathrm{dist}(x_{k},\Omega)^{2(1+\upsilon)}&\geq\left\|h(x_k)+\nabla h(x_k)^T(\overline{x}_k-x_k)\right\|^2\nonumber\\
&=\|h(x_k)\|^2+2(\overline{x}_k-x_k)^T\nabla h(x_k)h(x_k)\nonumber\\
&\quad+\left\|\nabla h(x_k)^T(\overline{x}_k-x_k)\right\|^2\nonumber\\
&\geq \beta^{\frac{2}{\delta}}\mathrm{dist}(x_{k},\Omega)^{\frac{2}{\delta}}-2\|\overline{x}_k-x_k\|\|\nabla h(x_k)h(x_k)\|.\label{eq:remark}
\end{align}
Thus, since $x_k\not\in\Omega$, we deduce
$$\|\nabla h(x_k)h(x_k)\|\geq \frac{\beta^{\frac{2}{\delta}}}{2}\mathrm{dist}(x_{k},\Omega)^{\frac{2}{\delta}-1}-\frac{L^2}{2{(1+\upsilon)}^2}
\mathrm{dist}(x_{k},\Omega)^{1+2\upsilon}.$$
Since $\delta>\frac{1}{1+\upsilon}$, we have
\begin{equation}\label{eq:bound_dist}
\begin{split}
\frac{L^2}{2{(1+\upsilon)}^2}
\mathrm{dist}(x_{k},\Omega)^{1+2\upsilon-\left(\frac{2}{\delta}-1\right)}&\leq \frac{L^2}{2{(1+\upsilon)}^2}\left\|x_k-x^*\right\|^{2\left(1+\upsilon-\frac{1}{\delta}\right)}\\
&\leq\frac{L^2}{2{(1+\upsilon)}^2}\widetilde{r}^{\,2\left(1+\upsilon-\frac{1}{\delta}\right)}\leq \frac{\beta^{\frac{2}{\delta}}}{4},
\end{split}
\end{equation}
and therefore
\begin{equation*}
\|\nabla h(x_k)h(x_k)\|\geq\frac{\beta^{\frac{2}{\delta}}}{4}\mathrm{dist}(x_{k},\Omega)^{\frac{2}{\delta}-1}.
\end{equation*}
This, together with the definition of $\mu_{k}$ in~\eqref{eq:muk}, implies
$$
\mu_k\geq \omega_k\|\nabla h(x_k)h(x_k)\|^\eta\geq\frac{\omega_{\min}\beta^{\frac{2\eta}{\delta}}}{4^\eta}
\left\|\overline{x}_k-x_k\right\|^{\left(\frac{2}{\delta}-1\right)\eta}.
$$
Using~\eqref{eq:dk_bounded}, we obtain
\begin{align*}
\|d_k\|^2&\leq \frac{L^2 4^\eta}{\omega_{\min}{(1+\upsilon)}^2\beta^{\frac{2\eta}{\delta}}}\left\|\overline{x}_k-x_k\right\|^{2(1+\upsilon)
-\left(\frac{2}{\delta}-1\right)\eta}+\|\overline{x}_{k}-x_{k}\|^{2}\\
&\leq\left(\frac{L^2 4^\eta}{\omega_{\min}{(1+\upsilon)}^2\beta^{\frac{2\eta}{\delta}}}+1\right)\left\|\overline{x}_k-x_k\right\|^{\min\left\{{2(1+\upsilon)
-\left(\frac{2}{\delta}-1\right)\eta},2\right\}},
\end{align*}
which completes the proof.\qed
\end{proof}
\begin{remark}
If $\delta>\frac{1}{1+\upsilon}$, by~\eqref{eq:remark}, we have that $\nabla h(x_k) h(x_k)=0$ implies $x_k\in\Omega$ whenever $x_k$ is sufficiently close to~$x^*$.
\end{remark}

The next result provides an upper bound for the distance of $x_{k+1}$
to the solution set $\Omega$ based on the distance of $x_{k}$ to $\Omega$.
\begin{proposition}
\label{distance_estimate} If $\xi_{\min}=0$, assume that $\delta>\frac{1}{1+\upsilon}$. Let $x_{k}\not\in\Omega$ and $x_{k+1}$ be two
consecutive iterations generated by~\ref{ALGORITHM-1:llm} with $\eta\in{]0,2\delta(1+\upsilon)/\varpi[}$.
Then, if $x_{k},x_{k+1}\in\mathbb{B}(x^{*},\widetilde{r})$, we have
\begin{equation}
\mathrm{dist}(x_{k+1},\Omega)\leq\beta_{2}\mathrm{dist}(x_{k},\Omega)^{\delta_{2}},\label{eq:distineq}
\end{equation}
where $\beta_2$ is a positive constant and
\begin{equation}\label{eq:delta_2}
\delta_2:=\min\left\{ (1+\upsilon)\delta,\,\left(1+\frac{\eta}{2}\right)\delta,\,(1+\upsilon)\left(\delta+
\delta\upsilon-\frac{\eta\varpi}{2}\right)\right\}.
\end{equation}
\end{proposition}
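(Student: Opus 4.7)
The plan is to apply the H\"older metric subregularity of $h$ at $x_{k+1}$ to reduce the claim to an estimate of $\|h(x_{k+1})\|$, and then to control $\|h(x_{k+1})\|$ via the optimality of the Levenberg--Marquardt direction $d_k$ together with the bound on $\|d_k\|$ supplied by Proposition~\ref{prop:dk_estimate}.

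First, I would denote $\overline{x}_k\in\Omega$ a projection of $x_k$ on $\Omega$, and apply \eqref{eq:errorBound} at $x_{k+1}\in\mathbb{B}(x^*,r)$ (using $\widetilde{r}\leq r/2$) to get
\[
\mathrm{dist}(x_{k+1},\Omega)\leq\tfrac{1}{\beta^{1/\delta}}\|h(x_{k+1})\|\text{ raised to the power }\delta,
\]
so it suffices to bound $\|h(x_{k+1})\|$. By the triangle inequality,
\[
\|h(x_{k+1})\|\leq \|h(x_{k+1})-h(x_k)-\nabla h(x_k)^T d_k\|+\|h(x_k)+\nabla h(x_k)^T d_k\|.
\]
The first term is controlled by \eqref{gradientInequality}, giving a bound of the form $\frac{L}{1+\upsilon}\|d_k\|^{1+\upsilon}$, which thanks to Proposition~\ref{prop:dk_estimate} yields a bound of order $\mathrm{dist}(x_k,\Omega)^{\delta_1(1+\upsilon)}$.

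Second, for the term $\|h(x_k)+\nabla h(x_k)^T d_k\|$, I would use that $d_k$ minimises the strongly convex function $\phi_k$, so $\phi_k(d_k)\leq \phi_k(\overline{x}_k-x_k)$, and hence
\[
\|h(x_k)+\nabla h(x_k)^T d_k\|^2\leq \phi_k(\overline{x}_k-x_k)\leq \tfrac{L^2}{(1+\upsilon)^2}\|\overline{x}_k-x_k\|^{2(1+\upsilon)}+\mu_k\|\overline{x}_k-x_k\|^2,
\]
exactly as in the computation inside the proof of Proposition~\ref{prop:dk_estimate}. An \emph{upper} bound on $\mu_k$ is obtained from the Lipschitz continuity of $h$ on $\mathbb{B}(x^*,r)$ (so $\|h(x_k)\|\leq\lambda\mathrm{dist}(x_k,\Omega)$) and the boundedness of $\|\nabla h(x_k)\|$ on the same ball, which together yield $\mu_k\leq C\,\mathrm{dist}(x_k,\Omega)^\eta$ for an appropriate constant $C$. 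This gives
\[
\|h(x_k)+\nabla h(x_k)^T d_k\|\leq C'\mathrm{dist}(x_k,\Omega)^{\min\{1+\upsilon,\,1+\eta/2\}}.
\]

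Finally, I would combine the two estimates to obtain
\[
\|h(x_{k+1})\|\leq C''\mathrm{dist}(x_k,\Omega)^{\min\{\delta_1(1+\upsilon),\,1+\upsilon,\,1+\eta/2\}},
\]
raise to the power $\delta$, and verify that the resulting exponent equals $\delta_2$ in \eqref{eq:delta_2}. The only nontrivial check is that $\delta\,\delta_1(1+\upsilon)=(1+\upsilon)(\delta+\delta\upsilon-\eta\varpi/2)$ when the minimum in $\delta_1$ is attained at $1+\upsilon-\eta\varpi/(2\delta)$, while $\delta\,\delta_1(1+\upsilon)=\delta(1+\upsilon)$ otherwise; taking the minimum over both cases then reproduces the three candidate exponents in \eqref{eq:delta_2}. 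The main obstacle is purely the bookkeeping of these exponents, in particular making sure the case distinction $\xi_{\min}>0$ versus $\xi_{\min}=0$ (encoded by $\varpi$) is consistently propagated through $\delta_1$ and $\mu_k$, and that all applications of $h$, $\nabla h$, and \eqref{eq:errorBound} remain valid on $\mathbb{B}(x^*,\widetilde{r})$.
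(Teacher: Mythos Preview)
Your proposal is correct and follows essentially the same approach as the paper: the paper also bounds $\|h(x_{k+1})\|$ by splitting it into the linearisation residual $\|h(x_k)+\nabla h(x_k)^T d_k\|$ (controlled via $\phi_k(d_k)\leq\phi_k(\overline{x}_k-x_k)$ together with the upper bound $\mu_k\leq(\xi_{\max}+\omega_{\max}\widehat{L}^\eta)\lambda^\eta\mathrm{dist}(x_k,\Omega)^\eta$) and the Taylor remainder (controlled by~\eqref{gradientInequality} and Proposition~\ref{prop:dk_estimate}), then applies~\eqref{eq:errorBound} at $x_{k+1}$ and multiplies the resulting exponent by~$\delta$. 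Your bookkeeping of the exponents, including the case split on $\delta_1$, reproduces exactly the paper's $\widehat{\delta}_2=\min\{1+\upsilon,\,1+\eta/2,\,(1+\upsilon)\delta_1\}$ and hence $\delta_2=\delta\widehat{\delta}_2$.
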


\begin{proof}
Let $\overline{x}_{k}\in\Omega$ be such that $\|x_{k}-\overline{x}_{k}\|=\mathrm{dist}(x_{k},\Omega)$.
From the definition of $\phi_{k}$ in~\eqref{eq:def_phi_k} and the reasoning in~\eqref{eq:dk_bounded},
we obtain
\begin{align*}
\|\nabla h(x_{k})^{T}d_{k}+h(x_{k})\|^{2} & \leq\phi_{k}(d_{k})\leq \frac{L^{2}}{{(1+\upsilon)}^2}\|\overline{x}_{k}-x_{k}\|^{2(1+\upsilon)}+\mu_{k}\|\overline{x}_{k}-x_{k}\|^{2}.
\end{align*}
It follows from \ref{(A1)} that there exists some constant $\widehat{L}$ such that $\|\nabla h(x)\|\leq \widehat{L}$ for all $x\in\mathbb{B}(x^*,r)$. Then, by the definition of $\mu_{k}$ in~\eqref{eq:muk} and the Lipschitz continuity of $h$, we have
that
\begin{equation}\label{eq:muk_bound}
\begin{split}
\mu_{k} & =\xi_{k}\|h(x_{k})\|^{\eta}+\omega_k\|\nabla h(x_{k})h(x_{k})\|^{\eta} \\ &\leq\xi_{\max}\|h(x_{k})\|^{\eta}+\omega_{\max}\widehat{L}^{\eta}\|h(x_{k})\|^{\eta}\\
 & =\left(\xi_{\max}+\omega_{\max}\widehat{L}^{\eta}\right)\|h(x_{k})-h(\overline{x}_{k})\|^{\eta}\\
& \leq\left(\xi_{\max}+\omega_{\max}\widehat{L}^{\eta}\right)\lambda^{\eta}\|x_{k}-\overline{x}_{k}\|^{\eta},
\end{split}
\end{equation}
which implies, thanks to~\eqref{eq:norm_small_1},
\begin{align*}
{\left\|\nabla h(x_{k})^{T}d_{k}+h(x_{k})\right\|}^{2} & \leq \frac{L^{2}}{{(1+\upsilon)}^2}\|\overline{x}_{k}-x_{k}\|^{2(1+\upsilon)}\\
&\quad+\left(\xi_{\max}+\omega_{\max}\widehat{L}^{\eta}\right)\lambda^{\eta}\|x_{k}-\overline{x}_{k}\|^{2+\eta}\\
 & \leq\left(\frac{L^{2}}{{(1+\upsilon)}^2}+\lambda^{\eta}\xi_{\max}+\widehat{L}^{\eta}\lambda^{\eta}\omega_{\max}\right)\|x_{k}-\overline{x}_{k}\|^{\zeta},
\end{align*}
where $\zeta:=\min\left\{ 2(1+\upsilon),\,2+\eta\right\} $. By~\eqref{eq:errorBound},~\eqref{gradientInequality}, the latter inequality and Proposition~\ref{prop:dk_estimate}, we get
{\allowdisplaybreaks
\begin{align*}
\left(\beta\mathrm{dist}(x_{k+1},\Omega)\right)^{\frac{1}{\delta}} & \leq\|h(x_{k}+d_{k})\|\\
 & \leq\left\|\nabla h(x_{k})^{T}d_{k}+h(x_{k})\right\|\\
 &\quad+\left\|h(x_{k}+d_{k})-h(x_{k})-\nabla h(x_{k})^{T}d_{k}\right\|\\
 & \leq\left\|\nabla h(x_{k})^{T}d_{k}+h(x_{k})\right\|+\frac{L}{1+\upsilon}\|d_{k}\|^{1+\upsilon}\\
 & \leq\sqrt{L^{2}{(1+\upsilon)}^{-2}+\lambda^{\eta}\xi_{\max}+\widehat{L}^{\eta}\lambda^{\eta}\omega_{\max}}\|x_{k}-\overline{x}_{k}\|^{\frac{\zeta}{2}}\\
 &\quad+\frac{L\beta_{1}^{1+\upsilon}}{{1+\upsilon}}\mathrm{dist}\left(x_{k},\Omega\right)^{(1+\upsilon)\delta_{1}}\\
 & \leq\widehat{\beta}_2\mathrm{dist}(x_{k},\Omega)^{\widehat{\delta_{2}}},
\end{align*}}%
where 
\begin{align*}
\widehat{\delta_{2}}&:=\min\left\{ \frac{\zeta}{2},\,(1+\upsilon)\delta_{1}\right\} =\mbox{\ensuremath{\min}}\left\{ 1+\upsilon,\,1+\frac{\eta}{2},\,(1+\upsilon)\left(1+\upsilon-\frac{\eta\varpi}{2\delta}\right)\right\},\\ \widehat{\beta}_2&:=\sqrt{L^{2}{(1+\upsilon)}^{-2}+\lambda^{\eta}\xi_{\max}+\widehat{L}^{\eta}\lambda^{\eta}\omega_{\max}}+L\beta_{1}^{1+\upsilon}{(1+\upsilon)}^{-1}.
\end{align*}
Therefore,
\[
\mathrm{dist}(x_{k+1},\Omega)\leq\beta_{2}\mathrm{dist}\left(x_{k},\Omega\right)^{\delta\widehat{\delta_{2}}}
=\beta_{2}\mathrm{dist}\left(x_{k},\Omega\right)^{\delta_{2}},
\]
with $\delta_2$ given by~\eqref{eq:delta_2} and $\beta_{2}:=\frac{1}{\beta}\widehat{\beta}_2^{\delta}$,
giving the result.\qed
\end{proof}

The following proposition gives a different value of the exponent in~\eqref{eq:distineq}.
\begin{proposition}\label{distance_estimate2}
Assume that $\delta> \frac{1}{1+\upsilon}$. Let $x_{k}\not\in\Omega$ and $x_{k+1}$ be two
consecutive iterations generated by~\ref{ALGORITHM-1:llm} with $\eta\in{]0,2\delta(1+\upsilon)/\varpi[}$
and such that $x_{k},x_{k+1}\in\mathbb{B}(x^{*},\widetilde{r})$.
Then, there exists a positive constant $\beta_3$ such that
\begin{equation}
\mathrm{dist}(x_{k+1},\Omega)\leq\beta_{3}\mathrm{dist}(x_{k},\Omega)^{\delta_{3}},\label{eq:distineq2}
\end{equation}
where
\begin{equation}\label{eq:delta_3}\small
\delta_3:=
\min\left\{\frac{(1+\eta)\delta}{2-\delta},\frac{(1+\upsilon)\delta}{2-\delta},\frac{\eta\delta+(1+\upsilon)\delta
-\frac{\eta\varpi}{2}}{2-\delta},\frac{{(1+\upsilon)}^2\delta
-(1+\upsilon)\frac{\eta\varpi}{2}}{2-\delta}\right\}.
\end{equation}
\end{proposition}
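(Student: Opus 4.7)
The plan is to convert the stronger lower bound on $\|\nabla h(x)h(x)\|$ that becomes available when $\delta>\frac{1}{1+\upsilon}$ into a sharper contraction estimate than the one given in Proposition~\ref{distance_estimate}. First, I would observe that the chain of inequalities used in the $\xi_{\min}=0$ case of the proof of Proposition~\ref{prop:dk_estimate}---namely, \eqref{gradientInequality}, \eqref{eq:errorBound} and~\eqref{eq:bound_dist}---only depends on $\delta>\frac{1}{1+\upsilon}$ and not on $\xi_{\min}$. Applying that reasoning at $x_{k+1}\in\mathbb{B}(x^*,\widetilde{r})$ (possibly after shrinking $\widetilde{r}$ when $\xi_{\min}>0$ to make~\eqref{eq:bound_dist} hold) yields
$$\|\nabla h(x_{k+1})h(x_{k+1})\|\ge\frac{\beta^{2/\delta}}{4}\,\mathrm{dist}(x_{k+1},\Omega)^{\frac{2-\delta}{\delta}},$$
which rearranges to
$$\mathrm{dist}(x_{k+1},\Omega)\le\left(\tfrac{4}{\beta^{2/\delta}}\right)^{\!\frac{\delta}{2-\delta}}\|\nabla h(x_{k+1})h(x_{k+1})\|^{\frac{\delta}{2-\delta}}.$$
This is precisely where the factor $2-\delta$ in the denominator of $\delta_3$ originates; without the hypothesis $\delta>\frac{1}{1+\upsilon}$ this step would not be available, which is why Proposition~\ref{distance_estimate} yields the weaker exponent $\delta_2$ rather than $\delta_3$.

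Next, I would bound $\|\nabla h(x_{k+1})h(x_{k+1})\|$ from above by exploiting the Levenberg--Marquardt normal equation~\eqref{eq:LM_direction}, rewritten as $\nabla h(x_k)[h(x_k)+\nabla h(x_k)^T d_k]=-\mu_k d_k$. Writing
\begin{align*}
\nabla h(x_{k+1})h(x_{k+1})&=-\mu_k d_k+\nabla h(x_{k+1})\bigl[h(x_{k+1})-h(x_k)-\nabla h(x_k)^T d_k\bigr]\\
&\quad+\bigl[\nabla h(x_{k+1})-\nabla h(x_k)\bigr]\bigl[h(x_k)+\nabla h(x_k)^T d_k\bigr],
\end{align*}
the three summands are estimated using, in turn, the bound on $\mu_k$ from~\eqref{eq:muk_bound} combined with $\|d_k\|\le\beta_1\mathrm{dist}(x_k,\Omega)^{\delta_1}$ from Proposition~\ref{prop:dk_estimate}; the Taylor-error inequality~\eqref{gradientInequality}; and the H\"older continuity of $\nabla h$ together with the bound $\|h(x_k)+\nabla h(x_k)^T d_k\|^2\le\phi_k(d_k)\le C\,\mathrm{dist}(x_k,\Omega)^{\zeta}$, where $\zeta=\min\{2(1+\upsilon),2+\eta\}$, already derived in the proof of Proposition~\ref{distance_estimate}. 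Collecting terms produces an estimate of the form
$$\|\nabla h(x_{k+1})h(x_{k+1})\|\le C'\,\mathrm{dist}(x_k,\Omega)^{\alpha},\quad \alpha=\min\bigl\{\eta+\delta_1,\,(1+\upsilon)\delta_1,\,\upsilon\delta_1+\tfrac{\zeta}{2}\bigr\}.$$

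Substituting this bound into the inverted inequality above yields~\eqref{eq:distineq2} with exponent $\alpha\delta/(2-\delta)$. What remains is bookkeeping: verifying that $(1+\upsilon)\delta_1\le\upsilon\delta_1+\zeta/2$ (equivalently $\delta_1\le\zeta/2$) so that the third entry of $\alpha$ is dominated by the second, and then expanding the outer minimum hidden in $\delta_1=\min\{1+\upsilon-\eta\varpi/(2\delta),\,1\}$ to recover the four numerators $(1+\eta)\delta$, $(1+\upsilon)\delta$, $\eta\delta+(1+\upsilon)\delta-\eta\varpi/2$, and $(1+\upsilon)^2\delta-(1+\upsilon)\eta\varpi/2$ appearing in~\eqref{eq:delta_3}. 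The main obstacle is this arithmetic: it must be carried out separately for $\xi_{\min}>0$ (so $\varpi=1$) and $\xi_{\min}=0$ (so $\varpi=2-\delta$), and the hypotheses $\eta<2\delta(1+\upsilon)/\varpi$ together with the shape of $\widetilde{r}$ are precisely what guarantee the required inequalities between exponents in every subcase.
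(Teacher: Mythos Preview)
Your proposal is correct and follows essentially the same strategy as the paper: first invert the lower bound $\|\nabla h(x_{k+1})h(x_{k+1})\|\ge\frac{\beta^{2/\delta}}{4}\mathrm{dist}(x_{k+1},\Omega)^{(2-\delta)/\delta}$, then bound $\|\nabla h(x_{k+1})h(x_{k+1})\|$ from above via the normal equation $\nabla h(x_k)[h(x_k)+\nabla h(x_k)^Td_k]=-\mu_kd_k$, and combine.

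The only noteworthy difference is in the algebraic splitting of the gradient-residual. The paper writes
\[
\nabla h(x_{k+1})h(x_{k+1})=\bigl[\nabla h(x_{k+1})-\nabla h(x_k)\bigr]h(x_{k+1})
+\nabla h(x_k)\bigl[h(x_{k+1})-h(x_k)-\nabla h(x_k)^Td_k\bigr]-\mu_kd_k,
\]
and then estimates $\|h(x_{k+1})\|\le\lambda(\beta_1+1)\mathrm{dist}(x_k,\Omega)^{\delta_1}$ using the Lipschitz continuity of $h$. This produces directly $\widehat{\delta}_3=\min\{\eta+\delta_1,(1+\upsilon)\delta_1\}$ with no third term to dispose of. Your decomposition instead pairs $\nabla h(x_{k+1})-\nabla h(x_k)$ with $h(x_k)+\nabla h(x_k)^Td_k$, which generates the additional exponent $\upsilon\delta_1+\zeta/2$; as you note, this term is redundant because $\delta_1\le\zeta/2$, so both routes land on the same $\delta_3$. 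The paper's grouping is marginally cleaner, but yours is equally valid. Your remark about possibly shrinking $\widetilde{r}$ when $\xi_{\min}>0$ so that~\eqref{eq:bound_dist} applies is well-taken; the paper invokes~\eqref{eq:bound_dist} without comment in that case.
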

\begin{proof}
Let $\overline{x}_k,\overline{x}_{k+1}\in\Omega$ be such that $\|x_{k}-\overline{x}_{k}\|=\mathrm{dist}(x_{k},\Omega)$ and $\|x_{k+1}-\overline{x}_{k+1}\|=\mathrm{dist}(x_{k+1},\Omega)$. Assume that $x_{k+1}\not\in\Omega$ (otherwise, the inequality trivially holds). By~\eqref{gradientInequality}, we have
\begin{align*}
\left\|h(x_{k+1})+\nabla h(x_{k+1})^T\left(\overline{x}_{k+1}-x_{k+1}\right)\right\|^2&\leq \frac{L^2}{{(1+\upsilon)}^2}\left\|\overline{x}_{k+1}-x_{k+1}\right\|^{2(1+\upsilon)}\\
&=\frac{L^2}{{(1+\upsilon)}^2}\mathrm{dist}(x_{k+1},\Omega)^{2(1+\upsilon)}.
\end{align*}
Thus, by the Cauchy--Schwarz inequality and~\eqref{eq:errorBound}, we get
\begin{align*}
-\left\|\nabla h(x_{k+1})h(x_{k+1})\right\|&\mathrm{dist}(x_{k+1},\Omega)\\
&\leq h(x_{k+1})^T\nabla h(x_{k+1})^T\left(\overline{x}_{k+1}-x_{k+1}\right)\\
&\leq \frac{L^2}{2{(1+\upsilon)}^2}\mathrm{dist}(x_{k+1},\Omega)^{2(1+\upsilon)}-\frac{1}{2}\|h(x_{k+1})\|^2\\
&\quad-\frac{1}{2}\|\nabla h(x_{k+1})^T\left(\overline{x}_{k+1}-x_{k+1}\right)\|^2\\
&\leq\frac{L^2}{2{(1+\upsilon)}^2}\mathrm{dist}(x_{k+1},\Omega)^{2(1+\upsilon)}-\frac{\beta^{\frac{2}{\delta}}}{2}\mathrm{dist}(x_{k+1},\Omega)^{\frac{2}{\delta}},
\end{align*}
that is,
\begin{equation}\label{eq:dist_bound}
\begin{split}
\frac{\beta^{\frac{2}{\delta}}}{2}\mathrm{dist}(x_{k+1},\Omega)^{\frac{2}{\delta}}&-\frac{L^2}{2{(1+\upsilon)}^2}\mathrm{dist}(x_{k+1},\Omega)^{2(1+\upsilon)}\\
&\leq\left\|\nabla h(x_{k+1})h(x_{k+1})\right\|\mathrm{dist}(x_{k+1},\Omega).
\end{split}
\end{equation}
Now, by~\eqref{eq:LM_direction}, we have
\begin{equation}\label{eq:norm_k_1}
\begin{split}
\|\nabla &h(x_{k+1})h(x_{k+1})\|\\
&=\left\|\nabla h(x_{k+1})h(x_{k+1})-\nabla h(x_k)\left(h(x_k)+\nabla h(x_k)^Td_k\right)-\mu_kd_k\right\|\\
&\leq \left\|\nabla h(x_{k+1})-\nabla h(x_{k})\right\|\|h(x_{k+1})\|\\
&\quad+\left\|\nabla h(x_k)\right\|\left\|h(x_{k+1})-h(x_k)-\nabla h(x_k)^T(x_{k+1}-x_k)\right\|+\mu_k\|d_k\|\\
&\leq L\|d_k\|^\upsilon\|h(x_{k+1})\|+\frac{L}{1+\upsilon}\|\nabla h(x_k)\|\|d_k\|^{1+\upsilon}+\mu_k\|d_k\|.
\end{split}
\end{equation}
By~\ref{(A1)} and Proposition~\ref{prop:dk_estimate}, it holds,
\begin{align*}
\|h(x_{k+1})\|&=\|h(x_{k+1})-h(\overline{x}_k)\|\leq \lambda\|x_{k+1}-\overline{x}_k\|\\
&\leq \lambda\left(\|x_{k+1}-x_k\|+\|x_k-\overline{x}_k\|\right)\\
&\leq \lambda\left(\beta_{1}\mathrm{dist}\left(x_{k},\Omega\right)^{\delta_{1}}+\mathrm{dist}\left(x_{k},\Omega\right)\right)\\
&\leq \lambda(\beta_{1}+1)\mathrm{dist}\left(x_{k},\Omega\right)^{\delta_{1}}.
\end{align*}
It follows from  \ref{(A1)} that there exists some constant $\widehat{L}$ such that $\|\nabla h(x)\|\leq \widehat{L}$ for all $x\in\mathbb{B}(x^*,r)$. Then, by the definition of $\mu_{k}$ in~\eqref{eq:muk} and~\ref{(A1)}, we get~\eqref{eq:muk_bound}.
Hence, by~\eqref{eq:norm_k_1} and Proposition~\ref{prop:dk_estimate}, we deduce
\begin{align*}
\left\|\nabla h(x_{k+1})h(x_{k+1})\right\|&\leq L\lambda\beta_{1}^\upsilon(\beta_1+1)\mathrm{dist}\left(x_{k},\Omega\right)^{\delta_{1}(1+\upsilon)}\\
&\quad+\frac{\widehat{L}L\beta_{1}^{1+\upsilon}}{1+\upsilon}\mathrm{dist}\left(x_{k},\Omega\right)^{\delta_{1}(1+\upsilon)}\\
&\quad+\left(\xi_{\max}+\omega_{\max}\widehat{L}^\eta\right)\lambda^{\eta}\beta_{1}\mathrm{dist}\left(x_{k},\Omega\right)^{\eta+\delta_{1}}\\
&\leq \widehat{\beta}_3\mathrm{dist}\left(x_{k},\Omega\right)^{\widehat{\delta}_{3}},
\end{align*}
where $\widehat{\beta}_3:=L\lambda\beta_{1}^\upsilon(\beta_1+1)+\widehat{L}L\beta_{1}^{1+\upsilon}{(1+\upsilon)}^{-1}
+\left(\xi_{\max}+\omega_{\max}\widehat{L}^\eta\right)\lambda^{\eta}\beta_{1}$ and $\widehat{\delta}_{3}:=\min\{\eta+\delta_1,\delta_1(1+\upsilon)\}$. Therefore, by~\eqref{eq:dist_bound},
\begin{equation}\label{eq:dist_1}
\begin{split}
\frac{\beta^{\frac{2}{\delta}}}{2}\mathrm{dist}(x_{k+1},\Omega)^{\frac{2}{\delta}}&-\frac{L^2}{2{(1+\upsilon)}^{2}}\mathrm{dist}(x_{k+1},\Omega)^{2(1+\upsilon)}\\
&\leq\widehat{\beta}_3\mathrm{dist}\left(x_{k},\Omega\right)^{\widehat{\delta}_{3}}\mathrm{dist}(x_{k+1},\Omega).
\end{split}
\end{equation}
Since $\delta>\frac{1}{1+\upsilon}$, we have by~\eqref{eq:bound_dist} that
$$\frac{L^2}{2{(1+\upsilon)}^{2}}\mathrm{dist}(x_{k+1},\Omega)^{2(1+\upsilon)-\frac{2}{\delta}}\leq\frac{\beta^{\frac{2}{\delta}}}{4}.$$
Finally, by~\eqref{eq:dist_1}, we deduce
$$\frac{\beta^{\frac{2}{\delta}}}{4}\mathrm{dist}(x_{k+1},\Omega)^{\frac{2}{\delta}-1}\leq
\widehat{\beta}_3\mathrm{dist}\left(x_{k},\Omega\right)^{\widehat{\delta}_{3}},
$$
whence,
$$\mathrm{dist}(x_{k+1},\Omega)\leq
\beta_3\mathrm{dist}\left(x_{k},\Omega\right)^{\delta_{3}},
$$
where $\beta_3:=\frac{4\widehat{\beta}_3}{\beta^{\frac{2}{\delta}}}$ and
$\delta_3:=\frac{\widehat{\delta}_3\delta}{2-\delta}$. Since the expression for $\delta_3$ coincides with~\eqref{eq:delta_3}, the proof is complete.\qed
\end{proof}

\begin{remark}\label{rem:requirements}
(i) The bounds given by~\eqref{eq:distineq} and~\eqref{eq:distineq2} are usually employed
to analyse the rate of convergence of the sequence $\left\{ x_{k}\right\}$
generated by~\ref{ALGORITHM-1:llm}. Observe that the values of $\delta_2$ and $\delta_3$ when $\xi_{\min}>0$ are greater or equal than their respective values when $\xi_{\min}=0$. A larger value of $\delta_{2}$ or $\delta_3$ would serve us to derive a better rate of convergence. To deduce a convergence result from Proposition~\ref{distance_estimate}, one needs to have $\delta_2> 1$. This holds if and only if $\delta>\frac{1}{1+\upsilon}$ and $\eta\in\left]\frac{2}{\delta}-2,\frac{1}{\varpi}\left(2\delta(1+\upsilon)-\frac{2}{1+\upsilon}\right)\right[$, which imposes an additional requirement on the value of $\delta$ (to have a nonempty interval). For instance, when $\upsilon=1$, one must have $\delta>\frac{-1+\sqrt{33}}{8}$ if $\xi_{\min}>0$ and $\delta>\frac{-5+\sqrt{57}}{4}$ if $\xi_{\min}=0$.
On the other hand, to guarantee that $\delta_3> 1$, a stronger requirement would be needed, namely, $\delta>\frac{2}{2+\upsilon}\geq\frac{2}{3}$ and $\eta\in\left]\frac{2}{\delta}-2,
\frac{1}{\varpi}\left(2\delta(1+\upsilon)-\frac{4-2\delta}{1+\upsilon}\right)\right[$. Nonetheless, it is important to observe that if $\delta=1$ one has that $\delta_3=1+\upsilon$ when $\eta\in[\upsilon,2\upsilon/\varpi]$, while $\delta_2=1+\upsilon$ only if $\eta=2\upsilon$ and $\varpi=1$. Therefore, if $\upsilon=\delta=1$, we can derive from Proposition~\ref{distance_estimate2} the quadratic convergence of the sequence for $\eta\in[1,2]$, which can only be guaranteed for $\eta=2$ by Proposition~\ref{distance_estimate}. In Figure~\ref{fig:comparison_deltas}, we plot the values of $\delta_2$ in Proposition~\ref{distance_estimate} and $\delta_3$ in Proposition~\ref{distance_estimate2} when $\upsilon=1$ and $\xi_{\min}>0$.

\begin{figure}[ht!]
\centering%
\subfigure{\label{fig:AnotherFig}{\includegraphics[width=.54\textwidth]{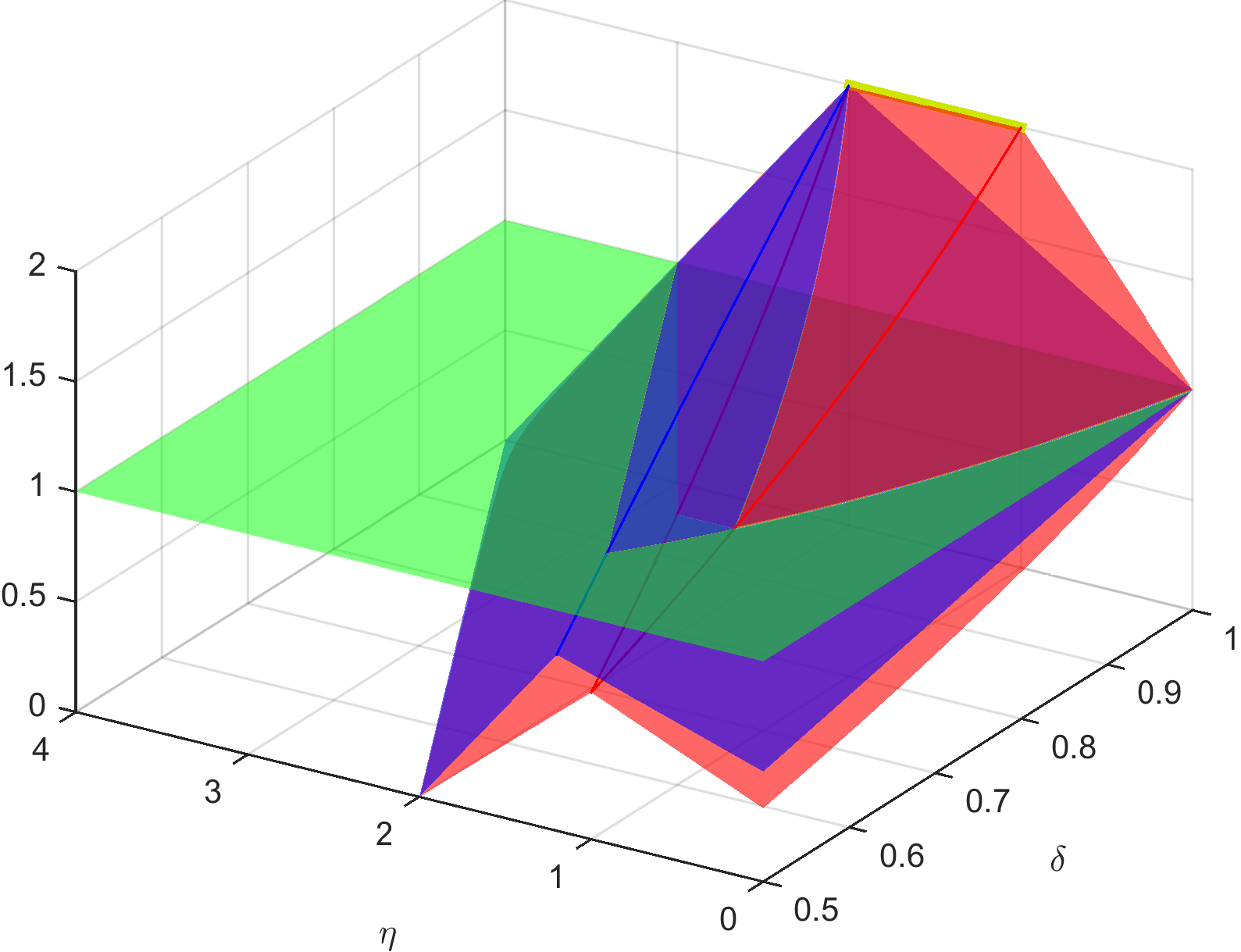}}}\quad
\subfigure{{\includegraphics[width=.43\textwidth]{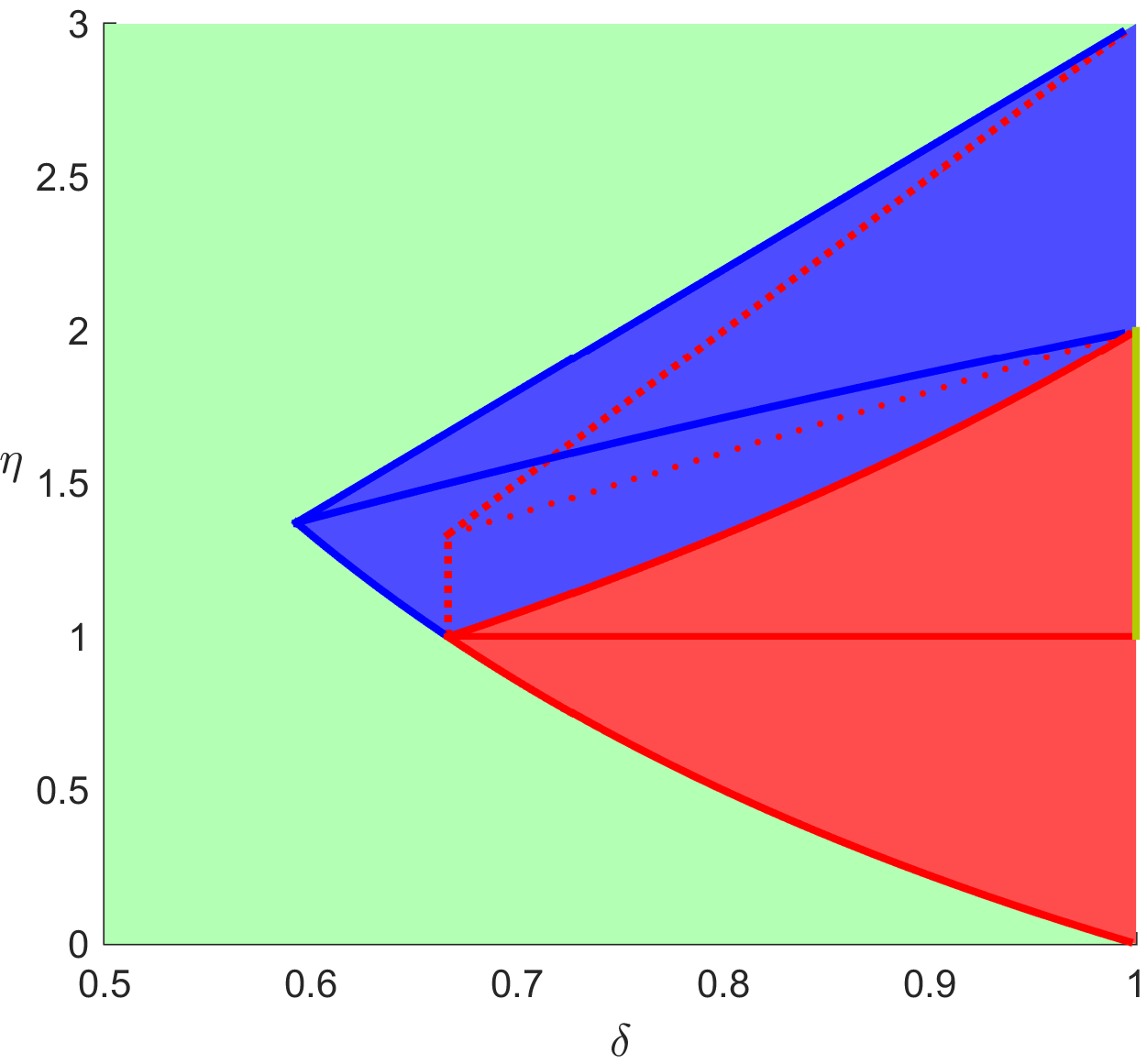}}\label{fig:AnotherFig2}}

\caption{For $\upsilon=1$, $\xi_{\min}>0$, $\delta\in\left[\frac{1}{2},1\right]$ and $\eta\in[0,4\delta]$, plot of $\delta_{2}=\min\left\{ 2\delta,\delta+\frac{\delta\eta}{2},\,4\delta-\eta\right\}$ (in blue) and $\delta_3=\min\left\{\frac{4\delta-\eta}{2-\delta},\frac{(\eta+1)\delta}{2-\delta},
\frac{2\delta}{2-\delta}\right\}$ (in red).}\label{fig:comparison_deltas}
\end{figure}
\noindent (ii) The values of $\delta_2$ and $\delta_3$ are maximised when $\eta=\frac{2\upsilon\delta(2+\upsilon)}{\delta+\varpi(1+\upsilon)}$ and $\eta\in\left[\upsilon,\frac{2\upsilon\delta}{\varpi}\right]$, respectively, in which case $\delta_{2}=  \delta+{\frac {\upsilon\delta^2 \left( 2+\upsilon \right) }{\delta+\varpi(1+\upsilon)}}$ and $\delta_3=\frac{(1+\upsilon)\delta}{2-\delta}$, having then $\delta_2\leq\delta_3$.

\end{remark}

\begin{remark}\label{rem:warning}
In light of Proposition~\ref{prop:about_delta}, the extent of the results that can be derived from Propositions~\ref{distance_estimate} and~\ref{distance_estimate2} is rather reduced when $x^*$ is an isolated solution and $\nabla h(x^*)$ is not full rank, since it imposes $\delta \le \frac{1}{1+\upsilon}$. Note that the function $F_S$ given as an example in~\cite[Section~5]{Guo2015} is H\"older metrically subregular of order \mbox{$\delta=\frac{5}{6}>0.5$}, but $\nabla F_S$ is not Lipschitz continuous around any zero of the function, so it does not satisfy~\ref{(A2)} for $\upsilon=1$ (and, therefore, it does not satisfy~\cite[Assumption~4.1]{Guo2015} either).
However, with the additional assumption that the \L{}ojasiewicz gradient inequality~\eqref{eq:Lojasiewicz_Gradient_Inequality} holds, we
will obtain local convergence for all $\delta \in {]0,1]}$ (see Theorem~\ref{thm:convergence_of_dist(x,Omega)_general}).
\end{remark}

Next, we proceed to derive the main result of this section from Propositions~\ref{distance_estimate} and~\ref{distance_estimate2}, where we provide a region from which the parameter $\eta$ must be chosen so that superlinear convergence is guaranteed. Recall that a sequence $\left\lbrace z_k \right\rbrace $ is said to converge \emph{superlinearly} to $z^*$ with order $q>1$  if $z_k$ converges to $z^*$ and there exists $K > 0$ such that $\|z_{k+1}-z^*\| \leq K \|z_k -z^*\|^{q}$ for all $k$ sufficiently large.

\begin{theorem}
\label{thm:dist_con_sup} Assume that $\delta>\frac{1}{1+\upsilon}$ and $\eta\in\left ]\frac{2}{\delta}-2,\frac{1}{\varpi}\left(2\delta(1+\upsilon)-\frac{2}{1+\upsilon}\right)\right[$.
Then, there exists some $\overline{r}>0$  such that, for every
sequence $\left\{ x_{k}\right\} $ generated by~\ref{ALGORITHM-1:llm} with $x_{0}\in\mathbb{B}(x^{*},\overline{r})$,
one has that $\left\{ \mathrm{dist}(x_{k},\Omega)\right\} $ is superlinearly convergent to
$0$ with order $\delta_{2}$ given by~\eqref{eq:delta_2}.
Further, the sequence $\left\{ x_{k}\right\} $
converges to a solution $\overline{x}\in \Omega \cap\mathbb{B}(x^{*},\widetilde{r})$, and if $\eta\leq \frac{2\upsilon\delta}{\varpi}$, its rate of convergence is also superlinear with order $\delta_{2}$.
Moreover, if $\delta>\frac{2}{2+\upsilon}$ and
$\eta<
\frac{1}{\varpi}\left(2(1+\upsilon)\delta-\frac{4-2\delta}{1+\upsilon}\right)$, all the latter holds with order  $\delta_3$ given by~\eqref{eq:delta_3}.
\end{theorem}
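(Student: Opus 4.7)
The plan is to combine the three distance estimates already established: Proposition~\ref{prop:dk_estimate} controls the length of each step, while Propositions~\ref{distance_estimate} and~\ref{distance_estimate2} provide the one-step contractions on $\mathrm{dist}(x_k,\Omega)$. First I would check that under the assumed range of $\eta$ both exponents $\delta_2$ and $\delta_3$ strictly exceed~$1$: this is precisely the algebraic content of Remark~\ref{rem:requirements}(i) and follows by plugging the endpoints of the interval for $\eta$ into~\eqref{eq:delta_2} and~\eqref{eq:delta_3}. This is the purely computational part and I would relegate it to a short verification.

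Next I would fix a radius $\overline{r}\in {]0,\widetilde{r}[}$ small enough so that whenever $x_0\in\mathbb{B}(x^*,\overline{r})$, an inductive argument keeps every iterate inside $\mathbb{B}(x^*,\widetilde{r})$. The induction uses the bounds $\|d_k\|\le\beta_1\mathrm{dist}(x_k,\Omega)^{\delta_1}$ and $\mathrm{dist}(x_{k+1},\Omega)\le\beta_2\mathrm{dist}(x_k,\Omega)^{\delta_2}$ with $\delta_2>1$: choosing $\overline{r}$ so that $\beta_2\overline{r}^{\,\delta_2-1}<\tfrac12$ and $\sum_{k\ge 0}\beta_1(2^{-k}\overline{r})^{\delta_1}<\widetilde r-\overline r$ forces both $\mathrm{dist}(x_k,\Omega)$ to halve at each step and $\|x_k-x^*\|\le\overline{r}+\sum_{j<k}\|d_j\|<\widetilde{r}$. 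Once the iterates are trapped, Proposition~\ref{distance_estimate} immediately gives $\mathrm{dist}(x_{k+1},\Omega)\le\beta_2\mathrm{dist}(x_k,\Omega)^{\delta_2}$ for every $k$, which is the claimed superlinear convergence of the distances with order~$\delta_2$.

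To pass from convergence of distances to convergence of the iterates themselves, I would note that the decay of $\mathrm{dist}(x_k,\Omega)$ is faster than geometric, so there exists a constant $M>0$ with
\begin{equation*}
\sum_{j\ge k}\mathrm{dist}(x_j,\Omega)^{\delta_1}\le M\,\mathrm{dist}(x_k,\Omega)^{\delta_1}
\end{equation*}
for all large $k$. Combined with $\|d_j\|\le\beta_1\mathrm{dist}(x_j,\Omega)^{\delta_1}$ this shows $\{x_k\}$ is Cauchy and hence converges to some limit $\overline{x}$, which lies in $\Omega$ because $\mathrm{dist}(x_k,\Omega)\to 0$, and in $\mathbb{B}(x^*,\widetilde{r})$ by the induction. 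When additionally $\eta\le 2\upsilon\delta/\varpi$, the definition of $\delta_1$ gives $\delta_1=1$, so the tail bound reads $\|x_k-\overline{x}\|\le\sum_{j\ge k}\|d_j\|\le M\beta_1\,\mathrm{dist}(x_k,\Omega)$; using $\mathrm{dist}(x_k,\Omega)\le\|x_k-\overline{x}\|$ on the right and the one-step estimate on the left yields
\begin{equation*}
\|x_{k+1}-\overline{x}\|\le M\beta_1\,\mathrm{dist}(x_{k+1},\Omega)\le M\beta_1\beta_2\,\mathrm{dist}(x_k,\Omega)^{\delta_2}\le M\beta_1\beta_2\,\|x_k-\overline{x}\|^{\delta_2},
\end{equation*}
which is superlinear convergence of the sequence with the same order~$\delta_2$. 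The statement for $\delta_3$ follows by repeating the entire argument with Proposition~\ref{distance_estimate2} in place of Proposition~\ref{distance_estimate} and observing that the hypotheses $\delta>2/(2+\upsilon)$ and the prescribed range of $\eta$ yield $\delta_3>1$.

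The main technical obstacle I anticipate is the bookkeeping needed to choose $\overline{r}$ so that the induction closes, because the direction bound involves $\delta_1$ (which can be strictly smaller than~$1$) while the distance contraction uses the larger exponent $\delta_2$; the two must be reconciled by exploiting $\delta_2>1$ to dominate the geometric-type series of step lengths. A secondary subtlety is the passage from distance rate to iterate rate: this relies on the tail-sum-of-a-superlinearly-convergent-sequence being comparable to its first term, which is clean when $\delta_1=1$ but would give a strictly worse order for the iterates otherwise; this is precisely why the theorem requires $\eta\le 2\upsilon\delta/\varpi$ to transfer the order to the sequence.
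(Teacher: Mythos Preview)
Your proposal is correct and follows essentially the same approach as the paper's proof: trap the iterates in $\mathbb{B}(x^*,\widetilde{r})$ by an induction combining Proposition~\ref{prop:dk_estimate} with Proposition~\ref{distance_estimate}, deduce superlinear convergence of the distances, show $\{x_k\}$ is Cauchy via the summable step lengths, and transfer the rate to the iterates when $\delta_1=1$. The only real difference is bookkeeping: the paper iterates the contraction to get the sharper double-exponential bound $\mathrm{dist}(x_i,\Omega)\le 2\overline{r}\,(1/2)^{\delta_2^i}$ and sums $\sum_i(1/2)^{\delta_1\delta_2^i}$, whereas you work with the simpler geometric bound $\mathrm{dist}(x_k,\Omega)\le 2^{-k}\overline{r}$ obtained from $\beta_2\overline{r}^{\,\delta_2-1}<\tfrac12$; both choices close the induction and your version is arguably cleaner.
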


\begin{proof}
We assume that $x_k\not\in\Omega$ for all $k$ (otherwise, the statement trivially holds).
Let $\delta_1$, $\beta_1$ be defined as in Proposition~\ref{prop:dk_estimate} and $\delta_2$, $\beta_2$  be defined as in the proof of Proposition~\ref{distance_estimate}. Since $\delta_2>1$, we have that $\delta_1\delta_2^i>i$ for all $i$ sufficiently large. As $\sum_{i=1}^\infty \left(\frac{1}{2}\right)^i=1$, we deduce that
\begin{equation}\label{def:sigma}
\sigma:=\sum_{i=1}^\infty\left(\frac{1}{2}\right)^{\delta_1\delta_2^i}<\infty.
\end{equation}
Define
\[
\overline{r}:=\mbox{min}\left\{ \frac{1}{2}\left(\beta_{2}\right)^{\frac{-1}{\delta_{2}-1}},\left(\frac{\widetilde{r}}{1+\beta_{1}+2^{\delta_{1}}\beta_{1}\sigma}\right)^{\frac{1}{\delta_{1}}}\right\} .
\]
Note that $\overline{r}\in{]0,\widetilde{r}[}$, because $\widetilde{r}\in{]0,1[}$ and $\delta_1\leq 1$.

Pick any $x_{0}\in\mathbb{B}(x^{*},\overline{r})$ and let $\left\{ x_{k}\right\} $ be an infinite sequence generated by~\ref{ALGORITHM-1:llm}. First, we will show by induction that $x_k\in\mathbb{B}(x^*,\widetilde{r})$. It follows from $\overline{r}<1$
and~\eqref{dk_estimate} that
\begin{equation}\label{eq:x1_estimate}
\begin{split}
\|x_{1}-x^{*}\| & =\|x_{0}+d_{0}-x^{*}\|\leq\|x_{0}-x^{*}\|+\|d_{0}\|
 \leq\overline{r}+\beta_{1}\mathrm{dist}\left(x_{0},\Omega\right)^{\delta_{1}}\\
 &\leq\overline{r}^{\delta_{1}}+\beta_{1}\|x_{0}-x^{*}\|^{\delta_{1}} \leq(1+\beta_{1})\overline{r}^{\delta_{1}}\leq \widetilde{r}.
\end{split}
\end{equation}
Let us assume now that
$x_{i}\in\mathbb{B}(x^{*},\widetilde{r})$ for $i=1,2,\dots,k$. Then, from Proposition
\ref{distance_estimate} and the definition of $\overline{r}$, we
have
\begin{align*}
\mathrm{dist}(x_{i},\Omega) & \leq\beta_{2}\mathrm{dist}(x_{i-1},\Omega)^{\delta_{2}}\leq\beta_{2}^{1+\delta_{2}}\mathrm{dist}(x_{i-2},\Omega)^{\delta_{2}^{2}}\\
&\leq\ldots\leq\beta_{2}^{\sum_{j=0}^{i-1}\delta_{2}^{j}}\mathrm{dist}(x_{0},\Omega)^{\delta_{2}^{i}}\\
 & \leq\beta_{2}^{\sum_{j=0}^{i-1}\delta_{2}^{j}}\|x_{0}-x^{*}\|^{\delta_{2}^{i}}=\beta_{2}^{\frac{\delta_{2}^{i}-1}{\delta_{2}-1}}\|x_{0}-x^{*}\|^{\delta_{2}^{i}}\\
 &\leq\left(\frac{1}{2\overline{r}}\right)^{\delta_{2}^{i}-1}\overline{r}^{\delta_{2}^{i}}=2\overline{r}\left(\frac{1}{2}\right)^{\delta_{2}^{i}},
\end{align*}
which yields
\begin{equation}
\mathrm{dist}(x_{i},\Omega)^{\delta_{1}}\leq\left(2\overline{r}\right)^{\delta_{1}}\left(\frac{1}{2}\right)^{\delta_{1}\delta_{2}^{i}}.\label{eq:ball_constraint}
\end{equation}
The latter inequality, together with~\eqref{dk_estimate},~\eqref{def:sigma} and~\eqref{eq:x1_estimate}, implies
\begin{align*}
\|x_{k+1}-x^{*}\| & \leq\|x_{1}-x^{*}\|+\sum_{i=1}^{k}\|d_{i}\|\leq(1+\beta_{1})\overline{r}^{\delta_{1}}+\beta_{1}\sum_{i=1}^{k}\mathrm{dist}(x_{i},\Omega)^{\delta_{1}}\\
 & \leq(1+\beta_{1})\overline{r}^{\delta_{1}}+\beta_{1}\left(2\overline{r}\right)^{\delta_{1}}\sum_{i=1}^{k}\left(\frac{1}{2}\right)^{\delta_{1}\delta_{2}^{i}}\\
 &<(1+\beta_{1})\overline{r}^{\delta_{1}}+\beta_{1}\left(2\overline{r}\right)^{\delta_{1}}\sum_{i=1}^{\infty}\left(\frac{1}{2}\right)^{\delta_{1}\delta_{2}^{i}}\\
 & =(1+\beta_{1})\overline{r}^{\delta_{1}}+\beta_{1}\left(2\overline{r}\right)^{\delta_{1}}\sigma=\left(1+\beta_{1}+2^{\delta_{1}}\beta_{1}\sigma\right)\overline{r}^{\delta_{1}}\leq\widetilde{r},
\end{align*}
which completes the induction. Thus, we have shown that $x_{k}\in\mathbb{B}(x^{*},\widetilde{r})$ for all $k$, as claimed.

From Proposition~\ref{distance_estimate}, we obtain that $\left\{ \mathrm{dist}(x_{k},\Omega)\right\} $
is superlinearly convergent to $0$. Further, it follows from~\eqref{dk_estimate} and~\eqref{eq:ball_constraint}
that
\[
\sum_{i=1}^{\infty}\|d_{i}\|\leq\beta_{1}\sum_{i=1}^{\infty}\mathrm{dist}(x_{i},\Omega)^{\delta_{1}}\leq\beta_{1}\sigma\left(2\overline{r}\right)^{\delta_{1}}<\infty.
\]
Denoting by $s_{k}:=\sum_{i=1}^{k}\|d_{i}\|$, we have that $\left\{ s_{k}\right\} $
is a Cauchy sequence. Then, for any $k,p\in\mathbb{N}\cup\{0\}$,
we have
\begin{equation}\label{eq:bound_xkp}
\begin{split}
\|x_{k+p}-x_{k}\| &\leq \|d_{k+p-1}\|+\|x_{k+p-1}-x_{k}\|\\
&\leq\ldots\leq\sum_{i=k}^{k+p-1}\|d_{i}\|= s_{k+p-1}-s_{k-1},
\end{split}
\end{equation}
which implies that $\left\{ x_{k}\right\} $
is also a Cauchy sequence.
Thus, the sequence $\left\{ x_{k}\right\} $ converges to some~$\overline{x}$. Since $x_{k}\in\mathbb{B}(x^{*},\widetilde{r})$
for all $k$ and $\left\{ \mathrm{dist}(x_{k},\Omega)\right\} $ converges
to $0$, we have $\overline{x}\in\Omega\cap\mathbb{B}(x^{*},\widetilde{r})$.

Further, if $\eta\leq \frac{2\upsilon\delta}{\varpi}$ we have $\delta_1=1$ in Proposition~\ref{prop:dk_estimate}, and by letting $p\to\infty$ in~\eqref{eq:bound_xkp}, we deduce
\begin{align*}
\|\overline{x}-x_k\|&\leq\sum_{i=k}^\infty\|d_i\|\leq\beta_1\sum_{i=k}^{\infty}\mathrm{dist}(x_{i},\Omega).
\end{align*}
Since $\{\mathrm{dist}(x_{k},\Omega)\}$ is superlinearly convergent to zero, for all $k$ sufficiently large, it holds that $\mathrm{dist}(x_{k+1},\Omega)\leq\frac{1}{2}\mathrm{dist}(x_{k},\Omega)$. Therefore, for $k$ sufficiently large, we have
\begin{align*}
\|x_k-\overline{x}\|&\leq \beta_1\sum_{i=k}^{\infty}\frac{1}{2^{i-k}}\mathrm{dist}(x_{k},\Omega)\leq 2\beta_1\mathrm{dist}(x_{k},\Omega)
\leq 2\beta_1\beta_{2}\mathrm{dist}(x_{k-1},\Omega)^{\delta_{2}}\\
&\leq 2\beta_1\beta_{2}\|x_{k-1}-\overline{x}\|^{\delta_{2}},
\end{align*}
which proves the superlinear convergence of $x_k$ to $\overline{x}$ with order $\delta_{2}$.

Finally, the last assertion follows by the same argumentation, using $\delta_3$, $\beta_3$ and Proposition~\ref{distance_estimate2} instead of $\delta_2$, $\beta_2$ and Proposition~\ref{distance_estimate}, respectively.\qed
\end{proof}

\begin{remark}\label{rem:ZhuLin}
Our results above generalise the results in~\cite{Guo2015,ZhuLin16}, not only because in these works they assume $\nabla h$ to be Lipschitz continuous (i.e., $\upsilon=1$), but also because the parameter~$\mu_k$ considered by these authors is equal to $\xi\|h(x_k)\|^\eta$. Furthermore, in their convergence results, cf.~\cite[Theorem~4.1 and Theorem~4.2]{Guo2015} and~\cite[Theorem~2.1 and Theorem~2.2]{ZhuLin16}, the authors assume
$\delta>\max\left\{\frac{2}{3},\frac{2+\eta}{5}\right\}$ and $\delta>\max\left\{\frac{\sqrt{8\eta+1}+4\eta+1}{16},\frac{2}{2+\eta},\frac{1}{2+\eta}+\frac{\eta}{4},\frac{\eta+1}{4}\right\}>\frac{\sqrt{5}-1}{2}$,
respectively, which both entail $\delta>\frac{-1+\sqrt{33}}{8}$, so we have slightly improved the lower bound on $\delta$ for the superlinear convergence in Theorem~\ref{thm:dist_con_sup}.
\end{remark}

As a direct consequence of Theorem~\ref{thm:dist_con_sup}, whenever $\delta=\upsilon=1$ and $\eta\in[1,2]$, we can derive quadratic convergence of the sequence generated by~\ref{ALGORITHM-1:llm}.

\begin{corollary}
\label{cor:dist_con_sup} Assume that $\delta=1$ and $\eta\in{]0, 2\upsilon]}$. Then, there exists $\overline{r}>0$  such that for every
sequence~$\left\{ x_{k}\right\} $ generated by~\ref{ALGORITHM-1:llm} with $x_{0}\in\mathbb{B}(x^{*},\overline{r})$,
one has that $\left\{ \mathrm{dist}(x_{k},\Omega)\right\} $ is superlinearly convergent to~$0$ with order
$$\delta_3=\left\{\begin{array}{ll}
1+\eta,&\text{ if }\eta\leq\upsilon,\\
1+\upsilon,&\text{ if }\eta\geq\upsilon.
\end{array}\right.
$$
Moreover, the sequence $\left\{ x_{k}\right\} $
converges superlinearly with order $\delta_3$ to a solution $\overline{x}\in \Omega \cap\mathbb{B}(x^{*},\widetilde{r})$.
Therefore, when $\upsilon=1$ and $\eta\in[1,2]$, the sequence $\left\{ \mathrm{dist}(x_{k},\Omega)\right\} $ is quadratically convergent to $0$, and the sequence $\left\{ x_{k}\right\}$
converges quadratically to a solution $\overline{x}\in \Omega \cap\mathbb{B}(x^{*},\widetilde{r})$.
\end{corollary}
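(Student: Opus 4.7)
The plan is to deduce this corollary directly from Theorem~\ref{thm:dist_con_sup} by setting $\delta=1$ and simplifying the resulting expression~\eqref{eq:delta_3} for $\delta_3$. First I would verify that the hypotheses of Theorem~\ref{thm:dist_con_sup} hold. Since $\delta=1$, the condition $\delta>\frac{1}{1+\upsilon}$ is immediate for any $\upsilon\in{]0,1]}$. Moreover, regardless of whether $\xi_{\min}>0$ or $\xi_{\min}=0$, the definition of $\varpi$ gives $\varpi=1$ when $\delta=1$ (since $2-\delta=1$ in the second case). The interval appearing in Theorem~\ref{thm:dist_con_sup} then becomes
$$\left]\frac{2}{\delta}-2,\,\frac{1}{\varpi}\!\left(2\delta(1+\upsilon)-\frac{2}{1+\upsilon}\right)\right[=\left]0,\,\frac{2\upsilon(2+\upsilon)}{1+\upsilon}\right[,$$
and this contains $]0,2\upsilon]$ since $1+\upsilon<2+\upsilon$. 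For the stronger conclusion involving $\delta_3$, one similarly needs $\delta>\frac{2}{2+\upsilon}$ (which holds for $\delta=1$ and $\upsilon>0$) and $\eta<\frac{1}{\varpi}(2(1+\upsilon)\delta-\frac{4-2\delta}{1+\upsilon})=\frac{2\upsilon(2+\upsilon)}{1+\upsilon}$, which is again guaranteed by $\eta\leq 2\upsilon$.

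Next I would substitute $\delta=1$ and $\varpi=1$ in~\eqref{eq:delta_3}, obtaining
$$\delta_3=\min\left\{1+\eta,\,1+\upsilon,\,1+\upsilon+\tfrac{\eta}{2},\,(1+\upsilon)\!\left(1+\upsilon-\tfrac{\eta}{2}\right)\right\}.$$
The third term is always $\geq 1+\upsilon$ and can be discarded. The fourth term satisfies $(1+\upsilon)(1+\upsilon-\tfrac{\eta}{2})\geq 1+\upsilon$ if and only if $\eta\leq 2\upsilon$, which holds by hypothesis. Hence $\delta_3=\min\{1+\eta,\,1+\upsilon\}$, which matches the piecewise formula in the statement: $\delta_3=1+\eta$ when $\eta\leq\upsilon$ and $\delta_3=1+\upsilon$ when $\eta\geq\upsilon$. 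The superlinear convergence of both $\{\mathrm{dist}(x_k,\Omega)\}$ and $\{x_k\}$ to a solution $\overline{x}\in\Omega\cap\mathbb{B}(x^*,\widetilde{r})$ with order $\delta_3$ then follows immediately from Theorem~\ref{thm:dist_con_sup}.

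For the final assertion, I would specialise to $\upsilon=1$ and $\eta\in[1,2]$. In this range we have $\eta\geq\upsilon=1$, so $\delta_3=1+\upsilon=2$, giving quadratic convergence of both sequences. There is essentially no obstacle here beyond careful bookkeeping: one must keep track of the two possible values of $\varpi$ to ensure the hypotheses of Theorem~\ref{thm:dist_con_sup} are met in both the $\xi_{\min}>0$ and $\xi_{\min}=0$ regimes, and then verify that the four-term minimum defining $\delta_3$ collapses to the stated two-term minimum precisely when $\eta\leq 2\upsilon$.
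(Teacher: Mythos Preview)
Your proposal is correct and follows the same approach as the paper, which simply states that the corollary is a direct consequence of Theorem~\ref{thm:dist_con_sup}. Your explicit verification that $\varpi=1$ in both cases, that $]0,2\upsilon]$ lies in the required interval, and that the four-term minimum in~\eqref{eq:delta_3} collapses to $\min\{1+\eta,1+\upsilon\}$ precisely when $\eta\leq 2\upsilon$ fills in the bookkeeping the paper omits; note also that the hypothesis $\eta\leq 2\upsilon$ coincides with the condition $\eta\leq\frac{2\upsilon\delta}{\varpi}$ in Theorem~\ref{thm:dist_con_sup} needed for the superlinear rate of $\{x_k\}$ itself.
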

\begin{remark}
In particular, Corollary~\ref{cor:dist_con_sup} generalizes~\cite[Theorem~3.7]{MJ07}, where the authors prove quadratic convergence of the sequence $\{x_k\}$ by assuming $\delta=\upsilon=1$, and where the parameters in~\eqref{eq:muk} are chosen as $\eta=1$, $\xi_k=\theta\in{[0,1]}$ and $\omega_k=1-\theta$, for all $k$.
\end{remark}

\begin{example}[Example~\ref{ex:new} revisited]
Let $h$ and $\widehat{h}$ be the functions defined in Example~\ref{ex:new}. The function $h$ does not satisfy the assumptions of Theorem~\ref{thm:dist_con_sup}, since $\delta=\frac{1}{1+\upsilon}$. On the other hand, if $\widehat{\eta}\in\left]0,\frac{7}{6}\right[$ and the starting point $x_0$ is chosen sufficiently close to~$0$, Theorem~\ref{thm:dist_con_sup} proves for the function $\widehat{h}$ the superlinear convergence of the sequence generated by~\ref{ALGORITHM-1:llm} to $0$ with order
$$\delta_3=\left\{\begin{array}{ll}
1+\widehat{\eta},&\text{ if }0<\widehat{\eta}<\frac{1}{3},\\
\frac{4}{3},&\text{ if }\frac{1}{3}\leq \widehat{\eta}\leq\frac{2}{3},\\
\frac{4}{3}\left(\frac{4}{3}-\frac{\widehat{\eta}}{2}\right),&\text{ if }\frac{2}{3}< \widehat{\eta}<\frac{7}{6}.
\end{array}\right.
$$
Note that, since the solution is locally unique, the additional assumption $\widehat{\eta}\leq 2\widehat{\upsilon}\hspace{1pt}\widehat{\delta}=\frac{2}{3}$ is not needed.
The order of convergence $\delta_3$ is thus maximised when $\widehat{\eta}\in\left[\frac{1}{3},\frac{2}{3}\right]$.\qede
\end{example}

The question of whether the sequence $\left\{ \mathrm{dist}(x_{k},\Omega)\right\} $ converges to $0$ when $\delta$ does not satisfy the requirements commented in Remark~\ref{rem:requirements}(i) remains open. However, with the additional assumption that $\psi$ satisfies the \L{}ojasiewicz gradient inequality (which holds for real analytic functions), we can prove that the sequences $\left\{ \mathrm{dist}(x_{k},\Omega)\right\} $ and $\left\{ \psi(x_k)\right\} $ converge to $0$ for all $\delta \in (0,1]$ as long as the parameter $\eta$ is sufficiently small, and we can also provide a rate of convergence that depends on the exponent of the \L{}ojasiewicz gradient inequality. This is the subject of the next subsection.

\subsection{Convergence analysis under the \L{}ojasiewicz gradient inequality }\label{Loja_added}
To prove our convergence result, we make use of the following two lemmas.

\begin{lemma}
\label{lem:rate_convergence} Let~$\left\{ s_{k}\right\} $ be a nonnegative real sequence
and let~$\alpha,\vartheta$ be some nonnegative constants.
Suppose that $s_{k}\to0$ and that the sequence satisfies
\begin{equation*}
s_{k}^{\alpha}\leq \vartheta(s_{k}-s_{k+1}),\quad\text{for all }k\text{ sufficiently large.}\label{eq:ineq_seq}
\end{equation*}
Then
\begin{enumerate}
\item if~$\alpha=0$, the sequence~$\left\{ s_{k}\right\} $ converges
to~$0$ in a finite number of steps;
\item if~$\alpha\in\left ]0,1\right]$, the sequence~$\left\{ s_{k}\right\} $
converges linearly to~$0$ with rate~$1-\frac{1}{\vartheta}$;
\item if~$\alpha>1$, there exists~$\varsigma>0$ such that
\[
s_{k}\leq\varsigma k^{-\frac{1}{\alpha-1}},\quad\text{for all }k\text{ sufficiently large.}
\]
\end{enumerate}
\end{lemma}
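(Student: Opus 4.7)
The plan is to split into the three cases $\alpha=0$, $\alpha\in{]0,1]}$, and $\alpha>1$, and in each case extract the claimed rate directly from the inequality $s_k^\alpha\leq \vartheta(s_k-s_{k+1})$. A preliminary observation I would make in all cases: the inequality forces $s_k-s_{k+1}\geq 0$ eventually, so $\{s_k\}$ is non-increasing for $k$ large.

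For case (i), $\alpha=0$, the inequality reads $1\leq \vartheta(s_k-s_{k+1})$ as long as $s_k>0$. Since the sum $\sum_k(s_k-s_{k+1})$ telescopes to a finite value (because $s_k\to 0$), only finitely many indices $k$ can satisfy $s_k-s_{k+1}\geq 1/\vartheta$. Hence $s_k=0$ from some index onward, which is finite termination.

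For case (ii), $\alpha\in{]0,1]}$, I would use $s_k\to 0$ to pass to a tail on which $s_k\leq 1$, so that $s_k\leq s_k^\alpha$. Plugging this into the hypothesis yields $s_k\leq \vartheta(s_k-s_{k+1})$, which rearranges to $s_{k+1}\leq\bigl(1-\tfrac{1}{\vartheta}\bigr)s_k$ and thus gives linear convergence with rate $1-1/\vartheta$ (note that the inequality forces $\vartheta\geq 1$, since otherwise the right side is negative and $s_k$ must already be $0$).

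Case (iii), $\alpha>1$, is the technical heart of the lemma. Here I would introduce the auxiliary sequence $u_k:=s_k^{1-\alpha}$ (well defined as long as $s_k>0$; otherwise $s_k$ is eventually $0$ and the conclusion is trivial). Since $1-\alpha<0$, the function $g(t):=t^{1-\alpha}$ is convex and decreasing on $]0,\infty[$. Convexity gives
\[
u_{k+1}-u_k = g(s_{k+1})-g(s_k)\ \geq\ g'(s_k)(s_{k+1}-s_k)\ =\ (\alpha-1)s_k^{-\alpha}(s_k-s_{k+1}).
\]
Combining with the hypothesis $s_k-s_{k+1}\geq s_k^\alpha/\vartheta$ collapses the powers of $s_k$ and yields $u_{k+1}-u_k\geq (\alpha-1)/\vartheta$. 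Telescoping from some index $k_0$ on, $u_k\geq u_{k_0}+(k-k_0)(\alpha-1)/\vartheta$, so $u_k$ grows at least linearly in $k$. Inverting the substitution gives $s_k=u_k^{-1/(\alpha-1)}\leq \varsigma k^{-1/(\alpha-1)}$ for a suitable $\varsigma>0$ and all $k$ large.

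The main subtlety I anticipate is selecting the correct inequality direction in case (iii): using convexity with the wrong endpoint yields the reverse estimate and loses the argument, so the key is to apply the supporting-line inequality at $s_k$ (not $s_{k+1}$) so that the factor $s_k^{-\alpha}$ appears and cancels $s_k^\alpha$ cleanly. Beyond that, the arguments are essentially bookkeeping: dealing with the possibility that $s_k=0$ at some index (handled by finite termination) and absorbing the initial indices into the constant~$\varsigma$.
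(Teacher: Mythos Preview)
Your argument is correct in all three cases, and the case~(iii) trick of passing to $u_k=s_k^{1-\alpha}$ via the convexity of $t\mapsto t^{1-\alpha}$ is exactly the standard route. The paper itself does not give a proof of this lemma at all: it simply cites \cite[Lemma~1]{artacho_accelerating_2015} and moves on. So there is no ``paper's own proof'' to compare against; you have supplied what the authors chose to import from an earlier reference. Your handling of the edge cases (possible $s_k=0$, the implicit constraint $\vartheta\geq 1$ in case~(ii), absorbing the initial index $k_0$ into the constant $\varsigma$) is adequate for the level of detail expected here.
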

\begin{proof}
See~\cite[Lemma~1]{artacho_accelerating_2015}.\qed
\end{proof}

\begin{lemma}\label{Svaiter_lemma}
 The sequence $\{x_{k}\}$ generated by~\ref{ALGORITHM-1:llm} satisfies
$$ \|d_{k}\|\leq\frac{1}{2\sqrt{\mu_{k}}}\|h(x_{k})\|,$$
and
\begin{equation*}\label{h(x_k)_estimate}
\begin{split}
\|h(x_{k+1})\|^{2} &\leq\|h(x_{k})\|^{2}+d_{k}^{T}\nabla h(x_{k})h(x_{k})\\
&\quad +\|d_k\|^2\left(\frac{L^2}{{(1+\upsilon)}^2} \|d_k\|^{2\upsilon}+\frac{2L}{1+\upsilon} \|h(x_{k})\| \|d_k\|^{\upsilon-1}-\mu_k \right).
\end{split}
\end{equation*}
\end{lemma}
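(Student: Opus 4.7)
The plan is to establish the two inequalities separately, both exploiting the closed-form expression of $d_k$ from the linear system~\eqref{eq:LM_direction} together with the fact that $d_k$ is the unconstrained minimiser of the strongly convex quadratic $\phi_k$.

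For the first bound, I would write $d_k=-\left(A A^T+\mu_k I\right)^{-1}A\,h(x_k)$ with $A:=\nabla h(x_k)$, so that
\[
\|d_k\|\leq \left\|\left(AA^T+\mu_k I\right)^{-1}A\right\|\cdot\|h(x_k)\|.
\]
Using the singular value decomposition $A=U\Sigma V^T$, the nonzero singular values of $(AA^T+\mu_k I)^{-1}A$ are of the form $\sigma/(\sigma^2+\mu_k)$. A one-variable calculus argument shows that the maximum of this function over $\sigma\geq 0$ is attained at $\sigma=\sqrt{\mu_k}$ and equals $\tfrac{1}{2\sqrt{\mu_k}}$. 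This yields the stated bound $\|d_k\|\leq \tfrac{1}{2\sqrt{\mu_k}}\|h(x_k)\|$.

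For the second bound, I would set $e_k:=h(x_{k+1})-h(x_k)-\nabla h(x_k)^T d_k$, so that by~\eqref{gradientInequality} we have $\|e_k\|\leq \tfrac{L}{1+\upsilon}\|d_k\|^{1+\upsilon}$. Expanding,
\[
\|h(x_{k+1})\|^2=\|h(x_k)+\nabla h(x_k)^T d_k\|^2+2\bigl(h(x_k)+\nabla h(x_k)^T d_k\bigr)^T e_k+\|e_k\|^2.
\]
Using the linear system~\eqref{eq:LM_direction} in the form $\|\nabla h(x_k)^T d_k\|^2=-d_k^T\nabla h(x_k) h(x_k)-\mu_k\|d_k\|^2$, the first term simplifies to $\|h(x_k)\|^2+d_k^T\nabla h(x_k) h(x_k)-\mu_k\|d_k\|^2$. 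The key auxiliary estimate for the cross-term is $\|h(x_k)+\nabla h(x_k)^T d_k\|\leq \|h(x_k)\|$, which follows from the minimisation property $\phi_k(d_k)\leq \phi_k(0)=\|h(x_k)\|^2$. Combined with the Cauchy--Schwarz inequality and the bound on $\|e_k\|$, this yields the cross-term estimate $\tfrac{2L}{1+\upsilon}\|h(x_k)\|\|d_k\|^{1+\upsilon}$. The $\|e_k\|^2$ term is directly bounded by $\tfrac{L^2}{(1+\upsilon)^2}\|d_k\|^{2+2\upsilon}$. Factoring out $\|d_k\|^2$ from the last two contributions gives exactly the stated inequality.

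The only slightly subtle point is the sharp constant $\tfrac{1}{2}$ in the first inequality: a naive bound $\mu_k\|d_k\|^2\leq \phi_k(d_k)\leq \|h(x_k)\|^2$ would only give $\|d_k\|\leq \tfrac{1}{\sqrt{\mu_k}}\|h(x_k)\|$, so the SVD-based operator-norm computation (or equivalently, an AM-GM argument on $\sigma^2+\mu_k\geq 2\sigma\sqrt{\mu_k}$) is essential to recover the factor $\tfrac{1}{2}$. The rest of the proof is routine algebraic manipulation of the optimality condition and the H\"older-type remainder estimate~\eqref{gradientInequality}.
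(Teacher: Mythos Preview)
Your proposal is correct and follows precisely the route the paper invokes: the paper's own proof is merely a one-line citation to Karas--Santos--Svaiter, noting that one should replace Lipschitz continuity of $\nabla h$ by the H\"older remainder estimate~\eqref{gradientInequality}, and your argument is exactly that adaptation spelled out in full. The SVD/AM--GM computation for the first bound and the expansion $\|h(x_k)+\nabla h(x_k)^Td_k+e_k\|^2$ together with the identity $\|\nabla h(x_k)^Td_k\|^2=-d_k^T\nabla h(x_k)h(x_k)-\mu_k\|d_k\|^2$ and the minimisation bound $\phi_k(d_k)\leq\phi_k(0)$ for the second are the standard ingredients of the cited result.
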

\begin{proof}
This result is a straightforward modification of~\cite[Theorem 2.5 and Lemma 2.3]{karas2015algebraic}, using~\eqref{eq:Holder} instead of the Lipschitz continuity of $\nabla h$.\qed
\end{proof}

In our second main result of this paper, under the additional assumption that the \L{}ojasiewicz gradient inequality holds, we prove the convergence to $0$ of the sequences $\left\{ \mathrm{dist}(x_{k},\Omega)\right\} $ and $\left\{ \psi(x_k)\right\} $.

\begin{theorem}
\label{thm:convergence_of_dist(x,Omega)_general} Suppose that $\psi$ satisfies the \L{}ojasiewicz gradient inequality~\eqref{eq:Lojasiewicz_Gradient_Inequality} with
exponent $\theta \in {]0,1[}$. Let
\begin{equation}\label{eq:tilde_theta}
\chi:=\left\{\begin{array}{ll}
1,&\text{if }\left(\omega_{\min}=0\right)\text{ or }\left(\xi_{\min}> 0\text{ and }\theta\leq\frac{1}{2}\right),\\
2\theta,&\text{otherwise.}
\end{array}\right.
\end{equation}
Then, if $\eta\in{\left]0,\min\left\{\frac{2\upsilon}{\chi (1+\upsilon)},\frac{2(1-\theta)}{\chi }\right\}\right[}$, there exist some positive constants $s$ and $\overline{s}$ such that, for every $x_{0}\in\mathbb{B}(x^{*},s)$ and every sequence $\{x_{k}\}$ generated by~\ref{ALGORITHM-1:llm}, one has $\{x_k\}\subset\mathbb{B}(x^*,\overline{s})$ and the two sequences $\{\psi(x_{k})\}$ and $\{\mathrm{dist}(x_{k},\Omega)\}$ converge to $0$. Moreover, the following holds:
\begin{enumerate}
\item if $\theta\in\left ]0,\frac{1}{2}\right]$, the sequences~$\{\psi(x_{k})\}$ and $\{\mathrm{dist}(x_{k},\Omega)\}$ converge linearly to~$0$;
\item if $\theta \in\left ]\frac{1}{2},1\right[$, there exist some positive constants $\varsigma_1$ and $\varsigma_2$ such that, for all large~$k$,
\begin{gather*}
\psi(x_{k})\leq \varsigma_1 k^{-\frac{1}{2\theta-1}}\quad\text{and}\quad
\mathrm{dist}(x_{k},\Omega)\leq \varsigma_2 k^{-\frac{\delta}{2(2\theta-1)}}.
\end{gather*}
\end{enumerate}
\end{theorem}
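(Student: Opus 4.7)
My plan is to derive a recursive inequality of the form $\psi(x_k)^{\alpha}\leq C(\psi(x_k)-\psi(x_{k+1}))$ with $\alpha=2\theta$, apply Lemma~\ref{lem:rate_convergence} to obtain the stated rates for $\{\psi(x_k)\}$, control the iterates through a \L{}ojasiewicz-type summability argument, and finally transfer the rate from $\psi$ to $\mathrm{dist}(\cdot,\Omega)$ using H\"older metric subregularity~\eqref{eq:errorBound}, which immediately gives $\mathrm{dist}(x_k,\Omega)\leq \beta^{-1}(2\psi(x_k))^{\delta/2}$. The inductive invariance $\{x_k\}\subseteq \mathbb{B}(x^*,\overline{s})$ and the existence of the limit $\overline{x}\in\Omega$ will be obtained in parallel with the summability argument, in the spirit of the proof of Theorem~\ref{thm:dist_con_sup}.

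\textbf{Descent inequality.}
The starting point is Lemma~\ref{Svaiter_lemma}. Substituting the identity $d_k^{T}\nabla h(x_k) h(x_k)=-\|\nabla h(x_k)^{T}d_k\|^{2}-\mu_k\|d_k\|^{2}$, which is immediate from~\eqref{eq:LM_direction}, the estimate of Lemma~\ref{Svaiter_lemma} becomes
$$\psi(x_{k+1})\leq \psi(x_k)-\tfrac{1}{2}\|\nabla h(x_k)^{T}d_k\|^{2}+\tfrac{1}{2}\|d_k\|^{2}\Bigl(\tfrac{L^{2}}{(1+\upsilon)^{2}}\|d_k\|^{2\upsilon}+\tfrac{2L}{1+\upsilon}\|h(x_k)\|\|d_k\|^{\upsilon-1}-2\mu_k\Bigr).$$
Using the Svaiter bound $\|d_k\|\leq \|h(x_k)\|/(2\sqrt{\mu_k})$ together with a lower bound $\mu_k\gtrsim \psi(x_k)^{\eta\chi/2}$ extracted from~\eqref{eq:muk}, one checks that the bracketed expression is dominated by $-\mu_k$ for $x_k$ close to $x^*$ whenever $\eta<\frac{2\upsilon}{\chi(1+\upsilon)}$. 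The value of $\chi$ in~\eqref{eq:tilde_theta} is determined by which of the two available lower bounds for $\mu_k$ is the effective one: the $\xi_{\min}$-term of~\eqref{eq:muk} gives $\mu_k\gtrsim \psi(x_k)^{\eta/2}$ (exponent $\chi=1$), whereas the $\omega_{\min}$-term combined with the \L{}ojasiewicz gradient inequality $\|\nabla\psi(x_k)\|\geq \kappa^{-1}\psi(x_k)^{\theta}$ gives $\mu_k\gtrsim \psi(x_k)^{\theta\eta}$ (exponent $\chi=2\theta$). With the error terms absorbed, the descent simplifies to
$$\psi(x_k)-\psi(x_{k+1})\geq \tfrac{1}{2}\bigl(\|\nabla h(x_k)^{T}d_k\|^{2}+\mu_k\|d_k\|^{2}\bigr)=\tfrac{1}{2}\nabla\psi(x_k)^{T}(\nabla h(x_k)\nabla h(x_k)^{T}+\mu_k I)^{-1}\nabla\psi(x_k),$$
where the last equality uses~\eqref{eq:LM_direction} once more. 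Bounding the minimal eigenvalue of $(\nabla h(x_k)\nabla h(x_k)^{T}+\mu_k I)^{-1}$ from below by $(\widehat L^{2}+\mu_k)^{-1}$ with $\widehat L$ an upper bound for $\|\nabla h\|$ on the ball, and invoking~\eqref{eq:Lojasiewicz_Gradient_Inequality}, I get $\psi(x_k)-\psi(x_{k+1})\geq C\psi(x_k)^{2\theta}$ for $x_k$ close enough to $x^*$, since $\mu_k\to 0$.

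\textbf{Conclusion via Lemma~\ref{lem:rate_convergence}, convergence of the iterates, and rate for $\mathrm{dist}$.}
Setting $s_k=\psi(x_k)$ and $\alpha=2\theta$, Lemma~\ref{lem:rate_convergence} yields linear convergence when $\theta\in{]0,1/2]}$ (the case $\alpha\leq 1$) and the sublinear rate $\psi(x_k)\lesssim k^{-1/(2\theta-1)}$ when $\theta\in{]1/2,1[}$. To establish that $\{x_k\}$ stays in $\mathbb{B}(x^*,\overline{s})$ and is Cauchy, I would combine the descent inequality with the bound $\|d_k\|\leq \|\nabla\psi(x_k)\|/\mu_k$ (again from~\eqref{eq:LM_direction}) to get
$$\|d_k\|\lesssim \frac{\psi(x_k)-\psi(x_{k+1})}{\mu_k\|\nabla\psi(x_k)\|}\lesssim \frac{\psi(x_k)-\psi(x_{k+1})}{\psi(x_k)^{\theta+\eta\chi/2}},$$
and majorise the right-hand side by the telescoping increment of the \L{}ojasiewicz potential $\psi^{1-\theta-\eta\chi/2}/(1-\theta-\eta\chi/2)$. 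Summability of $\sum\|d_k\|$ thus requires the exponent to be positive, i.e.\ $\eta<2(1-\theta)/\chi$, which is precisely the second upper bound on $\eta$. A standard induction, modelled on the one used in the proof of Theorem~\ref{thm:dist_con_sup}, then shows that for $x_0\in \mathbb{B}(x^*,s)$ with $s$ small enough the whole sequence remains in $\mathbb{B}(x^*,\overline{s})\subseteq \mathbb{B}(x^*,r)$, so that all preceding estimates are valid, and $\{x_k\}$ converges to some $\overline{x}\in\Omega\cap\mathbb{B}(x^*,\overline{s})$. Finally, the rate for $\mathrm{dist}(x_k,\Omega)$ is inherited from that of $\psi(x_k)$ through $\mathrm{dist}(x_k,\Omega)\leq \beta^{-1}(2\psi(x_k))^{\delta/2}$, turning $k^{-1/(2\theta-1)}$ into $k^{-\delta/(2(2\theta-1))}$ and preserving linearity. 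The main obstacle is precisely the delicate interplay between the vanishing regularisation $\mu_k$, the \L{}ojasiewicz exponent $\theta$, and the parameter $\eta$: $\eta$ must simultaneously be small enough that the Svaiter error terms are absorbed by $\mu_k\|d_k\|^2$ \emph{and} that the \L{}ojasiewicz potential is summable, and the dichotomy in~\eqref{eq:tilde_theta} encodes which of the two lower bounds on $\mu_k$ yields the most favourable admissible window for $\eta$.
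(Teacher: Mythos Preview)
Your proposal is correct and follows essentially the same route as the paper. In particular, the paper also (i) combines Lemma~\ref{Svaiter_lemma} with the lower bound $\mu_k\geq c\,\psi(x_k)^{\eta\chi/2}$ (coming from the $\xi_{\min}$-term or, via~\eqref{eq:Lojasiewicz_Gradient_Inequality}, from the $\omega_{\min}$-term, which is precisely the origin of the dichotomy~\eqref{eq:tilde_theta}) to obtain the clean descent $\psi(x_{k+1})\leq\psi(x_k)-\tfrac12 d_k^TH_kd_k$; (ii) rewrites $d_k^TH_kd_k=\nabla\psi(x_k)^TH_k^{-1}\nabla\psi(x_k)\geq M^{-1}\|\nabla\psi(x_k)\|^2$ and applies~\eqref{eq:Lojasiewicz_Gradient_Inequality} to get $\psi(x_k)^{2\theta}\lesssim\psi(x_k)-\psi(x_{k+1})$, from which Lemma~\ref{lem:rate_convergence} gives the stated rates; and (iii) proves $\sum\|d_k\|<\infty$ by bounding $\|d_k\|$ with the telescoping increment of $\psi^{\,1-\theta-\eta\chi/2}$, using the convexity of $t\mapsto -t^{1-\theta-\eta\chi/2}$, which is exactly where the constraint $\eta<2(1-\theta)/\chi$ enters. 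The only cosmetic difference is that the paper reaches the telescoping bound via $\|d_k\|\geq(\kappa M)^{-1}\psi(x_k)^\theta$ (from $\|\nabla\psi(x_k)\|=\|H_kd_k\|\leq M\|d_k\|$) inserted into $\psi(x_k)-\psi(x_{k+1})\geq\tfrac{\mu_k}{2}\|d_k\|^2$, whereas you use the equivalent pairing $\|d_k\|\leq\mu_k^{-1}\|\nabla\psi(x_k)\|$ together with $\psi(x_k)-\psi(x_{k+1})\geq\tfrac{1}{2M}\|\nabla\psi(x_k)\|^2$; both yield $\|d_k\|\lesssim(\psi(x_k)-\psi(x_{k+1}))/\psi(x_k)^{\theta+\eta\chi/2}$.
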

\begin{proof}
The proof has three key parts.

In the first part of the proof, we will set the values of $s$ and $\overline{s}$.
Let $\varepsilon>0$ and $\kappa>0$ be such that~\eqref{eq:Lojasiewicz_Gradient_Inequality} holds. Thus, one has
\begin{equation}\label{eq:lower_bound_second_term}
\|\nabla h(x) h(x)\|=\|\nabla\psi(x)\|\geq\frac{1}{\kappa}\psi(x)^\theta=\frac{1}{2^\theta\kappa}\|h(x)\|^{2\theta},\quad \forall x\in\mathbb{B}(x^*,\varepsilon).
\end{equation}
Let $\overline{s}:=\min\{r,\varepsilon\}>0$.
Then, by Assumption~\ref{(A1)}, there exists some positive constant $ M $ such that
\begin{equation}\label{eq:bound_mu}
\left\|\nabla h(x_k)\nabla h(x_k)^T\right\|+\mu_k\leq M ,\quad\text{whenever }x_k\in\mathbb{B}(x^*,\overline{s}).
\end{equation}
Since $\eta\in{\left]0,\frac{2\upsilon}{\chi (1+\upsilon)}\right[}$, it is possible to make $\overline{s}$ smaller if needed to ensure, for all $x\in\mathbb{B}(x^*,\overline{s})$, that
\begin{equation}\label{eq:choose_s}
\left(\xi_{\min}+\frac{\omega_{\min}}{2^{\theta\eta}\kappa^\eta}\right)\|h(x)\|^{\eta \chi} \geq\left(\frac{2+\sqrt{5}}{2^\upsilon(1+\upsilon)}L\right)^{\frac{2}{1+\upsilon}}\|h(x)\|^{\frac{2\upsilon}{1+\upsilon}}.
\end{equation}
For all $x\in\mathbb{B}(x^*,\overline{s})$, one has by the Lipschitz continuity of $h$ that
\begin{equation}\label{eq:bound_psi}
\psi(x)=\frac{1}{2}\|h(x)-h(x^*)\|^2\leq\frac{\lambda^2}{2}\|x-x^*\|^2\leq\frac{\lambda^2}{2}\|x-x^*\|,
\end{equation}
since $\overline{s}\leq r<1$.
Let
$$\Delta:=\frac{2^{\theta}\kappa M \lambda^{2\left(1-\theta-\frac{\eta \chi}{2}\right)}}
{\left(1-\theta-\frac{\eta \chi}{2}\right)\left(\xi_{\min}+\frac{\omega_{\min}}{2^{\theta\eta}\kappa^\eta}\right)} \quad \text{and} \quad {s}:= \left( \frac{\overline{s}}{1+\Delta}\right)^{\frac{1}{1-\theta-\frac{\eta \chi}{2}}}.$$
Then, since $\overline{s}<1$ and $\theta+\frac{\eta \chi}{2}\in {]0,1[}$, we have $s\leq \overline{s}$.

In the second part of the proof, we will prove by induction that
\begin{align}
&x_{i}\in\mathbb{B}(x^*,\overline{s}),\quad\text{and}\label{eq:induction1}\\
&\|d_{i-1} \| \leq \frac{2^{1-\frac{\eta \chi}{2}}\kappa M }{\left(1-\theta-\frac{\eta \chi}{2}\right)\left(\xi_{\min}+\frac{\omega_{\min}}{2^{\theta\eta}\kappa^\eta}\right)} \left(\psi(x_{i-1})^{1-\theta-\frac{\eta \chi}{2}}-\psi(x_{i})^{1-\theta-\frac{\eta \chi}{2}}\right),\label{eq:induction}
\end{align}
for all $i=1,2,\ldots$. Pick any $x_{0}\in\mathbb{B}(x^{*},s)$ and let $\{x_k\}$ be the sequence generated by \ref{ALGORITHM-1:llm}.
It follows from Lemma~\ref{Svaiter_lemma} that
\begin{equation}\label{eq:conseq_Svaiter}
\begin{split}
\psi(x_{k+1})&\leq\psi(x_{k})-\frac{1}{2}d_{k}^{T}H_k d_k\\
&\quad+\frac{\|d_{k}\|^{2}}{2\mu_{k}^\upsilon}\left(\frac{L^{2}}{4^\upsilon{(1+\upsilon)}^2}\|h(x_{k})\|^{2\upsilon}+\frac{2^{2-\upsilon}L\mu_{k}^{\frac{1+\upsilon}{2}}}{1+\upsilon}\|h(x_{k})\|^\upsilon
-\mu_{k}^{1+\upsilon}\right),
\end{split}
\end{equation}
for all $k$, where $H_k=\nabla h(x_k)\nabla h(x_k)^T+\mu_k I$, since $d_k=-H_k^{-1}\nabla h(x_{k})h(x_{k})$.
Since $x_0\in\mathbb{B}(x^*,\overline{s})$, we have by~\eqref{eq:lower_bound_second_term}, the definition of $\chi $ in~\eqref{eq:tilde_theta} and~\eqref{eq:choose_s} that
\begin{align}
\mu_{0} &\geq \xi_{\min}\|h(x_0)\|^\eta+\omega_{\min}\|\nabla h(x_0)h(x_0)\|^\eta\nonumber\\
&\geq\xi_{\min}\|h(x_0)\|^\eta+\frac{\omega_{\min}}{2^{\theta\eta}\kappa^\eta}\|h(x_0)\|^{2\theta\eta}\nonumber\\
&\geq\left(\xi_{\min}+\frac{\omega_{\min}}{2^{\theta\eta}\kappa^\eta}\right)\|h(x_0)\|^{\eta \chi}
\geq\left(\frac{2+\sqrt{5}}{2^\upsilon(1+\upsilon)}L\right)^{\frac{2}{1+\upsilon}}
\|h(x_0)\|^{\frac{2\upsilon}{1+\upsilon}},\label{eq:1}
\end{align}
 which implies
 $$\frac{L^{2}}{4^\upsilon{(1+\upsilon)}^2}\|h(x_{0})\|^{2\upsilon}+\frac{2^{2-\upsilon}L\mu_{0}^{\frac{1+\upsilon}{2}}}{1+\upsilon}\|h(x_{0})\|^\upsilon
-\mu_{0}^{1+\upsilon}\leq 0.$$
Therefore, from~\eqref{eq:conseq_Svaiter}, we get
\begin{equation}\label{eq:psi_x1}
\psi(x_{1})\leq\psi(x_{0})-\frac{1}{2}d_{0}^{T}H_0 d_0\leq \psi(x_{0})-\frac{\mu_0}{2}\|d_0\|^2.
\end{equation}
Observe that the convexity of the function $\varphi(t):=-t^{1-\theta-\frac{\eta\chi}{2}}$ with $t>0$ yields
\begin{equation}\label{eq:convexity_varphi}
\psi(x)^{1-\theta-\frac{\eta\chi}{2}}-\psi(y)^{1-\theta-\frac{\eta\chi}{2}}
\geq  \left(1-\theta-\frac{\eta\chi}{2}\right)\psi(x)^{-\theta-\frac{\eta\chi}{2}}\left( \psi(x)-\psi(y)\right),
\end{equation}
for all $x,y\in\mathbb{R}^m\setminus\Omega$. By combining~\eqref{eq:psi_x1} with~\eqref{eq:convexity_varphi}, we deduce
\begin{equation}\label{psi_estimate_1}
\psi(x_0)^{1-\theta-\frac{\eta\chi}{2}}-\psi(x_1)^{1-\theta-\frac{\eta\chi}{2}}\geq \frac{\left(1-\theta-\frac{\eta\chi}{2}\right)\mu_0}{2} \psi(x_0)^{-\theta-\frac{\eta\chi}{2}}\|d_0\|^2
\end{equation}
Since $x_0\in\mathbb{B}(x^*,s)\subseteq\mathbb{B}(x^*,\overline{s})$, we have by~\eqref{eq:bound_mu} that $\|H_0\|\leq M $. Further, by the \L{}ojasiewicz gradient inequality~\eqref{eq:Lojasiewicz_Gradient_Inequality}, it holds
$$
\psi(x_0)^{\theta} \leq \kappa \|\nabla \psi (x_0) \|
 \leq \kappa \|H_0\| \|d_0 \| \leq \kappa  M \|d_0 \|.
$$
From the last inequality, together with~\eqref{psi_estimate_1}, the first inequality in~\eqref{eq:1} and then~\eqref{eq:bound_psi}, we obtain
\begin{align*}
\|d_0 \| &\leq \frac{2\kappa M \psi(x_0)^{\frac{\eta\chi}{2}}}{\left(1-\theta-\frac{\eta\chi}{2}\right)\mu_0} \left(\psi(x_0)^{1-\theta-\frac{\eta\chi}{2}}-\psi(x_{1})^{1-\theta-\frac{\eta\chi}{2}}\right)\\
&\leq \frac{2\kappa M  }{\left(1-\theta-\frac{\eta\chi}{2}\right)\left(\xi_{\min}+\frac{\omega_{\min}}{2^{\theta\eta}\kappa^\eta}\right)2^{\frac{\eta \chi}{2}}} \left(\psi(x_0)^{1-\theta-\frac{\eta\chi}{2}}-\psi(x_{1})^{1-\theta-\frac{\eta\chi}{2}}\right)\\
&\leq\frac{2^{1-\frac{\eta \chi}{2}}\kappa M }{\left(1-\theta-\frac{\eta\chi}{2}\right)\left(\xi_{\min}+\frac{\omega_{\min}}{2^{\theta\eta}\kappa^\eta}\right)} \psi(x_0)^{1-\theta-\frac{\eta \chi}{2}}\leq \Delta\|x_0-x^*\|^{1-\theta-\frac{\eta \chi}{2}},
\end{align*}
which, in particular, proves~\eqref{eq:induction} for $i=1$.
Hence,
\begin{align*}
\|x_1-x^*\| &\leq \|x_0-x^*\|+\|d_0\|
\leq \|x_0-x^*\|+\Delta \|x_0-x^*\|^{1-\theta-\frac{\eta \chi}{2}}\\
&\leq (1+\Delta) \|x_0-x^*\|^{1-\theta-\frac{\eta \chi}{2}}
\leq (1+\Delta) s^{1-\theta-\frac{\eta \chi}{2}} =\overline{s}.
\end{align*}
Therefore, $x_1\in\mathbb{B}(x^*,\overline{s})$. Assume now that~\eqref{eq:induction1}--\eqref{eq:induction} holds for all $i=1,\ldots, k$.
Since $x_k\in\mathbb{B}(x^*,\overline{s})$, by~\eqref{eq:choose_s} and the same argumentation as in~\eqref{eq:1}, we have
$$
\mu_{k} \geq \left(\xi_{\min}+\frac{\omega_{\min}}{2^{\theta\eta}\kappa^\eta}\right)\|h(x_k)\|^{\eta \chi} \geq\left(\frac{2+\sqrt{5}}{2^\upsilon(1+\upsilon)}L\right)^{\frac{2}{1+\upsilon}}\|h(x_k)\|^{\frac{2\upsilon}{1+\upsilon}},$$
 which implies
 $$\frac{L^{2}}{4^\upsilon{(1+\upsilon)}^2}\|h(x_{k})\|^{2\upsilon}+\frac{2^{2-\upsilon}L\mu_{k}^{\frac{1+\upsilon}{2}}}{1+\upsilon}\|h(x_{k})\|^\upsilon
-\mu_{k}^{1+\upsilon}\leq 0.$$
Therefore, by~\eqref{eq:conseq_Svaiter}, we get
\begin{equation}\label{eq:psi_nonincreasing}
\psi(x_{k+1})\leq \psi(x_k)-\frac{1}{2} d_k^T H_k d_k\leq\psi(x_k)-\frac{\mu_k}{2}\|d_k\|^2.
\end{equation}
Combining the latter inequality with~\eqref{eq:convexity_varphi}, we deduce
\begin{equation}\label{psi_estimate_k}
\psi(x_k)^{1-\theta-\frac{\eta\chi}{2}}-\psi(x_{k+1})^{1-\theta-\frac{\eta\chi}{2}}\geq \frac{\left(1-\theta-\frac{\eta\chi}{2}\right)\mu_k}{2} \psi(x_k)^{-\theta-\frac{\eta\chi}{2}}\|d_k\|^2
\end{equation}
Further, since $x_k\in\mathbb{B}(x^*,\overline{s})$, from the \L{}ojasiewicz gradient inequality~\eqref{eq:Lojasiewicz_Gradient_Inequality} and~\eqref{eq:bound_mu}, it holds
$$
\psi(x_k)^{\theta} \leq \kappa \|\nabla \psi (x_k) \|
 \leq \kappa \|H_k\| \|d_k \| \leq \kappa  M \|d_k\|.
$$
From the last inequality and~\eqref{psi_estimate_k}, we deduce
\begin{align*}
\|d_k \| &\leq \frac{2\kappa M \psi(x_k)^{\frac{\eta\chi}{2}}}{\left(1-\theta-\frac{\eta\chi}{2}\right)\mu_k} \left(\psi(x_k)^{1-\theta-\frac{\eta\chi}{2}}-\psi(x_{k+1})^{1-\theta-\frac{\eta\chi}{2}}\right)\\
 &\leq  \frac{2^{1-\frac{\eta \chi}{2}}\kappa M }{\left(1-\theta-\frac{\eta\chi}{2}\right)\left(\xi_{\min}+\frac{\omega_{\min}}{2^{\theta\eta}\kappa^\eta}\right) }\left(\psi(x_k)^{1-\theta-\frac{\eta\chi}{2}}-\psi(x_{k+1})^{1-\theta-\frac{\eta\chi}{2}}\right),
\end{align*}
which proves~\eqref{eq:induction} for $i=k+1$. Hence, by~\eqref{eq:bound_psi}, we have
\begin{align*}
\|x_{k+1}&-x^*\| \\
&\leq \|x_0-x^*\|+\sum_{i=0}^k\|d_i\|\\
&\leq \|x_0-x^*\|\\
&\quad+\frac{2^{1-\frac{\eta \chi}{2}}\kappa M }{(1-\theta-\frac{\eta\chi}{2})\left(\xi_{\min}+\frac{\omega_{\min}}{2^{\theta\eta}\kappa^\eta}\right)}\sum_{i=0}^k  \left(\psi(x_i)^{1-\theta-\frac{\eta \chi}{2}}-\psi(x_{i+1})^{1-\theta-\frac{\eta\chi}{2}}\right)\\
&\leq \|x_0-x^*\|+\frac{2^{1-\frac{\eta \chi}{2}}\kappa M }{(1-\theta-\frac{\eta\chi}{2})\left(\xi_{\min}+\frac{\omega_{\min}}{2^{\theta\eta}\kappa^\eta}\right)}\psi(x_0)^{1-\theta-\frac{\eta \chi}{2}}\\
&\leq (1+\Delta) \|x_0-x^*\|^{1-\theta-\frac{\eta \chi}{2}}
\leq (1+\Delta) s^{1-\theta-\frac{\eta \chi}{2}} =\overline{s},
\end{align*}
which proves~\eqref{eq:induction1} for $i=k+1$. This completes the second part of the proof.

In the third part of the proof, we will finally show the assertions in the statement of the theorem.
From the second part of the proof we know that $x_k\in\mathbb{B}(x^*,\overline{s})$ for all $k$. This, together with~\eqref{eq:bound_mu}, implies that
$\|H_k\|\leq  M $ for all $k$.
Thus,
$$d_k^T H_k d_k=\nabla \psi(x_k)^{T}H_k^{-1}\nabla \psi(x_k)
 \geq \frac{1}{\|H_k\|} \|\nabla \psi(x_k)\|^2 \geq \frac{1}{ M }\|\nabla \psi(x_k)\|^2.$$
Therefore, by~\eqref{eq:psi_nonincreasing}, we have
$$
\psi(x_{k+1})\leq \psi(x_k)-\frac{1}{2 M }\|\nabla \psi(x_k)\|^2.
$$
It follows from the \L{}ojasiewicz gradient inequality~\eqref{eq:Lojasiewicz_Gradient_Inequality} and the last inequality that
$$
\psi(x_{k+1})\leq \psi(x_k)-\frac{1}{2\kappa^2 M }\psi(x_k)^{2\theta}.
$$
This implies that $\{\psi(x_k)\}$ converges to $0$. By applying Lemma~\ref{lem:rate_convergence} with $s_k:=\psi(x_k)$, $\vartheta:=2\kappa^2 M $ and $\alpha:=2\theta$,
we conclude that the rate of convergence depends on $\theta$ as claimed in (i)-(ii). Finally, observe that $\{\mathrm{dist}\left(x_{k},\Omega\right)\}$ converges to $0$ with
the rate stated in (i)-(ii) thanks to the H\"older metric subregularity of the function $h$.\qed
\end{proof}

\begin{remark}~\label{rem:real_analytic}
Observe that every real analytic function satisfies the assumptions of Theorem~\ref{thm:convergence_of_dist(x,Omega)_general}, thanks to Fact~\ref{prop:Lojasiewicz Inequality-1} and the discussion after it in Section~\ref{sec.2}. Therefore, local sublinear convergence of~\ref{ALGORITHM-1:llm} is guaranteed for all $\eta$ sufficiently small (i.e., whenever $\eta<\min\left\{\chi^{-1},2(1-\theta)\chi^{-1}\right\}$). This is the best that we can get with these weak assumptions, as we show in the next example.
\end{remark}

\begin{example}[Example~\ref{ex:new} revisited]
Let $h(x)=\frac{3}{4}\sqrt[3]{x^4}$ be the function considered in Example~\ref{ex:new}. The function $h$ does not satisfy the assumptions of Theorem~\ref{thm:dist_con_sup}, but it verifies the ones of Theorem~\ref{thm:convergence_of_dist(x,Omega)_general}. Indeed, it is straightforward to check that $\psi(x)=\frac{1}{2}|h(x)|^2$ satisfies the \L{}ojasiewicz gradient inequality~\eqref{eq:Lojasiewicz_Gradient_Inequality} with exponent $\theta=\frac{5}{8}$. Since $\theta>\frac{1}{2}$, we can only guarantee the sublinear convergence of the sequence $\{x_k\}$ generated by~\ref{ALGORITHM-1:llm} to $0$ when $\eta\in{\left]0,\frac{1}{2\chi}\right[}={\left]0,\min\left\{\frac{1}{2\chi},\frac{3}{4\chi}\right\}\right[}$. In fact, this is the best convergence rate that we can get. Indeed, a direct computation gives us
\begin{equation}\label{eq:example}
x_{k+1}=\left(1-\frac{\frac{3}{4}x_k^{\frac{2}{3}}}{x_k^{\frac{2}{3}}+
\xi_k\left(\frac{3}{4}\right)^\eta|x_k|^{\frac{4\eta}{3}}+\omega_k\left(\frac{3}{4}\right)^\eta|x_k|^{\frac{5\eta}{3}}}\right)x_k.
\end{equation}
On the one hand, when $\xi_{min}>0$ and~$\eta\in{\left]0,\frac{1}{2}\right[}$, we have $\frac{4\eta}{3}<\frac{2}{3}$. Therefore, it follows from~\eqref{eq:example} and $\xi_k\geq\xi_{\min}>0$ that
$$\lim_{k\to\infty}\left|\frac{x_{k+1}}{x_k}\right|=1,$$
which means that $\{x_k\}$ is sublinearly convergent to $0$. This coincides with what Theorem~\ref{thm:convergence_of_dist(x,Omega)_general} asserts, since $\left]0,\frac{1}{2\chi}\right[=\left]0,\frac{1}{2}\right[$. On the other hand, when $\xi_{\min}=0$ and $\eta\in\left]0,\frac{2}{5}\right[$, sublinear convergence is also obtained from~\eqref{eq:example}, which is exactly what Theorem~\ref{thm:convergence_of_dist(x,Omega)_general} guarantees for all $\eta\in \left]0,\frac{1}{2\chi}\right[=\left]0,\frac{2}{5}\right[$.
\end{example}

\section{Application to biochemical reaction networks}\label{sec.numapp}

In this section, we introduce first a class of nonlinear equations arising in the study of biochemistry, cf.~\cite{Fleming2012}. After that, we compare the performance of \ref{ALGORITHM-1:llm} with various Levenberg--Marquardt algorithms for finding steady states of nonlinear systems of biochemical networks on 20 different real data biological models.

\subsection{Nonlinear systems in biochemical reaction networks}\label{bio}

Consider a biochemical network with $m$ molecular
species and $n$ reversible elementary reactions\footnote{An elementary reaction is a chemical reaction for which no intermediate
molecular species need to be postulated in order to describe the chemical
reaction on a molecular scale.}. We define forward and reverse \emph{stoichiometric matrices}, $F,R\in\mathbb{\mathbb{Z}}_{+}^{m\times n}$,
respectively, where $F_{ij}$ denotes the \emph{stoichiometry}\footnote{Reaction stoichiometry is a quantitative relationship between the
relative quantities of molecular species involved in a single chemical
reaction. } of the $i^{th}$ molecular species in the $j^{th}$ forward reaction
and $R_{ij}$ denotes the stoichiometry of the $i^{th}$ molecular
species in the $j^{th}$ reverse reaction.  We assume that \emph{every
reaction conserves mass}, that is, there exists at least one positive
vector $l\in\mathbb{R}_{++}^{m}$ satisfying $(R-F)^{T}l=0$, cf. \cite{gevorgyan2008detection}.
The matrix $N:=R-F$ represents net reaction stoichiometry and may be viewed as the incidence matrix of a directed hypergraph, see  \cite{Klamt2009}. We assume that there are less molecular species than there are net reactions, that is $m < n$. We assume the
cardinality of each row of $F$ and $R$ is at least one, and the cardinality
of each column of $R-F$ is at least two. The matrices $F$ and $R$ are sparse and the particular sparsity
pattern depends on the particular biochemical network being modeled. Moreover, we also assume that $\text{rank}([F,R])=m$, which is a requirement for kinetic consistency, cf.~\cite{Fleming20161}.

Let $c\in\mathbb{R}_{++}^{m}$ denote a variable vector of molecular
species concentrations. Assuming constant nonnegative elementary kinetic
parameters $k_{f},k_{r}\in\mathbb{R}_{+}^{n}$, we assume \emph{elementary
reaction kinetics} for forward and reverse elementary reaction rates
as $s(k_{f},c):=\exp(\ln(k_{f})+F^{T}\ln(c))$ and $r(k_{r},c):=\exp(\ln(k_{r})+R^{T}\ln(c))$,
respectively, where $\exp(\cdot)$ and $\ln(\cdot)$ denote the respective
componentwise functions, see, e.g.,~\cite{artacho_accelerating_2015,Fleming20161}.
Then, the deterministic dynamical equation for time evolution of molecular
species concentration is given by
\begin{eqnarray}
\frac{dc}{dt} & \equiv & N(s(k_{f},c)-r(k_{r},c)) \label{eq:dcdt2}\\
 & = & N\left(\exp(\ln(k_{f})+F^{T}\ln(c)\right)-\exp\left(\ln(k_{r})+R^{T}\ln(c))\right)=:-f(c).\nonumber
\end{eqnarray}
A vector $c^{*}$ is a \emph{steady state} if and only if it satisfies
\[
f(c^{*})=0.
\]
Note that a vector $c^{*}$ is a steady state of the biochemical system
if and only if
\[
s(k_{f},c^{*})-r(k_{r},c^{*})\in\mathcal{N}(N),
\]
here $\mathcal{N}(N)$ denotes the null space of $N$. Therefore,
the set of steady states $\Omega=\left\{ c\in\mathbb{R}_{++}^{m},\,f(c)=0\right\} $
is unchanged if we replace the matrix $N$ by a matrix $\bar{N}$
with the same null space. Suppose that $\bar{N}\in\mathbb{Z}^{r\times n}$
is the submatrix of $N$ whose rows are linearly independent, then $\mathrm{rank}\left(\bar{N}\right)=\mathrm{rank}(N)\eqqcolon r.$
If one replaces $N$ by $\bar{N}$ and transforms~\eqref{eq:dcdt2}
to logarithmic scale, by letting $x\coloneqq\ln(c)\in\mathbb{R}^{m}$,
$k\coloneqq[\ln(k_{f})^{T},\,\ln(k_{r})^{T}]^{T}\in\mathbb{R}^{2n}$,
then the right-hand side of~\eqref{eq:dcdt2} is equal
to the function
\begin{equation*}
\bar{f}(x):=\left[\bar{N},-\bar{N}\right]\exp\left(k+[F,\,R]^{T}x\right),\label{eq:f(x)}
\end{equation*}
where $\left[\,\cdot\thinspace,\cdot\,\right]$ stands for the horizontal
concatenation operator.

Let $L\in\mathbb{R}^{(m-r)\times m}$ denote a basis for the left null space
of $N$, which implies $LN=0$. We have $\mathrm{rank}(L)=m-r$.
We say that the system satisfies \emph{moiety conservation} if for
any initial concentration $c_{0}\in\mathbb{R}_{++}^{m}$, it holds
\[
L\,c=L\,\mathrm{exp}(x)=l_{0}
\]
along the trajectory of~\eqref{eq:dcdt2}, given an initial starting point $l_{0}\in\mathbb{R}_{++}^{m}$. It is possible to compute $L$ such that each row corresponds to a structurally identifiable conserved moiety in a biochemical network, cf.~\cite{Haraldsdttir2016}. The problem of finding the \emph{moiety conserved
steady state} of a biochemical reaction network is equivalent to solving
the nonlinear equation \eqref{eq:nonequa} with
\begin{equation}
h(x):=\left(\begin{array}{c}
\bar{f}(x)\\
L\,\mbox{exp}(x)-l_{0}
\end{array}\right).\label{eq:steadyStateEquation}
\end{equation}

By replacing $f$ by $\bar{f}$ we have improved the rank deficiency of $\nabla f$, and thus the one of $h$ in~\eqref{eq:steadyStateEquation}. Nonetheless, as we demonstrate in Figure~\ref{fig:rank}, $\nabla h$ is usually still far from being full rank at the solutions. 

Let us show that $h$ is real analytic. Let $A:=\left[\bar{N},-\bar{N}\right]$ and $B:=[F,\,R]^{T}$.
Then we can write
{\allowdisplaybreaks
\begin{align*}
\psi(x) & =\frac{1}{2}\|h(x)\|^{2}=\frac{1}{2}h(x)^{T}h(x)\\
 & =\frac{1}{2}\exp\left(k+Bx\right)^{T}A^{T}A\exp\left(k+Bx\right)\\
 &\quad+\frac{1}{2}\left(L\,\mathrm{exp}(x)-l_{0}\right)^{T}\left(L\,\mathrm{exp}(x)-l_{0}\right)\\
 & =\exp\left(k+Bx\right)^{T}Q\exp\left(k+Bx\right)+\frac{1}{2}\left(L\,\mathrm{exp}(x)-l_{0}\right)^{T}\left(L\,\mathrm{exp}(x)-l_{0}\right)\\
 & =\sum_{p,q=1}^{2n}Q_{pq}\exp\left(k_{p}+k_{q}+\sum_{i=1}^{m}(B_{pi}+B_{qi})x_{i}\right)\\
 &\quad+\frac{1}{2}\left(L\,\mathrm{exp}(x)-l_{0}\right)^{T}\left(L\,\mathrm{exp}(x)-l_{0}\right),
\end{align*}}%
where $Q=A^{T}A.$ Since~$B_{ij}$ are nonnegative
integers for all $i$ and~$j$, we conclude that the function~$\psi$
is real analytic (see Proposition~2.2.2 and Proposition~2.2.8 in~\cite{parks1992primer}).
It follows from Remark~\ref{rem:real_analytic} that $\psi$ satisfy the \L{}ojasiewicz gradient inequality (with some unknown exponent $\theta\in{[0,1[}$) and
the mapping~$h$ is H\"{o}lder metrically subregular around $(x^{*},0)$. Therefore, the assumptions of Theorem~\ref{thm:convergence_of_dist(x,Omega)_general} are satisfied as long as $\eta$ is \emph{sufficiently small}, and local sublinear convergence of~\ref{ALGORITHM-1:llm} is guaranteed.

\subsection{Computational experiments}

In this subsection, we compare \ref{ALGORITHM-1:llm} with various
Levenberg--Marquardt methods for solving the nonlinear system~\eqref{eq:nonequa} with $h$ defined by~\eqref{eq:steadyStateEquation} on 20 different biological models. These codes are available in the  COBRA Toolbox v3~\cite{COBRA}. In our implementation, all codes were written in MATLAB and runs
were performed on 
Intel Core i7-4770 CPU \@3.40GHz with 12GB RAM, under Windows 10 (64-bits).
The algorithms
were stopped whenever
\begin{equation*}
\begin{split}
\|h(x_{k})\|\leq 10^{-6}\label{eq:stopcrit}
\end{split}
\end{equation*}
is satisfied or the maximum number of iterations
(say 10,000) is reached. On the basis of our experiments with
the mapping~\eqref{eq:steadyStateEquation}, we set
\begin{equation}\label{eq:xi}
\xi_k:=\max\left\{0.95^{2k},10^{-9}\right\}\quad\text{and}\quad\omega_{k}:=0.95^{k}.
\end{equation}
The initial point is set to $x_{0}=0$ in all the experiments.

To illustrate the results, we use
the Dolan and Mor\'e performance profile~\cite{Dolan2002} with the performance measures
$N_{i}$ and $T$, where $N_i$ and $T$ denote the total number of iterations and the running time. In this procedure, the performance
of each algorithm is measured by the ratio of its computational outcome
versus the best numerical outcome of all algorithms. This performance
profile offers a tool to statistically compare the performance of algorithms. Let $\mathcal{S}$ be a set of all algorithms
and $\mathcal{P}$ be a set of test problems. For each problem $p$
and algorithm $s$, $t_{p,s}$ denotes the computational outcome with respect to the performance index, which is used in the definition of the performance ratio
\begin{equation}
r_{p,s}:=\frac{t_{p,s}}{\min\{t_{p,s}:s\in\mathcal{S}\}}.\label{eq:perf}
\end{equation}
If an algorithm $s$ fails to solve a problem $p$, the procedure
sets $r_{p,s}:=r_\text{failed}$, where $r_\text{failed}$ should be strictly
larger than any performance ratio~\eqref{eq:perf}. Let $n_p$ be the number of problems in the
experiment. For any factor
$\tau\in\mathbb{R}$, the overall performance of an algorithm $s$ is given by
\[
\rho_{s}(\tau):=\frac{1}{n_{p}}\textrm{size}\{p\in\mathcal{P}:r_{p,s}\leq\tau\}.
\]
Here, $\rho_{s}(\tau)$ is the probability that a performance ratio
$r_{p,s}$ of an algorithm $s\in\mathcal{S}$ is within a factor $\tau$
of the best possible ratio. The function $\rho_{s}(\tau)$ is a distribution
function for the performance ratio. In particular, $\rho_{s}(1)$
gives the probability that an algorithm $s$ wins over all other considered
algorithms, and $\lim_{\tau\rightarrow r_{\mathrm{failed}}}\rho_{s}(\tau)$
gives the probability that algorithm $s$ solves all considered
problems. Therefore, this performance profile can be considered as
a measure of efficiency among all considered algorithms. 

In our first experiment, we explore for which parameter
$\eta$ the best performance of \ref{ALGORITHM-1:llm} is obtained. To this end, we apply seven versions of \ref{ALGORITHM-1:llm} associated to each of the parameters $\eta\in\{0.6,0.7,0.8,0.9,0.99,\allowbreak 0.999,1\}$ to the
nonlinear system~\eqref{eq:steadyStateEquation} defined by 20 biological models.
The results of this comparison are summarised in Table~\ref{t.tuning_eta} and
Figure~\ref{fig:tuning_eta}, from where it can be observed
that \ref{ALGORITHM-1:llm} with $\eta=0.999$ outperforms the other values of the parameters. It is also apparent that smaller values of~$\eta$ are less efficient, although \ref{ALGORITHM-1:llm} successfully found a solution for every model and every value of $\eta$ that was tested. It is important to recall here that local convergence is only guaranteed by Theorem~\ref{thm:convergence_of_dist(x,Omega)_general} for \emph{sufficiently small} values of $\eta$, since the value of $\theta$ is unknown. Also, note that the local
convergence for the value $\eta=1$ is not covered by Theorem~\ref{thm:convergence_of_dist(x,Omega)_general} for our choice of the parameters, because it requires $\eta < \min\{1,2-2\theta\}$, since $\omega_{\min}=0$ in~\eqref{eq:xi}.
\begin{figure}[ht!]
\centering
\includegraphics[width=.65\textwidth]{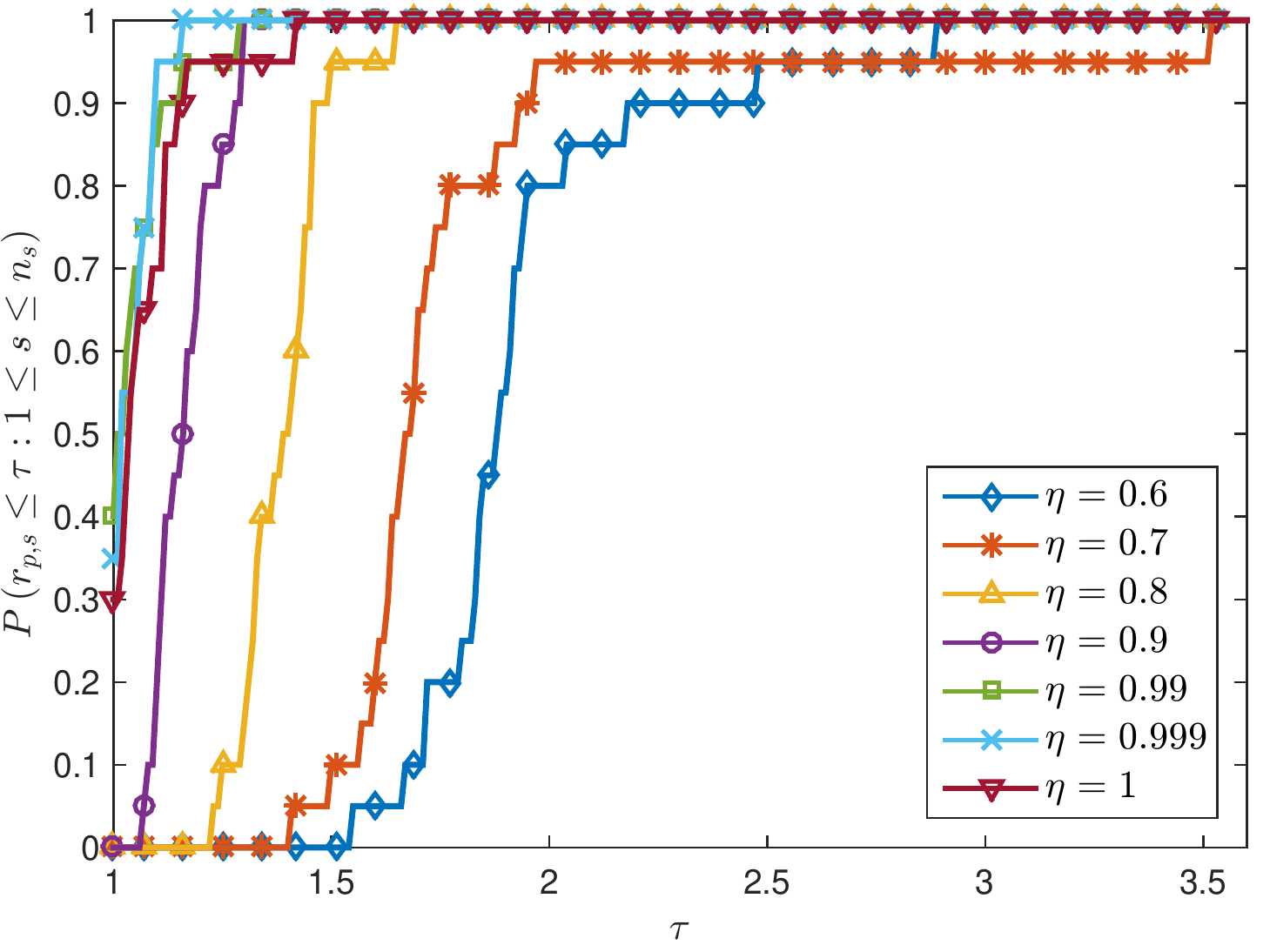}
\caption{Performance profile for the number of iterations of \ref{ALGORITHM-1:llm} with parameters~\eqref{eq:xi} and
$\eta\in\{0.6,0.7,0.8,0.9,\allowbreak 0.99,0.999,1\}$. The best performance is attained by $\eta=0.999$. \label{fig:tuning_eta}}
\end{figure}

We now set $\eta=0.999$ and compare \ref{ALGORITHM-1:llm} with parameters~\eqref{eq:xi} with the following Levenberg--Marquardt methods:
\renewcommand{\labelitemi}{$\bullet$}
\begin{itemize}
\item LM-YF: with $\mu_{k}=\|h(x_{k})\|^{2}$, given by Yamashita and Fukushima~\cite{alefeld_rate_2001};
\item LM-FY: with $\mu_{k}=\|h(x_{k})\|$, given by Fan and Yuan~\cite{fan_quadratic_2005};
\item LM-F: with $\mu_{k}=\|\nabla h(x_{k})h(x_{k})\|$, given by Fischer~\cite{fischer2002local}.
\end{itemize}
It is clear that all of these
three methods are special cases of \ref{ALGORITHM-1:llm} by selecting suitable parameters
$\xi_{k}$, $\omega_k$, and $\eta$. The results of our experiments are summarised
in Table~\ref{t.llm} and Figure~\ref{fig:per_pro_local}.
In Figures~\ref{fig:AnotherFig4a} and~\ref{fig:AnotherFig4b}, we see that \ref{ALGORITHM-1:llm} is clearly always the winner, both for the
number of iterations and the running time.
Moreover, LM-F outperforms both LM-YF and LM-FY. In fact, LM-FY was not able to solve any of the considered problems within the 10,000 iterations.

\begin{figure}[ht!]
\centering%
\subfigure[Number of iterations $N_i$]{\label{fig:AnotherFig4a}{\includegraphics[width=.48\textwidth]{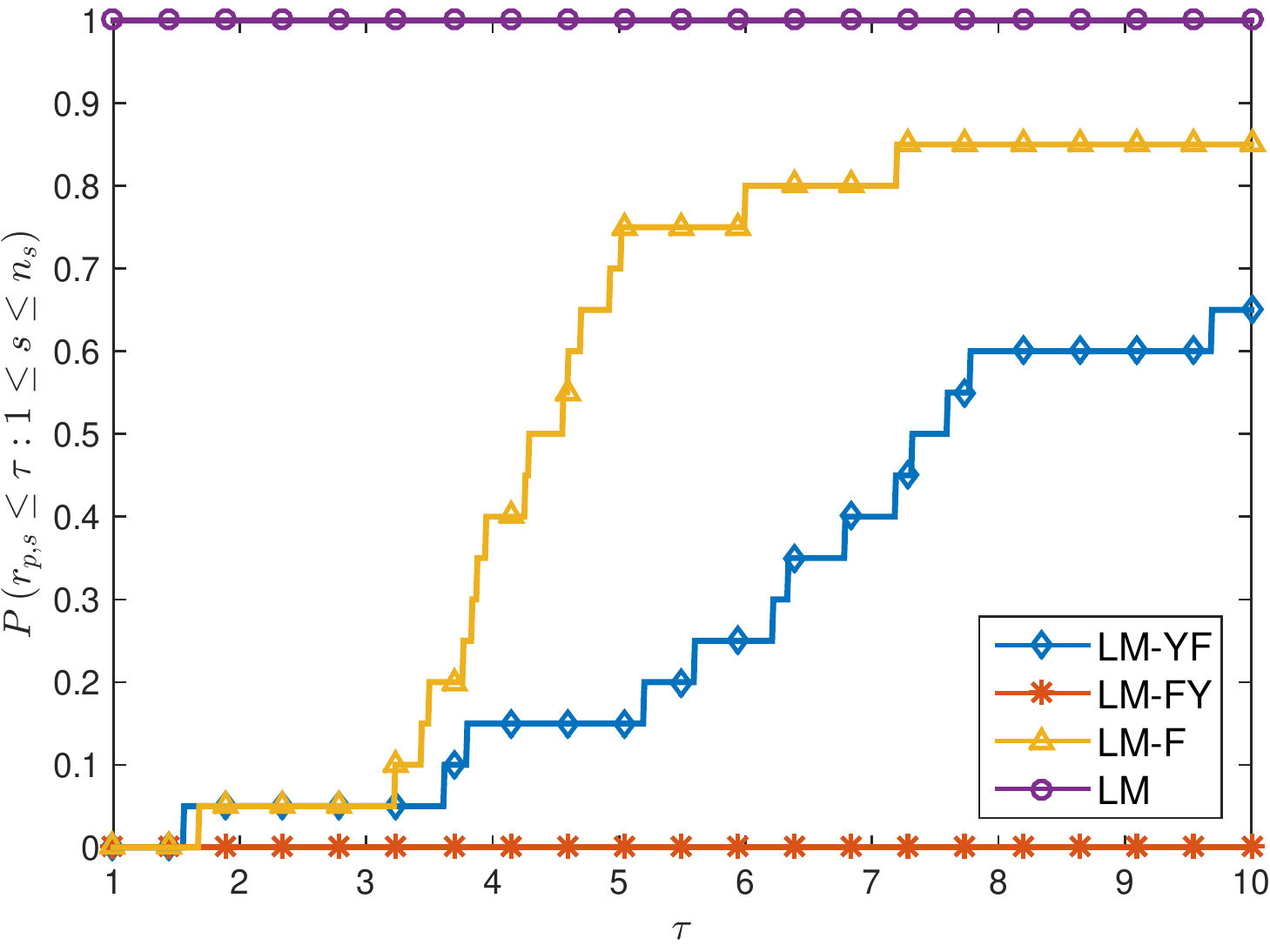}}}\quad
\subfigure[Running time $T$]{{\includegraphics[width=.48\textwidth]{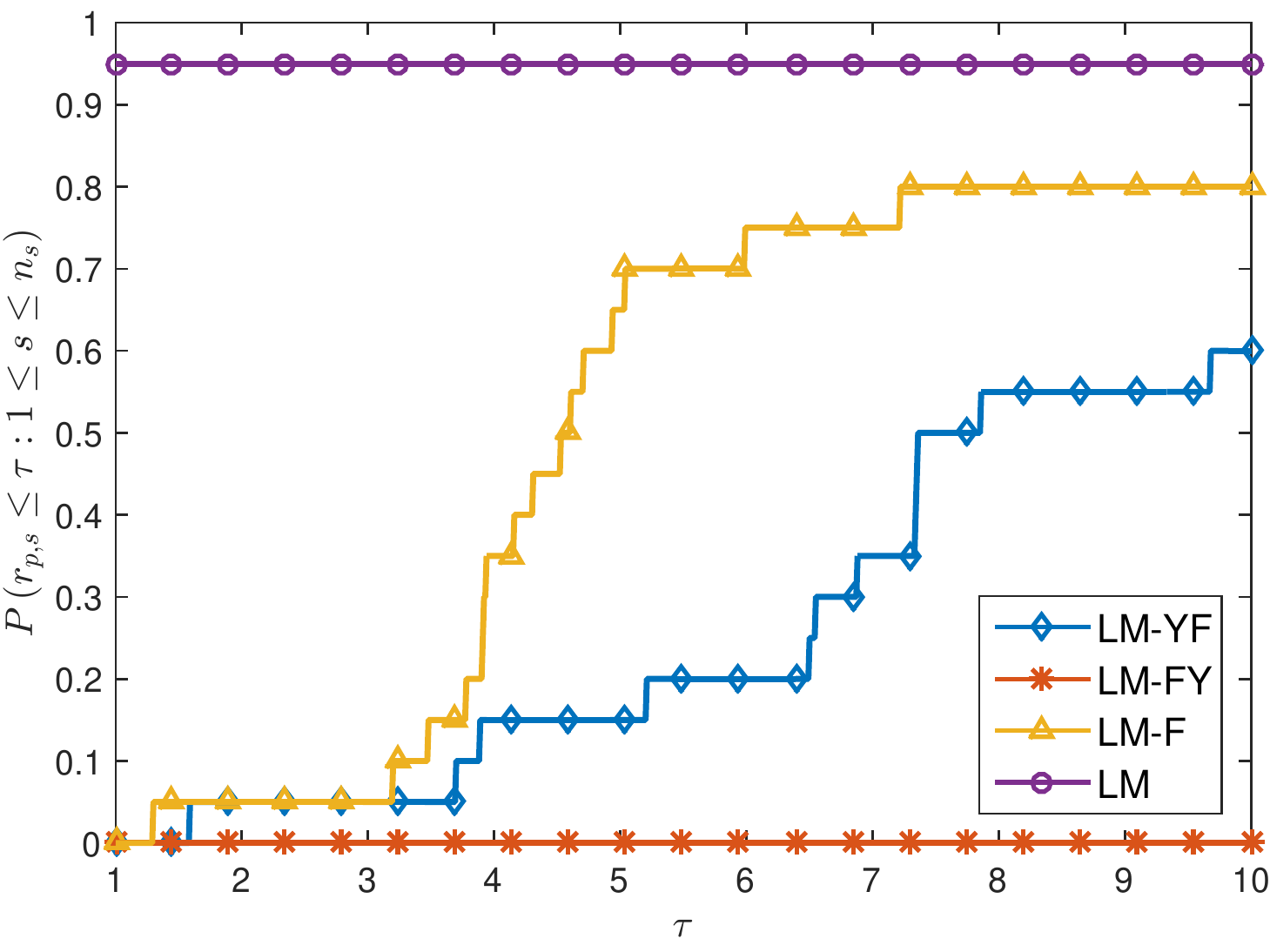}}\label{fig:AnotherFig4b}}%

\caption{Performance profiles for the number of iterations ($N_{i}$)
and the running time ($T$) of LM-YF, LM-FY, LM-F, and \ref{ALGORITHM-1:llm} with parameters~\eqref{eq:xi} and $\eta=0.999$ on a
set of 20 biological models for the mapping~\eqref{eq:steadyStateEquation}. \ref{ALGORITHM-1:llm} clearly outperforms the other methods.
\label{fig:per_pro_local}}
\end{figure}

In order to see the evolution of the merit function, we illustrate its value with respect to the number of iterations in Figure~\ref{fig:fan_vs_iter} for the mapping~\eqref{eq:steadyStateEquation} with the biological models  iAF692 and iNJ661.
We limit the maximum number of iterations to 1,000. Clearly, \ref{ALGORITHM-1:llm} attains the best results, followed by LM-F. Both methods seem to be more suited to biological problems than LM-YF and LM-FY. We also show in Figure~\ref{fig:fan_vs_iter} the evolution of the value of the step size $\|d_k\|$. Both \ref{ALGORITHM-1:llm} and LM-F show a rippling behaviour, while the value of $\|d_k\|$ is nearly constant along the 1,000 iterations for LM-YF and LM-FY. Probably, this rippling behaviour is letting the first two methods escape from a flat valley of the merit function, while the two last methods get trapped there. Observe also that, by Lemma~\ref{Svaiter_lemma}, one has that $\|d_k\|\leq\frac{1}{2}$ for LM-YF and $\|d_k\|\leq\frac{1}{2}\|h(x_k)\|^\frac{1}{2}$ for LM-FY, while this upper bound can be larger for both \ref{ALGORITHM-1:llm} and LM-F.

\begin{figure}[ht!]
\centering%
\subfigure[Merit function of iAF692]{\label{fig:AnotherFig}{\includegraphics[width=.49\textwidth]{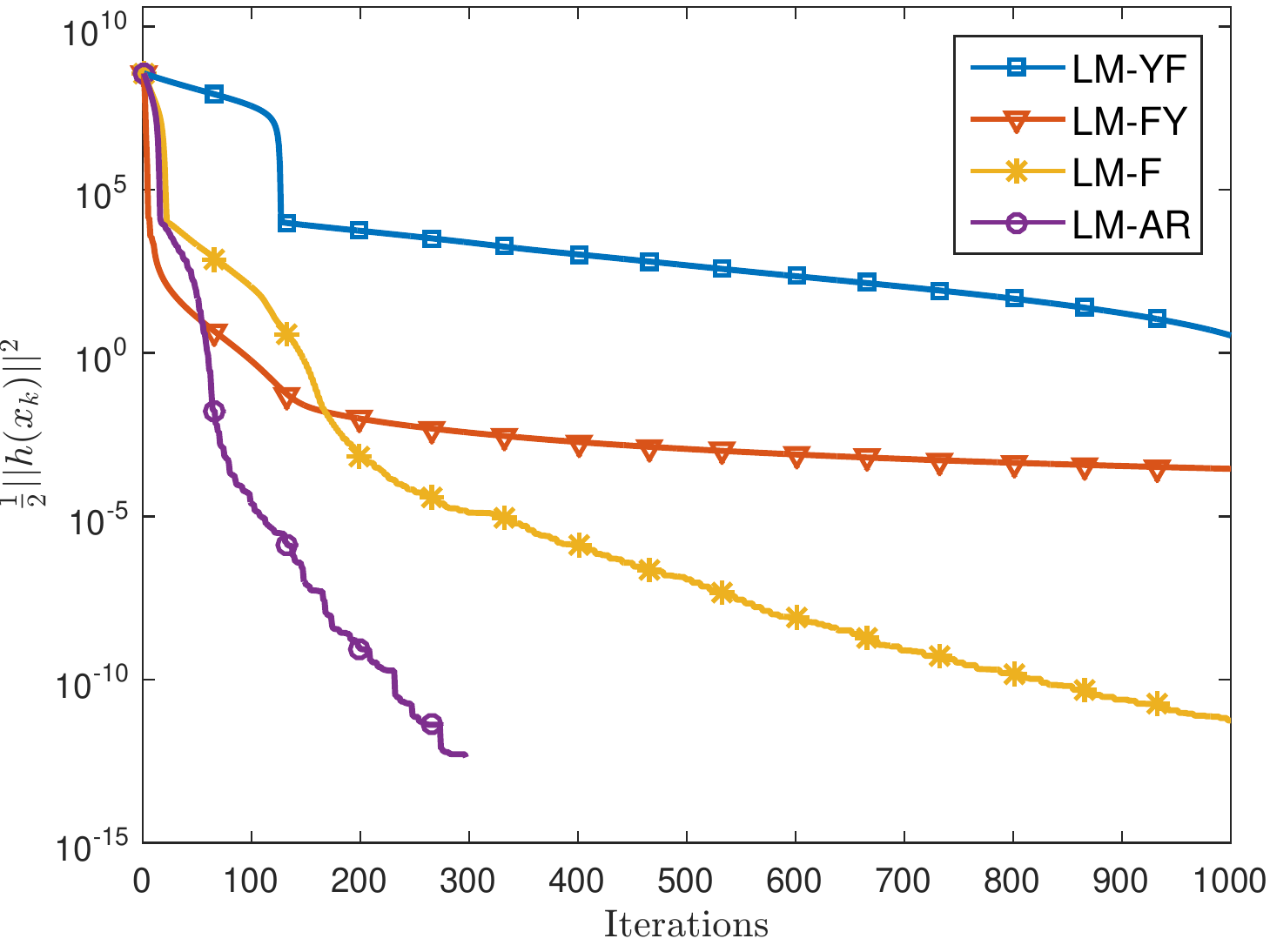}}}\hspace{4pt}
\subfigure[Merit function of iNJ661]{{\includegraphics[width=.49\textwidth]{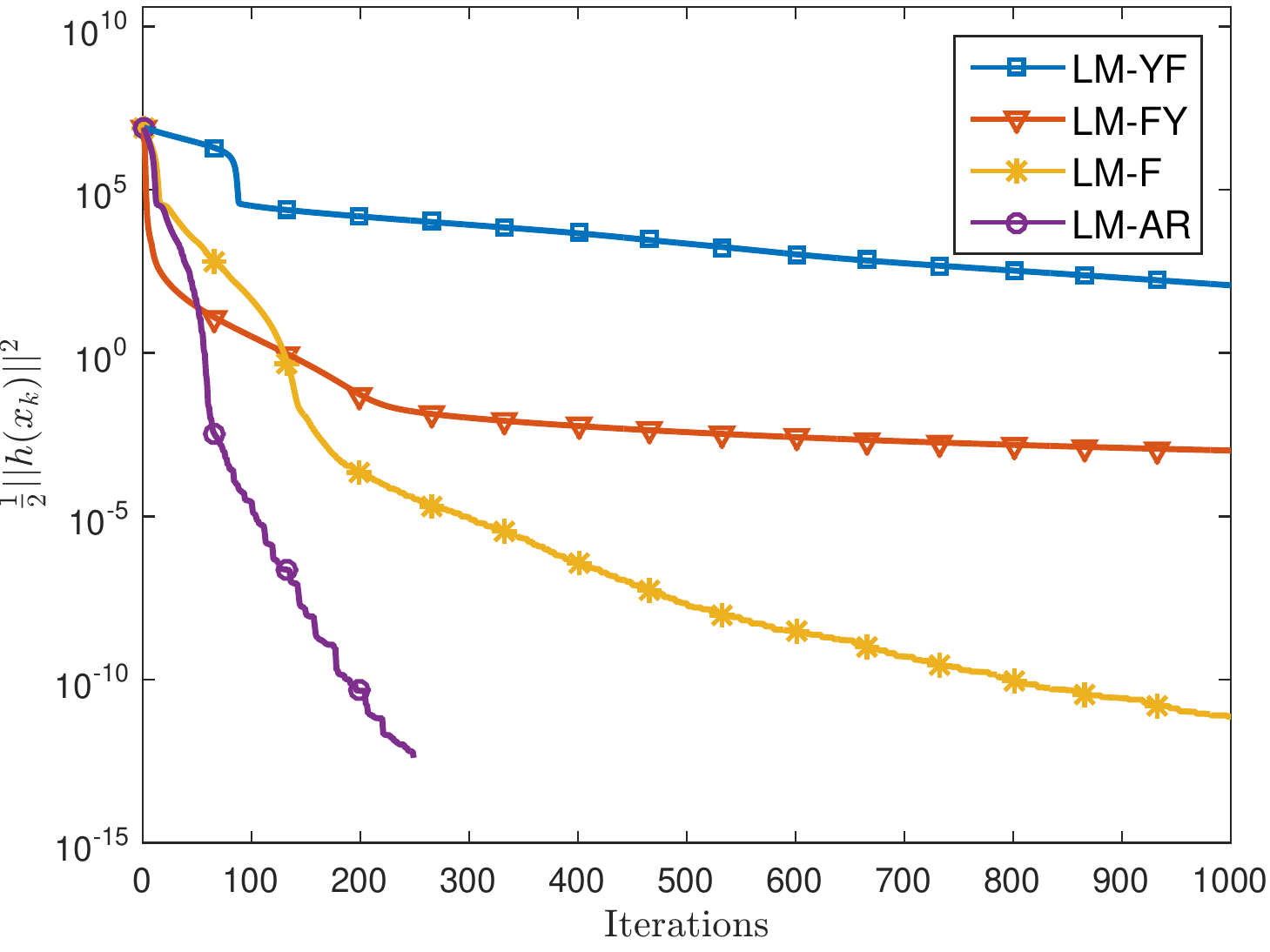}}\label{fig:AnotherFig2}}
\subfigure[Step sizes for iAF692]{\label{fig:AnotherFig3}{\includegraphics[width=.49\textwidth]{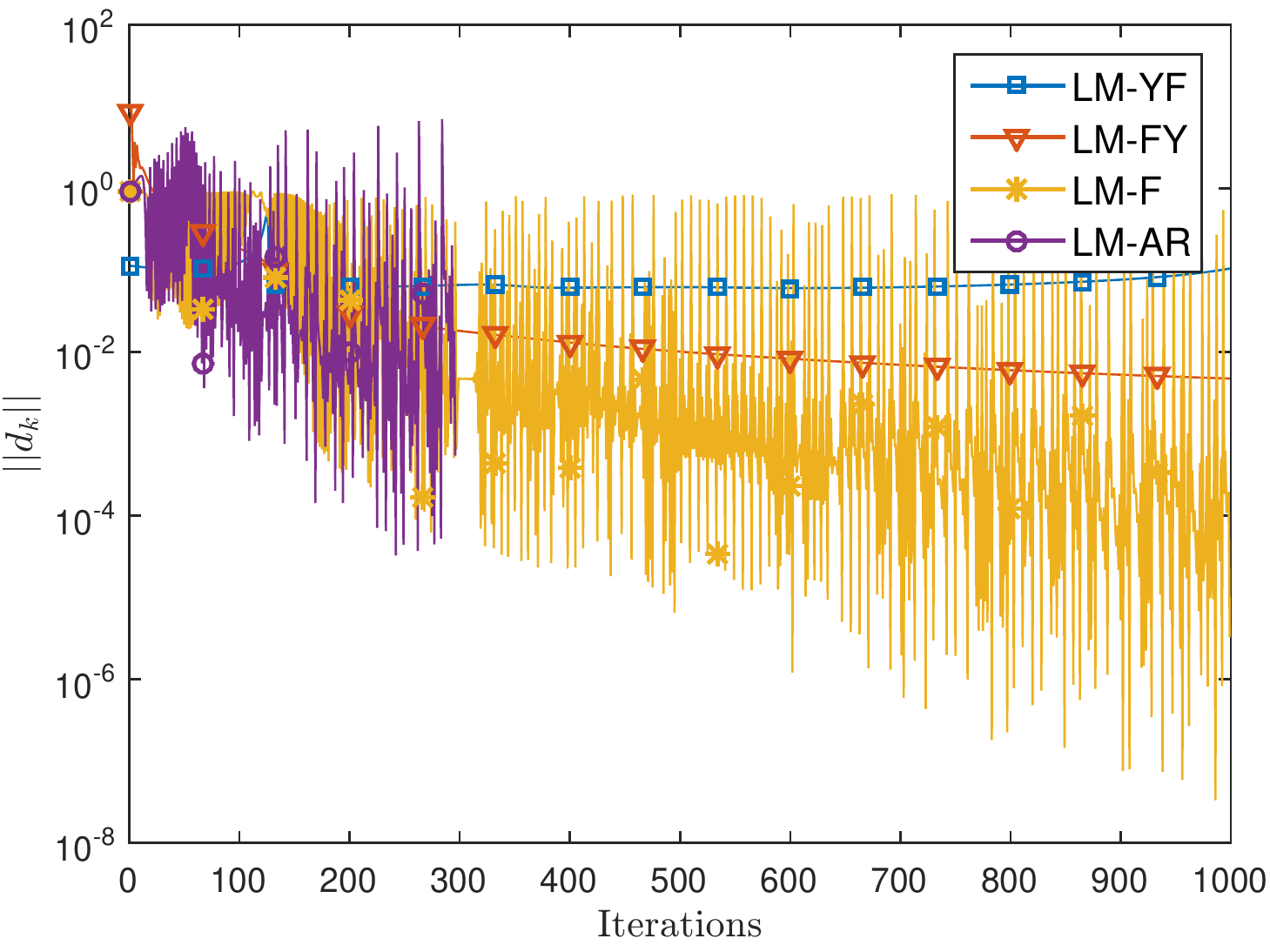}}}\hspace{4pt}
\subfigure[Step sizes for iNJ661]{{\includegraphics[width=.49\textwidth]{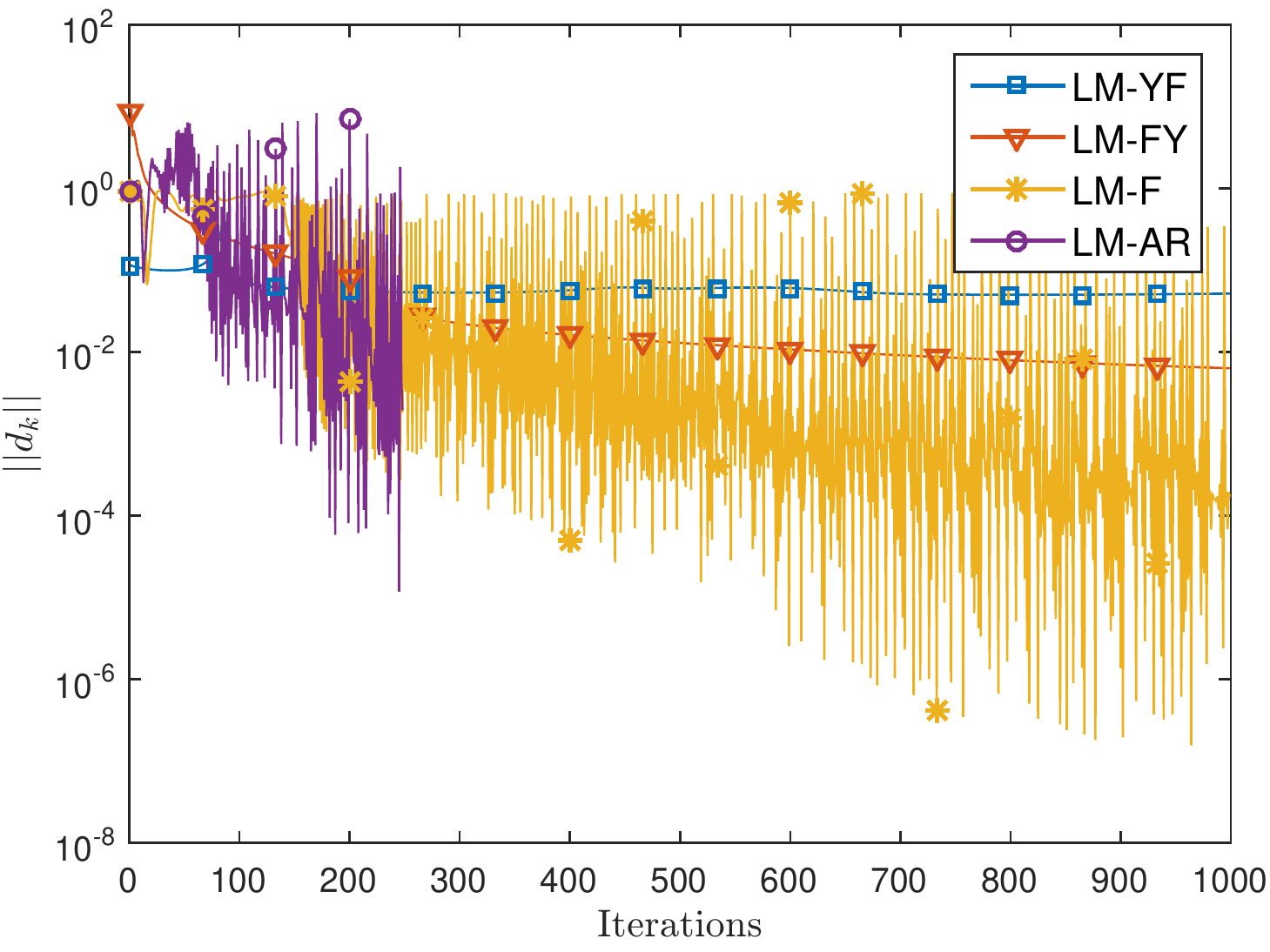}}\label{fig:AnotherFig4}}
\vspace{-2mm}
\caption{Value of the merit function and step size with respect to the number of iterations for the methods LM-YF, LM-FY, LM-F, and \ref{ALGORITHM-1:llm} with parameters~\eqref{eq:xi} and $\eta=0.999$, when applied to the mapping~\eqref{eq:steadyStateEquation} defined by the biological
models iAF692 and iNJ661. It clearly shows that~\ref{ALGORITHM-1:llm} outperforms the other methods. \label{fig:fan_vs_iter}}
\end{figure}

In our last experiment, we find $10$ solutions of the nonlinear system~\eqref{eq:nonequa}  with~\ref{ALGORITHM-1:llm} using $10$~random starting points $x_0\in{\left]-\frac{1}{2},\frac{1}{2}\right[^m}$ for each of the $20$ biological models and compute the rank of $\nabla h$ at each of these solutions. The results are shown in Figure~\ref{fig:rank}, where we plot the rank deficiency of $\nabla h$ at each of the solutions. For all the models, except for the Ecoli\_core, we observe that $\nabla h$ at the solutions found is far from being full rank. For the Ecoli\_core, although $\nabla h$ had full rank at every solution found, the smallest eigenvalue at these solutions had a value around $10^{-9}$, making also this problem ill-conditioned. This explains the difficulties that most of the algorithms had for solving the nonlinear system~\eqref{eq:nonequa} with $h$ defined by~\eqref{eq:steadyStateEquation}. Therefore, since we are dealing with a difficult problem, it is more meritorious the successfulness of \ref{ALGORITHM-1:llm} with parameters~\eqref{eq:xi} for finding a solution of each of the $20$ models in less than 400 iterations (in less than one minute), as shown in Table~\ref{t.llm}.
\begin{figure}[ht!]
\centering%
\includegraphics[width=.7\textwidth]{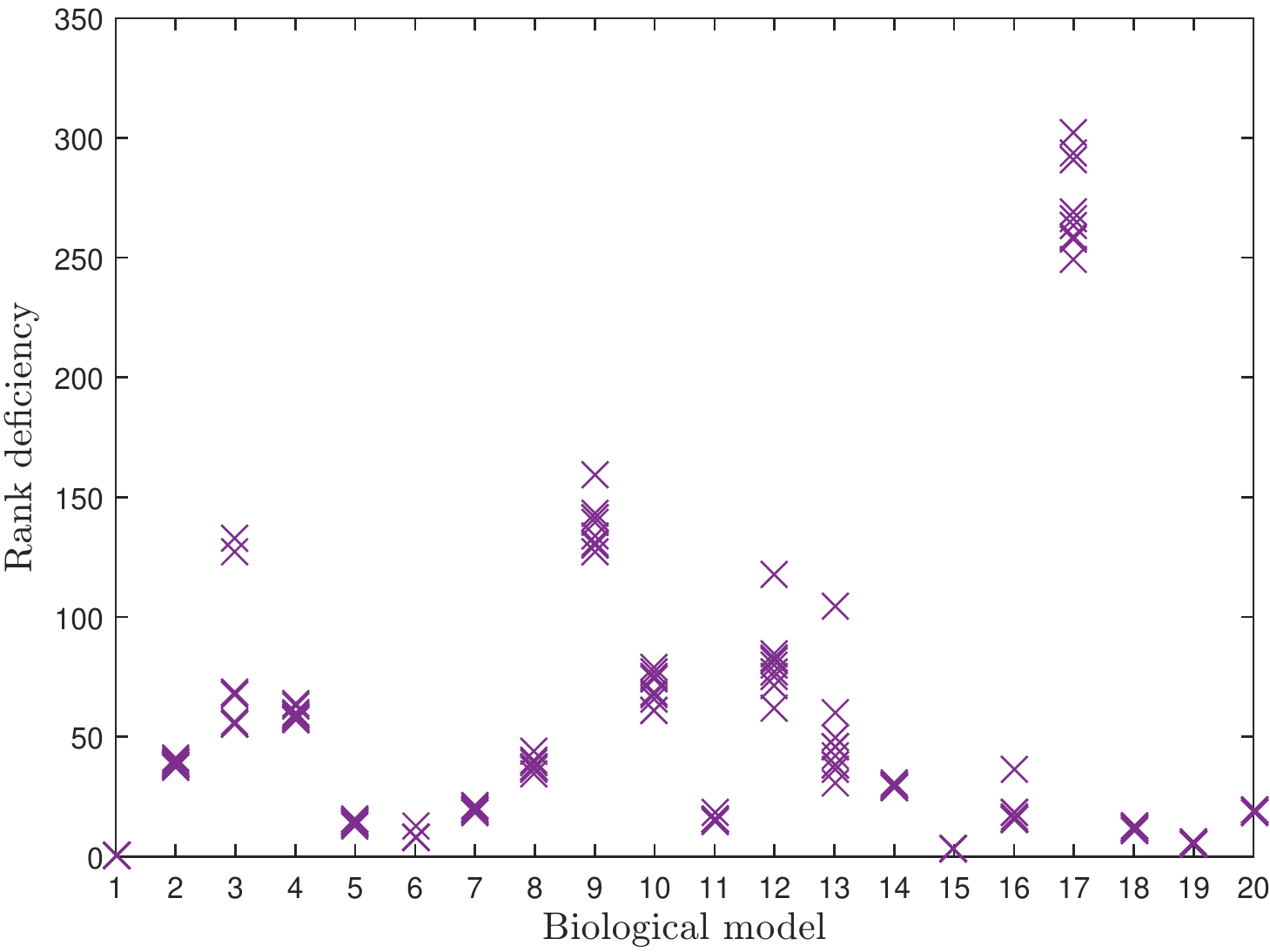}
\caption{Plot of the difference between $m$ and the rank of $\nabla h$ at $10$ solutions found with \ref{ALGORITHM-1:llm} for each of the $20$ biological models considered.  The models are represented in the $x$-axis, using the same order than in Tables~\ref{t.tuning_eta} and~\ref{t.llm}.\label{fig:rank}}
\end{figure}

\section{Conclusion and further research}\label{sec:conclusion}
We have presented an adaptive Levenberg--Marquardt method for solving
systems of nonlinear equations with possible non-isolated solutions. We have analysed its local convergence under H\"{o}lder metric subregularity of the underlying function and H\"older continuity of its gradient. We have further analysed the local convergence under the additional assumption that the \L{}ojasiewicz gradient inequality holds. These properties hold in many applied problems, as they are satisfied by any real analytic function. One of these applications is computing a solution to a system of nonlinear equations arising in biochemical reaction networks, a problem which is usually ill-conditioned. We showed that such systems satisfy both the H\"{o}lder metric subregularity and the \L{}ojasiewicz gradient inequality assumptions. In our numerical experiments, we clearly obtained a superior performance of our regularisation parameter, compared to existing Levenberg--Marquardt methods, for 20 different biological networks.

Several extensions to the present study are possible, the most important of which would be
to develop a globally convergent version of the proposed Levenberg--Marquardt method. One approach, which is currently being investigated, would be to combine the scheme with an Armijo-type line search and a trust-region technique. This will be reported in a separate article~\cite{AFV19}. It would also be interesting to analyse a regularisation parameter where the value of $\eta$ is updated at each iteration. The analysis of the convergence with such a parameter would be much more involved, so we leave this for future work.

\section*{Acknowledgements}
We would like to thank Mikhail Solodov for suggesting the use of Levenberg--Marquardt methods for solving the system of nonlinear equations arising in biochemical reaction networks. Thanks also go to Michael Saunders for his useful comments on the first version of this manuscript. We are grateful to two anonymous reviewers for their constructive comments, which helped us improving the paper.


\section*{Appendix} See Tables~\ref{t.tuning_eta} and~\ref{t.llm} for the summary results of the comparisons.

\begin{landscape}
\begin{table}[htbp]
\caption{Summary of the results of tuning the parameter $\eta$ for \ref{ALGORITHM-1:llm} with parameters~\eqref{eq:xi} and
$\eta\in\{0.6, 0.7, 0.8, 0.9, 0.99, 0.999, 1\}$ to solve~\eqref{eq:steadyStateEquation}
in 20~biological models. For each model, the lowest number of iterations ($N_i$) and the lowest running time ($T$) are displayed in bold.}
\label{t.tuning_eta}
\begin{center}\scriptsize
\renewcommand{\arraystretch}{1.3}
\begin{tabular}{|l|rrr|rrrrrrrrrrrrrr|}
\hline
\multirow{2}{*}{Model} &\multirow{2}{*}{$m$} &\multirow{2}{*}{$n$} &\multirow{2}{*}{$r$} &
\multicolumn{2}{c}{$\eta=0.6$} &
\multicolumn{2}{c}{$\eta=0.7$} &
\multicolumn{2}{c}{$\eta=0.8$} &
\multicolumn{2}{c}{$\eta=0.9$} &
\multicolumn{2}{c}{$\eta=0.99$} &
\multicolumn{2}{c}{$\eta=0.999$} &
\multicolumn{2}{c|}{$\eta=1$} \\
 & &    & & $N_i$ & $T$ & $N_i$ & $T$ & $N_i$ & $T$ & $N_i$ & $T$ & $N_i$ & $T$ & $N_i$ & $T$ & $N_i$ & $T$\\
\hline
1. Ecoli\_core &72&73&61&210&0.10&191&0.08&179&0.07&159&0.06&\bf 136&0.06&138&\bf 0.04&139&0.09\\
2. iAF692 &462&493&430&492&6.89&421&5.70&328&4.54&291&3.98&274&3.72&256&3.44&\bf 253&\bf 3.42\\
3. iAF1260 &1520&1931&1456&473&98.75&410&83.41&357&74.05&334&68.65&\bf 257&\bf 53.69&271&57.05&268&54.88\\
4. iBsu1103 &993&1167&956&421&34.34&356&28.46&313&25.14&254&20.34&232&18.38&\bf 218&\bf 17.13&226&18.04\\
5. iCB925 &415&558&386&467&6.41&477&6.47&332&4.55&296&4.30&318&4.39&\bf 248&\bf 3.33&350&4.71\\
6. iIT341 &424&428&392&388&4.33&333&3.48&253&2.68&226&2.29&\bf 207&\bf 2.13&226&2.34&212&2.19\\
7. iJN678 &641&669&589&362&9.63&356&8.99&307&7.76&258&6.72&220&5.76&231&5.89&\bf 218&\bf 5.60\\
8. iJN746 &727&795&700&470&17.05&376&13.29&301&10.66&255&8.93&256&9.26&\bf 231&\bf 8.17&251&9.10\\
9. iJO1366 &1654&2102&1582&417&107.37&372&95.21&314&79.85&273&70.62&225&57.36&\bf 219&\bf 55.65&244&62.37\\
10. iJR904 &597&757&564&441&11.99&420&11.15&346&8.80&279&7.26&\bf 245&\bf 6.34&249&6.39&253&6.98\\
11. iMB745 &525&598&490&363&7.13&346&6.48&298&5.55&211&3.85&204&3.74&218&4.02&\bf 199&\bf 3.66\\
12. iNJ661 &651&764&604&549&16.21&436&12.68&366&11.01&283&8.30&\bf 222&\bf 6.45&257&7.53&254&7.57\\
13. iRsp1095 &966&1042&921&655&54.48&1057&92.98&374&28.20&333&25.03&\bf 301&\bf 21.89&301&22.63&336&25.52\\
14. iSB619 &462&508&435&373&5.13&344&4.60&295&3.95&243&3.24&\bf 203&\bf 2.73&221&2.97&215&2.85\\
15. iTH366 &583&606&529&349&7.27&338&6.91&279&5.67&219&4.46&212&4.33&\bf 204&\bf 4.16&207&4.21\\
16. iTZ479\_v2 &435&476&415&375&4.54&344&4.00&297&3.45&228&2.68&\bf 204&\bf 2.39&214&2.48&227&2.63\\
17. iYL1228 &1350&1695&1280&846&136.52&550&87.18&387&60.96&326&51.24&294&46.28&318&49.99&\bf 293&\bf 46.18\\
18. L\_lactis\_MG1363 &483&491&429&463&6.54&427&5.89&351&4.80&314&4.36&279&3.85&\bf 242&\bf 3.35&282&3.87\\
19. Sc\_thermophilis\_rBioNet &348&365&320&413&2.85&377&2.57&344&2.31&291&1.97&252&1.66&256&1.69&\bf 241&\bf 1.63\\
20. T\_Maritima &434&470&414&407&4.83&318&3.72&281&3.26&233&2.70&231&2.65&215&2.49&\bf 212&\bf 2.44\\
\hline
Average &  &  &
&447&27.12&412&24.16&315&17.36&265&15.05&239&12.85&237&13.04&244&13.4\\
\hline
\end{tabular}
\end{center}
\end{table}
\end{landscape}

\begin{landscape}
\begin{table}[htbp]
\caption{Summary of the results of LM-YF, LM-FY, LM-F, and \ref{ALGORITHM-1:llm} with parameters~\eqref{eq:xi} and $\eta=0.999$ for solving~\eqref{eq:steadyStateEquation} in 20 biological models. For each model, the lowest number of iterations ($N_i$) and the lowest running time ($T$) are displayed in bold. On the bottom, we show the average among the successfully solved instances.}
\label{t.llm}
\begin{center}\normalsize
\renewcommand{\arraystretch}{1.3}
\begin{tabular}{|l|rrr|rrrrrrrr|}
\hline
\multirow{2}{*}{Model} & \multirow{2}{*}{$m$} & \multirow{2}{*}{$n$} & \multirow{2}{*}{$r$} &
\multicolumn{2}{c}{LM-YF} &
\multicolumn{2}{c}{LM-FY} &
\multicolumn{2}{c}{LM-F} &
\multicolumn{2}{c|}{\ref{ALGORITHM-1:llm}} \\
& & & & $N_i$ & $T$ & $N_i$ & $T$ & $N_i$ & $T$ & $N_i$ & $T$\\
\hline
1. Ecoli\_core &72&73&61&238&0.10&10000&4.06&257&0.08&\bf 153&\bf 0.06\\
2. iAF692 &462&493&430&1685&23.43&10000&135.63&1358&18.15&\bf 271&\bf 3.61\\
3. iAF1260 &1520&1931&1456&8233&1726.92&10000&2066.38&2036&413.04&\bf 283&\bf 57.27\\
4. iBsu1103 &993&1167&956&2396&187.24&10000&780.72&761&59.39&\bf 193&\bf 15.09\\
5. iCB925 &415&558&386&1005&14.15&10000&131.01&10000&131.28&\bf 278&\bf 3.82\\
6. iIT341 &424&428&392&1407&14.77&10000&103.09&944&9.71&\bf 222&\bf 2.26\\
7. iJN678 &641&669&589&2218&56.33&10000&253.58&1043&26.34&\bf 229&\bf 5.82\\
8. iJN746 &727&795&700&3107&108.72&10000&349.43&1069&37.18&\bf 217&\bf 7.53\\
9. iJO1366 &1654&2102&1582&7716&1946.48&10000&2524.01&1066&268.38&\bf 232&\bf 58.33\\
10. iJR904 &597&757&564&2789&72.26&10000&258.50&1231&31.74&\bf 262&\bf 6.75\\
11. iMB745 &525&598&490&790&14.60&10000&181.40&1247&22.50&\bf 208&\bf 3.76\\
12. iNJ661 &651&764&604&2635&76.62&10000&290.56&1357&39.45&\bf 360&\bf 10.44\\
13. iRsp1095 &966&1042&921&3832&266.87&10000&694.21&10000&696.93&\bf 235&\bf 16.38\\
14. iSB619 &462&508&435&1581&21.42&10000&133.94&814&10.82&\bf 233&\bf 3.12\\
15. iTH366 &583&606&529&1641&33.78&10000&205.34&817&16.79&\bf 211&\bf 4.30\\
16. iTZ479\_v2 &435&476&415&1148&13.57&10000&117.46&713&8.34&\bf 221&\bf 2.61\\
17. iYL1228 &1350&1695&1280&6070&956.92&10000&1565.75&10000&1567.47&\bf 272&\bf 42.94\\
18. L\_lactis\_MG1363 &483&491&429&2180&30.17&10000&137.99&1231&17.06&\bf 287&\bf 4.10\\
19. Sc\_thermophilis\_rBioNet &348&365&320&1753&11.85&10000&68.54&935&6.33&\bf 244&\bf 1.61\\
20. T\_Maritima &434&470&414&1169&14.33&10000&118.65&717&8.31&\bf 209&\bf 2.42\\
\hline
Average of successful&  &  &
& 2680&279.53&\multicolumn{1}{c}{---}&\multicolumn{1}{c}{---}&1035&58.45&\bf 241&\bf 12.61\\
\hline
\end{tabular}
\end{center}
\end{table}
\end{landscape}




\end{document}